\renewcommand{\mathbf}[1]{\bm{#1}}
\newtheorem{theorem}{Theorem}[section]
\newtheorem{observation}[theorem]{Observation}
\newtheorem{lemma}[theorem]{Lemma}
\newtheorem{proposition}[theorem]{Proposition}
\newtheorem{definition}[theorem]{Definition}
\newtheorem{problem}[theorem]{Problem}
\crefname{theorem}{Theorem}{Theorems}
\crefname{observation}{Observation}{Observations}
\crefname{claim}{Claim}{Claims}
\crefname{algorithm}{Algorithm}{Algorithms}
\crefname{condition}{Condition}{Conditions}
\crefname{example}{Example}{Examples}
\crefname{fact}{Fact}{Facts}
\crefname{lemma}{Lemma}{Lemmas}
\crefname{corollary}{Corollary}{Corollaries}
\crefname{consequence}{Consequence}{Consequences}
\crefname{definition}{Definition}{Definitions}
\crefname{remark}{Remark}{Remarks}
\crefname{problem}{Problem}{Problems}
\crefname{proposition}{Proposition}{Propositions}
\crefname{section}{Section}{Sections}
\crefname{appendix}{Supplementary Material Section}{Supplementary Material Sections}
\newcommand{\abs}[1]{\ensuremath{\left|#1\right|}}
\newcommand{\inb}[1]{\left\{#1\right\}}
\newcommand{\inp}[1]{\left(#1\right)}
\newcommand*{\defeq}{\mathrel{\rlap{
                     \raisebox{0.3ex}{$\m@th\cdot$}}
                     \raisebox{-0.3ex}{$\m@th\cdot$}}
                    =}
\newcommand*{\eqdef}{=
  \mathrel{\rlap{
      \raisebox{0.3ex}{$\m@th\cdot$}}
    \raisebox{-0.3ex}{$\m@th\cdot$}}
}
\renewcommand{\emptyset}[0]{\varnothing}
\newcommand{\poly}[1]{\ensuremath{\mathop{\mathrm{poly}}\inp{#1}}}
\newcommand{\undir}[2]{\ensuremath{{#1} - {#2}}}
\definecolor{blue}{HTML}{1F77B4}
\definecolor{orange}{HTML}{FF7F0E}
\definecolor{green}{HTML}{2CA02C}
\begin{document}
\title{Counting Markov Equivalent Directed Acyclic Graphs Consistent with Background Knowledge\footnote{Accepted to the Proceedings of the 39th Conference on Uncertainty in Artificial Intelligence (UAI-2023)}}

\author{}
 \author{Vidya Sagar Sharma\thanks{Tata Institute of Fundamental Research, Mumbai. Email: \texttt{vidya.sagar@tifr.res.in}.} 
}

\date{}
\maketitle              

\thispagestyle{empty}
\maketitle

\begin{abstract}
  We study the problem of counting the number of directed acyclic graphs in a
  Markov equivalence class (MEC) that are consistent with \emph{background
    knowledge} specified in the form of the directions of some additional edges
  in the MEC.  A polynomial-time algorithm for the special case of the problem, when no background knowledge constraints are specified, was given by Wienöbst,
  Bannach, and Liśkiewicz (AAAI 2021), who also showed that the general case is
  NP-hard (in fact, \#P-hard).  In this paper, we show that the problem is
  nevertheless tractable in an interesting class of instances, by establishing
  that it is ``fixed-parameter tractable'': we give an algorithm that runs in
  time $O(k! k^2 n^4)$, where $n$ is the number of nodes in the MEC and $k$ is the
  maximum number of nodes in any maximal clique of the MEC that participate in
  the specified background knowledge constraints.  In particular, our algorithm
  runs in polynomial time in the well-studied special case of MECs of bounded
  tree-width or bounded maximum clique size.
\end{abstract}

 \section{Introduction}
\label{sec:introduction}
A graphical model is a combinatorial tool for expressing dependencies between random variables. Both directed and undirected versions have been used in the literature for modeling different kinds of dependency structures.  We study graphical models represented by directed acyclic graphs (DAGs), which represent conditional independence relations and causal influences between random variables by directed edges~\citep{Pearl2009}. Such graphical models have been used extensively for modeling causal relationships across several fields, e.g., material science~\citep{ren_embedding_2020}, game theory~\citep{kearns2013graphical}, and biology~\citep{friedman2004inferring,finegold2011robust}.

It is well known that given access only to observational data, the causal DAG underlying a system can only be determined up to its ``Markov equivalence class'' (MEC)~\citep{verma1991equivalence,meek2013causal,chickering_causal_1995}.  Two DAGs are said to be in the same MEC if they model exactly the same set of conditional dependence relations between the underlying random variables. Distinguishing between two DAGs in the same MEC requires the use of interventional data~\citep{HB12}.

Finding the size of an MEC, therefore, becomes a question of key interest. In particular, the size of an MEC quantifies the uncertainty of the causal model given only observational data.  The problem, along with a proposed algorithm, was already mentioned by Meek~\citep[Section 4.1]{meek2013causal}, and has since then been the focus of a long line of work~\citep{madigan1996bayesian,he2015counting,he2016formulas,bernstein2017sampling,ghassami2019counting,talvitie2019counting,ganian2020efficient}, which culminated in a polynomial time algorithm for the problem by \citet{wienobst2020polynomial}.

\paragraph{Counting with background knowledge constraints} In applications, more
information about the directions of edges in the underlying DAG than that
encoded in the MEC may be available, for example, due to access to
domain-specific knowledge.  \citet{meek2013causal} referred to this as
\emph{background knowledge} and modeled it as a specification of the directions
of some of the edges of the underlying DAG. Thus, instead of finding the size of
the whole MEC, one becomes interested in counting those DAGs in the MEC that are
consistent with this specified background knowledge.  An algorithm for this
problem can also be used as an indicator of the efficacy of a particular
intervention in pinning down a DAG within an MEC, by measuring the ratio between
the size of the MEC and the number of those DAGs in the MEC that are consistent
with the extra background information yielded by the intervention.
\citet{wienobst2020polynomial} showed, however, that in general, this problem is
\#P-complete (i.e., as hard as counting satisfying assignments to a Boolean
formula). Our goal in this paper is to circumvent this hardness result when the
specified background knowledge has some special structure.

\subsection{Our Contributions}
We now formalize the above problem.  Our input is an MEC on $n$ nodes, and we
are given also the directions of another $s$ edges that are not directed in the
MEC: we refer to the set of these edges as the \emph{background knowledge},
denoted $\mathcal{K}$. Our goal is to count the number of DAGs in the input MEC
that are consistent with the background knowledge $\mathcal{K}$. The above
quoted result of \cite{wienobst2020polynomial} implies that (under the standard
P $\neq$ NP assumption) there \emph{cannot} be an algorithm for solving this
problem whose run-time is polynomial in both $n$ and $s$.

\paragraph{Main result}
The main conceptual contribution of this paper is to
define the following parameter which lets us identify important special
instances of the problem where we can circumvent the above hardness result.
Given the set $\mathcal{K}$ of background knowledge edges, we define the
\emph{max-clique-knowledge} $k$ of $\mathcal{K}$ to be the maximum number of
vertices in any clique in the input MEC that are part of a background knowledge
edge that lies completely inside that clique.  In particular, $k$ can be at most
twice $s$, but it can also be much smaller.  Our main result
(\cref{thm:main-result}) is an algorithm that counts the number of DAGs in the
MEC that are consistent with $\mathcal{K}$, and runs in time
$O(k! \cdot k^2 \cdot n^4)$.

\paragraph{Discussion and evaluation} In particular, the runtime of our
algorithm is polynomially bounded when the parameter $k$ above is bounded above
by a constant, \emph{even if} the actual size $s$ of the background knowledge is
very large.  For example, since $k$ is bounded above by the size of the largest
clique in the input MEC, it follows that our algorithm runs in polynomial time
in the well-studied special case (see, e.g., \citep{talvitie2019counting}) when
the input MEC has constant tree-width (and hence constant maximum-clique
size). We provide an empirical exploration of the run-time of our algorithm,
including of the phenomenon that it depends on $\mathcal{K}$ only through $k$
and not through $s$, in \cref{sec:experimental_evaluation}.

\subsection{Related Work}
It is well known that an MEC can be represented as a partially directed graph,
known as an \emph{essential graph}, with special graph theoretic
properties~\citep{verma1991equivalence,meek2013causal,chickering_causal_1995,andersson1997characterization}. Initial
approaches to the problem of computing the size of an MEC
by~\citet{meek2013causal} and \cite{madigan1996bayesian} built upon ideas
underlying this characterization. More recently, \cite{he2015counting} evaluated
a heuristic based on the partition of essential graphs into chordal
components. \citet{ghassami2019counting} gave an algorithm that runs in
polynomial time for constant degree graphs, but where the degree of this
polynomial \emph{grows} with the maximum degree of the graph.

Our algorithm can be seen as an example of a \emph{fixed parameter tractable}
(FPT) algorithm in the well-studied framework of parameterized complexity
theory~\citep{cygan_parameterized_2015}.  Parameterized complexity offers an
approach to attack computationally hard problems (e.g. those that are NP-hard or
\#P-hard) by separating the complexity of solving the problem into two pieces --
a part that depends purely on the size of the input, and a part that depends
only on a well-chosen ``parameter'' $\rho$ of the problem.  An FPT algorithm for
a computationally hard problem has a runtime that is bounded above by
$f(\rho)\poly{n}$, where the degree of the polynomial does \emph{not} depend
upon the parameter $\rho$, but where the function $f$ (which does not depend
upon the input size $n$) may potentially be exponentially growing. In our setting, the underlying parameter is the max-clique-knowledge of the background
knowledge $\mathcal{K}$.

Thus, the algorithm of \cite{ghassami2019counting} cited above is \emph{not} an
FPT algorithm.  However, \citet{talvitie2019counting} improved upon it by giving
an FPT algorithm for computing the size of an MEC
(without background knowledge): for an undirected essential graph on $n$ nodes
whose maximum clique is of size $c$, their algorithm runs in time
$\mathcal{O}(c!2^c c^2n)$.  Finally, \citet{wienobst2020polynomial} presented the first polynomial time
algorithm for computing the size of any MEC (again, without background
knowledge).  As discussed above, they also showed that counting Markov
equivalent DAGs consistent with specified background knowledge is, in general,
\#P-complete.

We are not aware of any progress towards circumventing this hardness result by
imposing specific properties on the specified background knowledge, and to the
best of our knowledge, this paper is the first to give a fixed parameter
tractable algorithm for the problem. Our algorithm is motivated by the
techniques developed by \citet{wienobst2020polynomial}.

 \section{Preliminaries}
\label{sec:preliminaries}
We mostly follow the terminology and notation used by
\cite{andersson1997characterization} and \cite{wienobst2020polynomial} for
notions such as \textbf{\emph{graph unions}, \emph{chain graphs},
  \emph{directed} and \emph{undirected} graphs, \emph{skeletons}},
\textbf{v-\emph{structures}}, \textbf{\emph{cliques, separators, chordal graphs, undirected connected chordal
    graphs}} (which we will typically denote using the abbreviation
  \textbf{UCCG}), and \textbf{\emph{clique trees}}.  For completeness, we
  provide detailed definitions in the
Supplementary Material.

\textbf{Notations. } A graph $G$ is a pair $(V, E)$, where $V$ is said to be the set of vertices of $G$, and $E\subseteq V\times V$ is said to be the set of edges of $G$. For $u,v \in V$, if $(u,v),(v,u) \in E$ then we say there is an undirected edge between $u$ and $v$, denoted as $u-v$. For $u,v \in V$, if $(u,v) \in E$, and $(v,u)\notin E$ then we say there is a directed edge from $u$ to $v$, denoted as $u\rightarrow v$. For a graph $G$, we denote $V_G$ as the subset of vertices of $G$, and $E_G$ as the set of vertices of $G$. A clique is a set of pairwise adjacent vertices. We denote the set of all maximal cliques of $G$ by $\Pi(G)$.
For a set $X$, we denote by $\# X$ (and sometimes by $|X|$), the size of
$X$.

\textbf{Markov equivalence classes.} A causal DAG encodes a set of conditional independence relations between random variables represented by its vertices. Two DAGs are said to belong to the same \emph{Markov equivalence class} (MEC) if both encode the same set of conditional independence relations. \citet{verma1991equivalence} showed that two DAGs are in the same MEC if, and only if, both have (i) the same skeleton and (ii) the same set of v-structures.  An MEC can be represented by the graph union of all DAGs in it. \citet{andersson1997characterization} show that a partially directed graph representing an MEC is a chain graph whose undirected connected components (i.e., undirected connected components formed after removing the directed edges) are chordal graphs. We refer to these undirected connected components as \emph{chordal components} of the MEC.  With a slight abuse of terminology, we equate the chain graph with chordal components which represents an MEC with the MEC itself, and refer to both as an ``MEC''. 

\textbf{AMO.} Given a partially directed graph $G$, an \emph{orientation} of $G$ is obtained by assigning a direction to each undirected edge of $G$.  Following \citet{wienobst2020polynomial}, we call an orientation of $G$ an \emph{acyclic moral orientation} (AMO) of $G$ if (i) it does not contain any directed cycles, and (ii) it has the same set of v-structures as $G$. For an MEC $G$, we denote the set of AMOs of $G$ by AMO($G$).

\textbf{PEO and LBFS orderings.} For an undirected graph, a linear ordering $\tau$ of its vertices is said to be a \emph{perfect elimination ordering} (PEO) of the graph if for each vertex $v$, the neighbors of $v$ that occur after $v$ form a clique. A graph is chordal if, and only if, it has a PEO \citep{fulkerson1965incidence}. \citet{rose1976algorithmic} gave a \emph{lexicographical breadth-first-search} (LBFS) algorithm to find a PEO of a chordal graph (see \cref{alg:AMO-Union-K}  below for a modified version of LBFS).  Any ordering of vertices that can be returned by the LBFS algorithm is said to be an \emph{LBFS ordering}.

\textbf{Representation of an AMO.}
\label{preliminaries:representation-of-an-AMO}
Given a linear ordering $\tau$ of the vertices of a graph $G$, an AMO of $G$ is said
to be \emph{represented} by $\tau$ if for every edge $u \rightarrow v$ in the
AMO, $u$ precedes $v$ in $\tau$.  Every LBFS ordering of a UCCG $G$ represents
a unique AMO of $G$, and every AMO of a UCCG $G$ is represented by an LBFS
ordering of $G$ (Corollary 1, and Lemma 2 of
\citet{wienobst2020polynomial}). For a maximal clique $C$ of $G$, we say that
\emph{$C$ represents an AMO $\alpha$ of $G$} if there exists an LBFS ordering that
starts with $C$ and represents $\alpha$. Similarly, for a permutation $\pi(C)$ of a
maximal clique $C$ of $G$, we say $\pi(C)$ represents $\alpha$ if
there exists an LBFS ordering that starts with $\pi(C)$ and represents $\alpha$. We denote by AMO$(G,\pi(C))$ and AMO$(G, C)$, the set of AMOs of $G$ that can be represented by $\pi(C)$ and $C$, respectively.

\textbf{Canonical representation of AMO.} For an AMO $\alpha$ of a UCCG $G$,
\cite{wienobst2020polynomial} define a unique clique that represents
$\alpha$. We denote the clique as $C_{\alpha}$, and say that $C_{\alpha}$
\emph{canonically} represents $\alpha$.  For the purposes of this paper, we only
need certain properties of the canonical representative $C_\alpha$, and these
are quoted in \cref{lem:identification-of-C-alpha}. However, for completeness,
we also give the definition of $C_{\alpha}$ in the Supplementary Material.

\begin{figure}[h!]
    \centering
    \begin{tikzpicture}
        \node[scale = 0.9](a){$a$};
        \node[scale = 0.9](b)[below left=0.5 and 0.5 of a] {$b$};
        \node[scale = 0.9](c)[below right=0.5 and 0.5 of a] {$c$};
        \node[scale = 0.9](d)[below =0.5 of c]{$d$};
        \node[scale = 0.9](e)[below left=0.5 and 0.5 of d] {$e$};
        \node[scale = 0.9](f)[below right=0.5 and 0.5 of d] {$f$};
        
        \draw[-](a)--(b);
        \draw[->](a)--(c);
        \draw[->](b)--(c);
        \draw[<-](c)--(d);
        \draw[-](d)--(e);
        \draw[-](d)--(f);
        \draw[-](e)--(f);

        \node[scale = 0.9](1)[right = 3.0 of a]{1};
        \node[scale = 0.9](2)[right=1.0 of 1]{2};
        \node[scale = 0.9](3)[below = 1.0 of 1]{3};
        \node[scale = 0.9](4)[right = 1.0 of 3]{4};
        \node[scale = 0.9](5)[below = 1.0 of 3]{5};
        \node[scale = 0.9](6)[right = 1.0 of 5]{6};
        \node[scale = 0.9](7)[below = 1.0 of 6]{7};
        \draw[-](1)--(2){};
        \draw[-](1)--(3){};
        \draw[-](1)--(4){};
        \draw[-](2)--(3){};
        \draw[-](2)--(4){};
        \draw[-](3)--(4){};
        \draw[-](3)--(5){};
        \draw[-](3)--(6){};
        \draw[-](4)--(5){};
        \draw[-](4)--(6){};
        \draw[-](5)--(6){};
        \draw[-](5)--(7){};
        \draw[-](6)--(7){};
        \node[scale = 0.9](mec1)[above =0.1 of a]{MEC 1};
        \node[scale = 0.9](mec2)[above right =0.1 and -0.1 of 1]{MEC 2};

        \node[scale =0.9](table-11)[below left=0.5 and 1.5 of 7]{Background Knowledge};
        \node[scale = 0.9](table-12)[right =0.0 of table-11]{Max-clique Knowledge};
        \node[scale = 0.9](eg-11)[below left= 0.1 and 3.5 of table-12]{Example 1};
        \node[scale = 0.9](bkeg-11)[right=0.0 of eg-11]{$\{a\rightarrow b$, $e\rightarrow d$, $f\rightarrow d\}$};
        \node[scale = 0.9](ckeg-11)[right=5.0 of eg-11]{3};
        \node[scale = 0.9](eg-12)[below=0.1 of eg-11]{Example 2};
        \node[scale = 0.9](bkeg-12)[right=0.0 of eg-12]{$\{a\rightarrow b$, $e\rightarrow d\}$};
        \node[scale = 0.9](ckeg-12)[right=5.0 of eg-12]{2};

\node[scale = 0.9](eg-21)[below = 0.1 of eg-12]{Example 3};
        \node[scale = 0.9](bkeg-21)[right=0.0 of eg-21]{$\{1\rightarrow 2$, $3\rightarrow 6\}$};
        \node[scale = 0.9](ckeg-21)[right=5.0 of eg-21]{2};

        \node[scale = 0.9](eg-22)[below = 0.1 of eg-21]{Example 4};
        \node[scale = 0.9](bkeg-22)[right=0.0 of eg-22]{$\{1\rightarrow 2$, $2\rightarrow 3$, $1\rightarrow 3$,};
        \node[scale = 0.9](bkeg-221)[below=0.1 of bkeg-22]{$3\rightarrow 6$, $4\rightarrow 6$, $6\rightarrow 7\}$};
        \node[scale = 0.9](ckeg-22)[right=5.0 of eg-22]{3};

    \end{tikzpicture}
    \caption{Background Knowledge and Max-clique Knowledge:  In MEC 1, there are 2 maximal cliques $\{a,b\}$ and $\{d,e,f\}$. In MEC 2, there are 3 maximal cliques $\{1,2,3,4\}$, $\{3,4,5,6\}$ and $\{5,6,7\}$. Examples 1 and 2 are for MEC 1, and examples 3 and 4 are for MEC 2. In example 1, for the maximal clique $\{d,e,f\}$, $d$ and $e$ are part of an edge $e\rightarrow d$, and $f$ is part of an edge $f\rightarrow d$, i.e., all the nodes of the clique is part of an edge of the background knowledge such that both endpoints of the edge are inside the clique. From the definition of clique knowledge, clique knowledge of the clique $\{d,e,f\}$ is 3. Similarly, clique knowledge of the maximal clique $\{a,b\}$ is 2. This shows the max-clique knowledge of the background knowledge in example 1 for MEC 1 is 3. Similarly, the max-clique knowledge of the background knowledge in example 2 for MEC 1 is 2, and the max-clique knowledge of the background knowledge in examples 3 and 4 for MEC 2 are 2 and 3, respectively.}
    \label{fig:background-knowledge-andclique-knowledge}
\end{figure} 
\textbf{Background Knowledge and Clique-knowledge.}
\label{def:background-knowledge-and-clique-knowledge}
For an MEC represented by a
partially directed graph $G$, \emph{background knowledge}\footnote{ Another way
  to represent background knowledge is by a maximally partially directed acyclic
  graph (MPDAG). We do not use MPDAGs in our paper because our run time
  depends only on the number of directed background knowledge edges that
  generate the MPDAG, and not on the (possibly much larger) number of directed
  edges in the MPDAG.} is specified as a set of directed edges
$\mathcal{K} \subseteq E_G$ \citep{meek2013causal}. A graph $G$ is said to be
consistent with $\mathcal{K}$ if, for any edge
$u \rightarrow v \in \mathcal{K}$, $v \rightarrow u \notin E_G$ (i.e., either $u-v\in E_G$ or $u\rightarrow v \in G$).  For a clique
$C$ of $G$, \emph{clique-knowledge} of $\mathcal{K}$ for $C$ is defined as the
number of vertices of $C$ that are part of a background knowledge edge both of whose endpoints are in $C.$
\emph{Max-clique-knowledge} of $\mathcal{K}$ for $G$ is the maximum value of
clique-knowledge over all cliques of $G$.

We denote the set of AMOs of $G$ consistent with background knowledge $\mathcal{K}$ by
AMO($G,{\mathcal{K}}$). Further, for any clique $C$ and any permutation $\pi(C)$
of the vertices of $C$, we denote by AMO$(G,\pi(C),\mathcal{K})$ and
AMO$(G,C,\mathcal{K})$, respectively, the set of $\mathcal{K}$-consistent AMOs
of $G$ that can be represented by $\pi(C)$ and
$C$.

 \section{Main Result}
\label{sec:main-result}
\citet{andersson1997characterization} show that a DAG is a member of an MEC $G$ if, and only if, it is an AMO of $G$. Thus, counting the number of DAGs in the MEC represented by $G$ is equivalent to counting AMOs of $G$.
We start with a formal description of the algorithmic problem we address in this paper.
\begin{problem}[\textbf{Counting AMOs with Background Knowledge}] \label{prob:countingAMO}
\textbf{INPUT:} (a) An MEC $G$ (in the form of a chain graph with chordal undirected components), (b) Background  Knowledge $\mathcal{K}\subseteq E_G$.

  \noindent \textbf{OUTPUT:} Number of DAGs in the MEC $G$ that are consistent  with $\mathcal{K}$, i.e., \#AMO($G, {\mathcal{K}}$).
\end{problem}

As discussed in the introduction, a polynomial time algorithm for the special
case of this problem where $\mathcal{K}$ is empty was given by
\citet{wienobst2020polynomial}, in the culmination of a long line of work on
that case. However, as discussed in the Introduction, a hardness result proved
by \citet{wienobst2020polynomial}
implies, under the standard $P \neq NP$ assumption, that there cannot
be an algorithm for \cref{prob:countingAMO} that runs in time polynomial in both
$n$ and $\abs{\mathcal{K}}$.

We, therefore, ask: what are the other interesting cases of the problem which admit an efficient solution?  We answer this question with the following result.

\begin{theorem}[Main result]\label{thm:main-result}
  There is an algorithm for \cref{prob:countingAMO} which outputs
  $\#AMO(G, \mathcal{K})$ in time $O(k!k^2n^4)$, where $k$ is the max-clique-knowledge value of $\mathcal{K}$ for $G$, and $n$ is the number of vertices in $G$.
\end{theorem}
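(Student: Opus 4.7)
The plan is to extend the polynomial-time recursive algorithm of \citet{wienobst2020polynomial} for unconstrained AMO counting, paying an extra $k!$ factor to enumerate orderings of the constrained vertices in each maximal clique. I would first reduce to the case of a single UCCG: if $G$ has chordal components $H_1, \dots, H_r$, then
\[
\#AMO(G, \mathcal{K}) \;=\; \prod_i \#AMO(H_i, \mathcal{K}_i),
\]
where $\mathcal{K}_i$ is obtained from $\mathcal{K}$ by restricting to edges inside $H_i$ and propagating any further forced orientations coming from the directed edges of $G$ and from BK edges that cross components. Each $H_i$ is an induced subgraph of $G$, so its maximal cliques are maximal cliques of $G$, and its max-clique-knowledge is at most $k$.

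For a UCCG $H$, I would use the canonical representative decomposition
\[
\#AMO(H, \mathcal{K}) \;=\; \sum_{C \in \Pi(H)} \#\{\alpha \in \mathrm{AMO}(H, \mathcal{K}) : C_\alpha = C\}.
\]
Fix a maximal clique $C$, and let $V_C \subseteq C$ be the set of vertices of $C$ touched by a BK edge both of whose endpoints lie in $C$; by definition $|V_C| \le k$. Only the relative ordering of $V_C$ within an LBFS ordering starting with $C$ can violate the internal BK, so I would enumerate the at most $k!$ orderings $\sigma$ of $V_C$ that are consistent with that internal BK. For each such $\sigma$, I would use \cref{lem:identification-of-C-alpha} together with a modified LBFS to count the AMOs of $H$ whose canonical representative is $C$, whose LBFS ordering extends $\sigma$, and which are consistent with the BK edges incident to $C$ from outside.

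That count is computed by a Wien\"obst-style recursion: after fixing $C$ and the ordering constraint on $V_C$, the graph obtained from $H$ by removing $C$ and its induced orientation decomposes into sub-UCCGs $H^{(1)}, \dots, H^{(m)}$, and the count factors as a product of recursive counts on the $H^{(i)}$ with suitably restricted background knowledge $\mathcal{K}^{(i)}$. Since every maximal clique of any $H^{(i)}$ is contained in some maximal clique of $H$, the max-clique-knowledge of $\mathcal{K}^{(i)}$ remains at most $k$, so the $k!$ budget per clique applies uniformly throughout the recursion. A careful accounting parallel to \citet{wienobst2020polynomial} then yields the claimed $O(k! k^2 n^4)$ bound, with the extra $k!$ coming from enumeration of orderings of $V_C$ and the $k^2$ from maintaining and checking these orderings locally. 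The main technical obstacle is the second step: establishing that it suffices to enumerate orderings of $V_C$ rather than of all of $C$, which requires showing that the vertices of $C \setminus V_C$ can be absorbed into the standard Wien\"obst decomposition without introducing new case analysis or double-counting with respect to canonical representatives.
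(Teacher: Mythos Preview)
Your high-level plan matches the paper: reduce to UCCGs, partition AMOs by their canonical clique $C_\alpha$, observe that the recursive product over the sub-UCCGs $H\in\mathcal{C}_G(C)$ is the same for every admissible permutation of $C$ (this is \cref{lem:formular-for-KT-consistent-permutaion}), and exploit $|V_C|\le k$ to pay only $k!$ for the clique-internal ordering. Where your proposal diverges, and where the gap lies, is in how the canonical-representative condition is handled. Item 2 of \cref{lem:identification-of-C-alpha} imposes a \emph{forbidden-prefix} constraint: the permutation of $C$ must not start with any set in $FP(C,\mathcal{T})$. These forbidden sets $R_1\subsetneq\cdots\subsetneq R_l$ are clique intersections along the clique-tree path and typically contain many vertices \emph{outside} $V_C$. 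Fixing an ordering $\sigma$ of $V_C$ therefore does \emph{not} determine whether the resulting full permutation avoids the forbidden prefixes; you would still need, for each $\sigma$, to count the full permutations of $C$ that extend $\sigma$ and avoid every $R_i$ as a prefix. You never say how to do this, and ``use \cref{lem:identification-of-C-alpha} together with a modified LBFS'' does not supply it.

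The paper does not enumerate $\sigma$ at the outer level at all. Instead it defines $\Phi(C,FP(C,\mathcal{T}),\mathcal{K}[C])$ as the number of $\mathcal{K}$-consistent permutations of $C$ avoiding all forbidden prefixes, and computes $\Phi$ by a separate recursion (\cref{lem:phi-computaion}) that peels off the $R_i$ one at a time using inclusion--exclusion along the chain $R_1\subsetneq\cdots\subsetneq R_l$. The $k!$ factor appears \emph{only} in the base case $\Phi(S,\emptyset,\mathcal{K})=\frac{|S|!}{|V_{\mathcal{K}}|!}\cdot\Psi(V_{\mathcal{K}},\mathcal{K})$, where $\Psi$ is evaluated by brute-force enumeration of the at most $k!$ orderings of $V_{\mathcal{K}}$; the FP part of the recursion costs only $O(|\Pi(G)|^2)$ and does not touch $k$. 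Your ``main technical obstacle'' (absorbing $C\setminus V_C$) is precisely this base-case identity, but you still need the FP recursion on top of it, and that is what your write-up is missing. A minor point: in the reduction step there are no undirected BK edges crossing chordal components (such edges are already directed in the MEC), so no ``propagation'' is needed there; the paper simply restricts $\mathcal{K}$ to each component.
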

 
In the formalism of parameterized algorithms \citep{cygan_parameterized_2015}, 
the above result says that the problem of counting AMOs that are consistent with background knowledge is \emph{fixed parameter tractable}, with the parameter being the max-clique-knowledge of the background knowledge.
This contrasts with the \#P-hardness result of \citet{wienobst2020polynomial} for the problem. 

The starting point of our algorithm is a standard reduction to the following special case of the problem.
\begin{problem}[\textbf{Counting Background Consistent AMOs in chordal
    graphs}] \label{prob:countingAMOofUCCG} \textbf{INPUT:} (a) An undirected
  connected chordal graph (UCCG) $G$, (b) Background Knowledge
  $\mathcal{K}\subseteq E_G$.

  \noindent \textbf{OUTPUT:} Number of AMOs of $G$ that are consistent with
  $\mathcal{K}$, i.e., \#AMO($G, {\mathcal{K}}$).
\end{problem}

\begin{proposition}
  \label{lem:Counting-AMO-reduction}
  Let $G$ be an MEC, and let $\mathcal{K} \subseteq E_G$ be background knowledge consistent with $G$. Then,
  \#AMO$(G,\mathcal{K}) = \prod_{\text{H}}{\#\text{AMO}(H,\mathcal{K}[H])},$
  where the product ranges over all undirected connected chordal components $H$
  of the MEC $G$, and
  $\mathcal{K}[H] = \{u\rightarrow v: u,v\in H \text{, and } u\rightarrow v\in
  \mathcal{K} \}$ is the corresponding background knowledge for $H$.
\end{proposition}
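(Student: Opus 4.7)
The plan is to establish a bijection
\[
\text{AMO}(G, \mathcal{K}) \;\longleftrightarrow\; \prod_H \text{AMO}(H, \mathcal{K}[H])
\]
that sends an AMO $\alpha$ of $G$ to the tuple $(\alpha|_H)_H$ of its restrictions to the undirected edges of each chordal component $H$. Since the undirected edges of the chain graph $G$ partition cleanly across its chordal components, every orientation of $G$ that leaves the existing directed edges of $G$ intact is uniquely determined by the orientations it assigns inside each $H$, so the map is automatically injective; the content of the proposition lies in showing both well-definedness (each $\alpha|_H$ is actually a $\mathcal{K}[H]$-consistent AMO of $H$) and surjectivity (every tuple of component-wise AMOs can be glued back to an AMO of $G$).

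For the forward direction, $\alpha|_H$ inherits acyclicity from $\alpha$, so the only thing to verify is that it introduces no v-structures relative to $H$ (which, being undirected, has none of its own). A hypothetical new v-structure $u \to v \leftarrow w$ inside $\alpha|_H$ with $u,w$ non-adjacent in $H$ would also be a v-structure of $\alpha$ with $u,w$ non-adjacent in $G$ (any $G$-edge between $u$ and $w$ would have to be undirected, hence live inside $H$), contradicting $\alpha \in \text{AMO}(G)$. Consistency of $\alpha|_H$ with $\mathcal{K}[H]$ is immediate from consistency of $\alpha$ with $\mathcal{K}$.

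The main obstacle is the surjectivity/gluing step: given $\alpha_H \in \text{AMO}(H, \mathcal{K}[H])$ for each chordal component $H$, we must verify that the orientation $\alpha$ of $G$ obtained by applying each $\alpha_H$ and retaining all of $G$'s original directed edges is itself an AMO of $G$ consistent with $\mathcal{K}$. For acyclicity, I would use the chain-graph structure: contracting each chordal component to a point turns $G$ into a DAG on the components, so any directed cycle in $\alpha$ must lie entirely inside some single $H$, where it is ruled out by acyclicity of $\alpha_H$. For preservation of v-structures, the only candidate new v-structures in $\alpha$ are triples $u\to v\leftarrow w$ in which at least one of the two parent edges was undirected in $G$; if both come from orientations inside the same component $H$, then $u,w$ are adjacent in $H$ by $\alpha_H$'s AMO property, and if one is directed in $G$ while the other is undirected, a standard CPDAG/Meek-rule property (if $u\to v$ is directed and $v - w$ is undirected in a CPDAG, then $u$ and $w$ must be adjacent in $G$, else Meek's rule R1 would already have oriented $v\to w$) again forces $u,w$ to be adjacent, so no new v-structure can arise.

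Finally, consistency of the glued $\alpha$ with $\mathcal{K}$ holds edge by edge: each $u\to v\in\mathcal{K}$ either has both endpoints in the same chordal component (in which case it lies in $\mathcal{K}[H]$ and is handled by $\alpha_H$'s consistency) or was already directed as $u\to v$ in $G$ (in which case $\alpha$ preserves it verbatim). Combining injectivity, well-definedness, and surjectivity yields the claimed bijection, and hence the product formula $\#\text{AMO}(G,\mathcal{K}) = \prod_H \#\text{AMO}(H, \mathcal{K}[H])$.
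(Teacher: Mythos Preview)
Your argument is correct and follows the same bijective strategy as the paper: map each $\mathcal{K}$-consistent AMO of $G$ to the tuple of its restrictions to the chordal components, and show this is a bijection onto $\prod_H \text{AMO}(H,\mathcal{K}[H])$. The difference is one of packaging. The paper's proof (via its auxiliary \cref{lem:AMO-reduction}) simply invokes the result of \citet{andersson1997characterization} that AMOs of an MEC are exactly the graphs obtained by gluing arbitrary AMOs of the chordal components onto the directed part $G_d$, and then observes that $\mathcal{K}$-consistency passes through this decomposition because $G_d$ is already $\mathcal{K}$-consistent and each $\alpha[V_H]$ is $\mathcal{K}[H]$-consistent. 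You instead re-derive the Andersson--Madigan--Perlman structural fact from first principles: you use the chain-graph quotient to rule out cross-component cycles, and you appeal directly to closure of the CPDAG under Meek's rule R1 to exclude new v-structures at the interface between a directed edge of $G$ and an oriented component edge. Your route is more self-contained and makes explicit exactly which properties of essential graphs are being used; the paper's route is shorter but leans on the cited characterization as a black box.
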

\cref{lem:Counting-AMO-reduction} reduces \cref{prob:countingAMO} to
\cref{prob:countingAMOofUCCG}.  The proof of \cref{lem:Counting-AMO-reduction}
follows directly from standard arguments (a similar reduction to chordal
components of an MEC has been used in many previous works), and is given in the
Supplementary Material. 

The rest of the paper is devoted to providing an efficient algorithm for \cref{prob:countingAMOofUCCG}. Our strategy builds upon the recursive framework developed by \citet{wienobst2020polynomial}.  In the first step, in \cref{sec:reduction-of-the-problem}, we modify the LBFS algorithm presented by \citet{wienobst2020polynomial} to make it background aware. This algorithm is used to generate smaller instances of the problem recursively.  Further, we modify the recursive algorithm of \citet{wienobst2020polynomial}  to use this new LBFS algorithm.  While building upon prior work, both steps require new ideas to take care of the background information.  In particular, it is a careful accounting of the background knowledge $\mathcal{K}$ that requires the $k!$ factor in the runtime, where $k$ is the max-clique-knowledge of the background knowledge.
In \cref{sec:time-complexity}, we analyze the time complexity of the resulting algorithm. Finally, \cref{sec:experimental_evaluation} shows our experimental results.  Due to space constraints, proofs of many lemmas and theorems are provided in the Supplementary Material. For most of the important results, we provide the proof sketch in the main text.

 \section{The Algorithm}
\label{sec:reduction-of-the-problem}
In this section, we give an FPT algorithm \cref{alg:Count-AMO} that solves \cref{prob:countingAMOofUCCG} using a parameter ``max-clique knowledge'' (\cref{sec:preliminaries}). \cref{alg:Count-AMO} is a background aware version of Algorithm 2 of \cite{wienobst2020polynomial}, which solved the special case of \cref{prob:countingAMOofUCCG} when there is no background knowledge. Similar to their algorithm, our algorithm uses a modified LBFS algorithm, \cref{alg:AMO-Union-K}, which is a background aware version of the modified LBFS algorithm presented by them. The simple \cref{alg:valid-permutaion-of-clique}, which counts background knowledge consistent permutations of a clique, is the new ingredient required in our final algorithm presented in \cref{alg:Count-AMO}.

We start with the partitioning of the AMOs. In \cref{sec:preliminaries}, we saw that each AMO of a UCCG is canonically represented by a unique maximal clique of the UCCG. We use this for partitioning the AMOs.

\begin{lemma}
\label{lem:partition-of-set-of-bkc-AMOs}
Let $G$ be a UCCG, and $\mathcal{K}$ be a given background knowledge. Then
$\#\text{AMO}(G,\mathcal{K})$ equals 
\begin{equation}
  \sum_{C\in \Pi(G)}{|\{\alpha: \alpha \in \text{AMO}(G,\mathcal{K}) \text{ and } C=C_{\alpha}\}|}. 
\end{equation}
\end{lemma}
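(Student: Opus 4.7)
The plan is to derive this directly from the existence and uniqueness of the canonical representative, which was recorded in the preliminaries. Specifically, it is stated there (quoting Wienöbst, Bannach, and Liśkiewicz) that for every AMO $\alpha$ of a UCCG $G$, there is a unique maximal clique $C_{\alpha} \in \Pi(G)$ that canonically represents $\alpha$. Thus the assignment $\alpha \mapsto C_{\alpha}$ is a well-defined function from $\text{AMO}(G)$ to $\Pi(G)$.

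First, I would observe that the level sets of this function, namely the collection
\[
\bigl\{\,S_C \defeq \{\alpha \in \text{AMO}(G) : C_{\alpha} = C\} \;:\; C \in \Pi(G)\,\bigr\},
\]
form a partition of $\text{AMO}(G)$ (the sets are pairwise disjoint by uniqueness of $C_\alpha$, and their union exhausts $\text{AMO}(G)$ because $C_\alpha$ is defined for every AMO). Next, I would intersect each block with $\text{AMO}(G, \mathcal{K})$: since intersection with a fixed set preserves disjointness and preserves the union, the collection $\{S_C \cap \text{AMO}(G, \mathcal{K}) : C \in \Pi(G)\}$ is a partition of $\text{AMO}(G, \mathcal{K})$. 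By definition,
\[
S_C \cap \text{AMO}(G,\mathcal{K}) = \{\alpha : \alpha \in \text{AMO}(G,\mathcal{K}) \text{ and } C = C_\alpha\}.
\]

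Finally, summing cardinalities over the blocks of a partition yields the cardinality of the whole, so
\[
\#\text{AMO}(G,\mathcal{K}) = \sum_{C \in \Pi(G)} \bigl|\{\alpha : \alpha \in \text{AMO}(G,\mathcal{K}) \text{ and } C = C_\alpha\}\bigr|,
\]
which is the claimed identity. There is no real obstacle here: the entire content of the lemma is repackaging the uniqueness of the canonical representative as an additive counting identity, and the only thing to check carefully is that the restriction from $\text{AMO}(G)$ to $\text{AMO}(G, \mathcal{K})$ preserves the partition structure, which is immediate since the map $\alpha \mapsto C_\alpha$ is defined on all AMOs regardless of whether they are consistent with $\mathcal{K}$.
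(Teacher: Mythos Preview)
Your proposal is correct and follows exactly the same idea as the paper's proof: both rely solely on the fact that every AMO $\alpha$ has a unique canonical maximal clique $C_\alpha$, which immediately yields a partition of $\text{AMO}(G,\mathcal{K})$ indexed by $\Pi(G)$. The paper states this in a single sentence, whereas you spell out the partition argument in detail, but there is no substantive difference in approach.
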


Here, $\{\alpha: \alpha \in \text{AMO}(G,\mathcal{K}) \text{ and } C=C_{\alpha}\}$ is the set of $\mathcal{K}$-consistent AMOs of $G$ that are \emph{canonically} represented by a maximal clique $C$ of $G$. To compute this, we first compute the union of AMOs of $G$ that are represented by $C$. 
The following definition concerns objects from the work of \cite{wienobst2020polynomial} that are relevant to construct the union graph.

\begin{definition}[{$G^{\pi(C)}, G^C, \mathcal{C}_G(\pi(C)$ and $\mathcal{C}_G(C)$}, \citet{wienobst2020polynomial}, Definition 1]
\label{def:definitions-of-wienbost-paper}
Let $G$ be a UCCG, $C$ a maximal clique of $G$, and $\pi(C)$ a permutation
of $C$.  Then, $G^C$ (respectively, $G^{\pi(C)}$) denotes the union of all the
AMOs of $G$ that can be represented by $C$ (respectively, by $\pi(C)$).
$\mathcal{C}_{G}(C)$ (respectively, $\mathcal{C}_{G}(\pi(C))$) denotes the
undirected connected components of $G^C[V_G\setminus C]$ (respectively,
$G^{\pi(C)}[V_G\setminus C]$).
\end{definition}

 The structure of $G^C$ provides us the set of directed edges in $G^C$, which helps us to check the $\mathcal{K}$-consistency of $G^C$. Since $G^C$ is the union of all the AMOs that can be represented by $C$,  a directed edge in $G^C$ is a directed edge in all the AMOs that can be represented by $C$. Thus, if any such directed edge is not $\mathcal{K}$-consistent then none of the AMOs that can be represented by $C$ can be $\mathcal{K}$-consistent. And, if all the directed edges of $G^C$ are $\mathcal{K}$-consistent then we further reduce our problem into counting $\mathcal{K}$-consistent AMOs for the undirected connected components of $G^C$ (\cref{lem:final-counting-algorithm}).

In order to implement the above discussion, the main insight required is the following definition.

\begin{definition}[$\mathcal{K}$-consistency of $G^C$]
  \label{def:bc-CGC}
  Let $G$ be a UCCG and $C$ a maximal clique in $G$. Given background
  knowledge $\mathcal{K}$ about the directions of the edges of $G$, $G^C$is said
  to be $\mathcal{K}$-consistent, if there does not exist a directed edge
  $u \rightarrow v \in G^C$ such that $v\rightarrow u \in \mathcal{K}$.
\end{definition}

As discussed above, \cref{def:bc-CGC} implies that for a maximal clique $C$ of $G$, if $G^C$ is not $\mathcal{K}$-consistent then there exists no $\mathcal{K}$-consistent AMO of $G$ that is represented by $C$. In other words, if $G^C$ is not $\mathcal{K}$-consistent then \[{|\{\alpha: \alpha \in \text{AMO}(G,\mathcal{K}) \text{ and } C=C_{\alpha}\}|}=0.\] Then, from \cref{lem:partition-of-set-of-bkc-AMOs},
\begin{equation*}
  \#\text{AMO}(G,\mathcal{K}) = \sum_{C:G^C \text{ is } \mathcal{K}\text{-consistent}}{|\{\alpha: \alpha \in \text{AMO}(G,\mathcal{K}) \text{ and } C=C_{\alpha}\}|}.
\end{equation*}
We  construct \cref{alg:AMO-Union-K} to check the $\mathcal{K}$-consistency of $G^C$, for any maximal clique $C$ of $G$.
\citet{wienobst2020polynomial} give a modified LBFS algorithm that for input a chordal graph $G$, and a maximal clique $C$ of $G$, outputs the undirected connected components (UCCs) of $\mathcal{C}_G(C)$. Their algorithm outputs the UCCs of $\mathcal{C}_G(C)$ in such a way that by knowing the  UCCs of $\mathcal{C}_G(C)$, we can construct $G^C$. We use this fact and construct an LBFS algorithm \cref{alg:AMO-Union-K}, which also checks the $\mathcal{K}$-consistency of $G^C$. \cref{alg:AMO-Union-K} is a background aware version of the LBFS algorithm given by \citet{wienobst2020polynomial}.\footnote{If \cref{alg:AMO-Union-K} is executed with $C=\mathcal{K}=\emptyset$, the algorithm performs a normal LBFS with corresponding traversal ordering $\tau$, which is the reverse of a PEO of $G$.} 
 
 \begin{algorithm}[ht]
  \caption{LBFS($G, C,\mathcal{K}$) (Background aware LBFS, based on the modified LBFS of
    \citet{wienobst2020polynomial})}
\label{alg:AMO-Union-K}
\SetAlgoLined
\SetKwInOut{KwIn}{Input}
\SetKwInOut{KwOut}{Output}
\SetKwFunction{AMO-Union}{AMO-Union}
\KwIn{A UCCG $G$, a maximal clique $C$ of $G$, and background knowledge $\mathcal{K}\subseteq E_G$.}
    \KwOut{ $(1, \mathcal{C}_G(C))$: if $G^C$ is $\mathcal{K}$-consistent,\\
    $(0, \mathcal{C}_G(C))$: otherwise.
     }
$\mathcal{S}\leftarrow$ sequence of sets initialized with $(C,V\setminus C)$\label{alg:S-init}
    
    $\tau \leftarrow$ empty list, $\mathcal L\leftarrow$ empty list, $\texttt{flag}\leftarrow 1$, $Y\leftarrow$ empty list \label{alg:flag-init}
    
    \While{$\mathcal{S}$ is non-empty\label{alg:loop}}{
    $X\leftarrow$ first non-empty set of $\mathcal{S}$\label{alg:set-X}
    
    $v\leftarrow$ arbitrary vertex from $X$\label{alg:mod1}

    {
   \If{$v$ is neither in a set in $\mathcal L$ nor in $C$\label{alg:if-1-start}}{
    Append $X$ to the end of the list $\mathcal{L}$.
    
    Append undirected connected components of $G[X]$ to the end of $Y$.\label{alg:appendY}
    
}\label{alg:if-1-end}
    }
    
    Add vertex $v$ to the end of $\tau$.
    
    \If{$u \rightarrow v \in \mathcal{K}$ for any $u$ which is neither in a set in $\mathcal{L}$ nor in $C$\label{alg:mod-if}}
    {
      $\texttt{flag}=0$;\label{alg:mod-output-empty}
    }\label{alg:mod-if-end}

Replace the set $X$ in the sequence $\mathcal{S}$ by the set $X \setminus \inb{v}$. 
    
    $N(v)\leftarrow \{ x|x\notin \tau \textup{ and } \undir{v}{x}\in E\}$\label{alg:neighbour-v}
    
    Denote the current $\mathcal{S}$ by $(S_1,\ldots ,S_k)$.
    
    Replace each $S_i$ by $S_i \cap N(v), S_i \setminus  N(v)$.\label{alg:refine}
    
    Remove all empty sets from $\mathcal{S}$.
  }\label{alg:while-end}

\KwRet $(\texttt{flag}, Y)$ \label{alg:return}
\end{algorithm}
 
 We now describe our background-aware version of the modified LBFS algorithm:
 see \cref{alg:AMO-Union-K}.  We do not change any line from the LBFS algorithm
 of \citet{wienobst2020polynomial}. The modifications we do in their LBFS
 algorithm are (a) introduction of ``flag'', at \cref{alg:flag-init}, which is
 used to check the $\mathcal{K}$-consistency of $G^C$, (b) lines
 \ref{alg:mod-if}-\ref{alg:mod-if-end}, which is used to update the value of
 ``flag'', and (c) we also output the value of ``flag'' with $\mathcal{C}_G(C)$.
 The correctness of this modification (stated formally in
 \cref{lem:output-of-Alg-1}) is based on the following observation which in turn
 uses ideas implicit in the work of \citet{wienobst2020polynomial}.
 
\begin{observation}[Implicit in the work of \citet{wienobst2020polynomial}]
\label{lem:direction of-edges-of-GC-in-alg}
Let $G$ be a UCCG, $\mathcal{K}$ be the known background knowledge about $G$, and $C$ be a maximal clique of $G$. For input $G,C,$ and $\mathcal{K}$, suppose that at some iteration of \cref{alg:AMO-Union-K}, $\mathcal{L}=\{X_1,X_2,\ldots,X_l\}$.  Then,
\begin{enumerate}
    \item 
    \label{item-1-of-direction of-edges-of-G^C-in-alg}
    For any $u\in C$, and $v\notin C$, if $(u,v)\in E_G$ then $u\rightarrow v$ is a directed edge in $G^C$.
    \item
    \label{item-2-of-direction of-edges-of-G^C-in-alg}
    For $u\in X_i$, and $v \notin C\cup X_1 \cup X_2\cup \ldots \cup X_i$, if $(u,v)\in E_G$ then $u\rightarrow v$ is a directed edge in $G^C$.
    \item 
    \label{item-3-of-direction-of-edges-of-G^C-in-alg}
    For $u,v\in X_i$, for $1\leq i \leq l$, if $(u,v)\in E_G$ then $u-v$ is an undirected edge in $G^C$.
    \item 
    \label{item-3b-of-direction-of-edges-of-G^C-in-alg}
    For $u,v\in C$, $u-v$ is an undirected edge in $G^C$.
    
    \item 
    \label{item-4-of-direction-of-edges-of-G^C-in-alg}
    For any $X_i\in \mathcal{L}$, every undirected connected component of
    $G[X_i]$ is an element of $\mathcal{C}_G(C)$.
\end{enumerate}
\end{observation}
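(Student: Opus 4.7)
The proof plan rests on the representation-theorem facts recalled in \cref{preliminaries:representation-of-an-AMO}: every AMO of $G$ represented by $C$ arises from an LBFS ordering that starts with a permutation of $C$, and conversely each such LBFS ordering represents a unique AMO of $G$. Consequently, an edge $u-v$ of $G$ is oriented $u\to v$ in the union $G^C$ exactly when every such LBFS ordering places $u$ before $v$, while it remains undirected exactly when both relative orderings are realized by some LBFS runs starting with $C$. My plan is to verify each of the five items against this criterion, using the specific initialization and refinement behaviour of the LBFS in \cref{alg:AMO-Union-K}.

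The two claims about $C$ (items 1 and 3b) are essentially the initialization claims. Since $\mathcal{S}$ is initialized as $(C, V_G\setminus C)$ on \cref{alg:S-init}, any LBFS ordering visits all of $C$ before any vertex outside $C$, giving item 1; and since nothing in the algorithm constrains the internal order of $C$ at the outset, every permutation of $C$ launches a valid LBFS ordering, so both orientations of any edge inside $C$ are realized, yielding item 3b. For item 2, the key property is that once the first vertex of $X_i$ is appended to $\mathcal{L}$, the refinement step on \cref{alg:refine} keeps the remaining unprocessed vertices of $X_i$ ahead in $\mathcal{S}$ of every vertex in $X_{i+1},\ldots,X_l$; hence for $u\in X_i$ and $v\notin C\cup X_1\cup\cdots\cup X_i$ with $(u,v)\in E_G$, $u$ precedes $v$ in every LBFS ordering yielding this particular $\mathcal{L}$, so $u\to v$ is forced in $G^C$.

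Item 3, which states that every edge within an $X_i$ is undirected in $G^C$, is the delicate step and the main obstacle. I would argue that for any two adjacent $u,v$ in $X_i$ one can produce two LBFS orderings beginning with $C$ that agree on the processing order of $C, X_1,\ldots,X_{i-1}$ but disagree on the relative order of $u$ and $v$ within $X_i$; a sibling-swap argument on a common refinement block of $\mathcal{S}$ containing both $u$ and $v$ supplies these two orderings. The corresponding AMOs disagree on the orientation of $u-v$, forcing it to be undirected in $G^C$. The cleanest route is to invoke the LBFS characterization of AMO-representability (Corollary 1 and Lemma 2 of \citet{wienobst2020polynomial}) rather than rebuilding that machinery.

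Finally, item 4 follows by combining items 2, 3, and 3b: within each $X_i$ every edge of $G$ becomes undirected in $G^C$ (by item 3), each edge from $X_i$ to a vertex outside $C\cup X_1\cup\cdots\cup X_i$ is directed outward (by item 2), and edges into $X_i$ from any earlier $X_j$ with $j<i$ are directed into $X_i$ by applying item 2 at stage $j$. Therefore the undirected connected components of $G^C[V_G\setminus C]$ that meet $X_i$ are exactly the undirected connected components of $G[X_i]$, establishing item 4. In summary, the whole observation is a direct consequence of the LBFS initialization and refinement rules together with the representation theorem of \cite{wienobst2020polynomial}, with item 3 being the one place where a real structural result from their prior work must be invoked.
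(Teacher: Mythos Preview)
Your treatment of items~1, 3, 3b and 4 matches the paper's approach closely; the paper, too, swaps the relative order of $u$ and $v$ inside $X_i$ (or inside $C$) to get two LBFS orderings with opposite orientations, and derives item~4 from the preceding items.

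The genuine gap is in item~2. You argue that ``$u$ precedes $v$ in every LBFS ordering yielding this particular $\mathcal{L}$'' and then conclude that $u\to v$ is forced in $G^C$. But $G^C$ is the union of \emph{all} AMOs represented by $C$, i.e.\ of all LBFS orderings starting with $C$, and different runs of \cref{alg:AMO-Union-K} (via the arbitrary choice on \cref{alg:mod1}) generally produce \emph{different} lists $\mathcal{L}$. Your ordering argument only controls the runs that happen to produce the given $\mathcal{L}$; it says nothing about an LBFS run with a different $\mathcal{L}'$, in which the hypothesis ``$u\in X_i$ and $v\notin C\cup X_1\cup\cdots\cup X_i$'' need not hold at all. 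So the jump from ``$u$ before $v$ in these runs'' to ``$u\to v$ in $G^C$'' is unjustified.

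The paper closes this gap with a structural, run-independent argument. Since $u$ and $v$ start together in $V\setminus C$ but $u$ lands in $X_i$ while $v$ does not, some earlier-processed vertex $x\in C\cup X_1\cup\cdots\cup X_{i-1}$ separated them: $x-u\in E_G$ but $x-v\notin E_G$. By induction (or item~1), $x\to u$ holds in $G^C$, hence in every AMO represented by $C$. In any such AMO, orienting $v\to u$ would create the new v-structure $x\to u\leftarrow v$, which is forbidden; therefore $u\to v$ in every such AMO, and thus in $G^C$. This no-new-v-structure step is precisely the missing idea in your sketch of item~2.
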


The following lemma encapsulates the correctness of \cref{alg:AMO-Union-K}.

 \begin{lemma}
 \label{lem:output-of-Alg-1}
 Let $G$ be a UCCG, $C$ be a maximal clique of $G$, and $\mathcal{K}$ be the known background knowledge about $G$. For the input $G,C,$ and $\mathcal{K}$, if $G^C$ is not $\mathcal{K}$-consistent \cref{alg:AMO-Union-K} outputs $(0,\mathcal{C}_G(C))$ on \cref{alg:return}, else it returns $(1,\mathcal{C}_G(C))$ on \cref{alg:return}.
 \end{lemma}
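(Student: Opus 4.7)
The plan is to split the algorithm's output into its two components and handle each separately. The second component, $\mathcal{C}_G(C)$, is produced by exactly the same partition-refinement logic as in the original LBFS of \citet{wienobst2020polynomial}: the only modifications are the $\texttt{flag}$ initialization on \cref{alg:flag-init}, the test on \cref{alg:mod-if}--\cref{alg:mod-if-end}, and the inclusion of $\texttt{flag}$ in the return value; none of these touches $\mathcal{S}$, $\mathcal{L}$, or $Y$. Hence correctness of the second coordinate is immediate from \citet{wienobst2020polynomial}, and the lemma reduces to showing that on termination $\texttt{flag} = 1$ if and only if $G^C$ is $\mathcal{K}$-consistent. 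Since $\texttt{flag}$ starts at $1$ and is only ever assigned $0$ on \cref{alg:mod-output-empty}, this amounts to the biconditional: some iteration triggers the test on \cref{alg:mod-if} iff there exist $u,v$ with $u\to v\in G^C$ and $v\to u\in\mathcal{K}$.

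For soundness of the test, suppose that at some iteration, while processing a vertex $v$, the algorithm identifies a vertex $u$ that is neither in $C$ nor in any set of the current $\mathcal{L}$ but satisfies $u\to v\in\mathcal{K}$. Then $(u,v)\in E_G$. If $v\in C$, the first item of \cref{lem:direction of-edges-of-GC-in-alg} (with the roles of $u$ and $v$ interchanged) yields $v\to u$ in $G^C$. Otherwise $v$ lies in some $X_i$ that has already been appended to $\mathcal{L}$ in this iteration (the append on \cref{alg:if-1-start} precedes the flag test); because $u\notin C\cup X_1\cup\cdots\cup X_i$, the second item of \cref{lem:direction of-edges-of-GC-in-alg} produces $v\to u$ in $G^C$. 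In either case $v\to u\in G^C$ while $u\to v\in\mathcal{K}$, so $G^C$ is not $\mathcal{K}$-consistent.

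For completeness, suppose $G^C$ is not $\mathcal{K}$-consistent and fix a witness $a\to b\in G^C$ with $b\to a\in\mathcal{K}$. \cref{lem:direction of-edges-of-GC-in-alg} gives two possibilities for this directed edge: either $a\in C$ and $b\notin C$, or $a$ lies in some $X_i$ of the final $\mathcal{L}$ and $b\notin C\cup X_1\cup\cdots\cup X_i$. Focus on the iteration in which $a$ is added to $\tau$. In the first case, $a\in C$ is processed while $\mathcal{L}$ is still empty (the $\texttt{if}$ on \cref{alg:if-1-start} never fires for a $C$-vertex), so $b\notin C$ is vacuously not in any set of $\mathcal{L}$. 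In the second case, at this iteration $\mathcal{L}$ equals $X_1,\ldots,X_i$ (the append for $X_i$ happens before the flag test), and $b\notin C\cup X_1\cup\cdots\cup X_i$ gives the same conclusion. In both cases the test on \cref{alg:mod-if}, applied with its ``$u$'' set to $b$ and its ``$v$'' set to $a$, fires, and $\texttt{flag}$ is set to $0$.

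The main subtlety I anticipate is pinning down precisely what ``$u$ lies in a set in $\mathcal{L}$'' means as the partition refinement proceeds, since the sets of $\mathcal{L}$ are snapshots taken at the moment of appending. I plan to use the monotonicity that $C\cup\bigcup\mathcal{L}$ only grows across iterations, so in both directions the relevant vertex (the witness $b$ in completeness, the detected $u$ in soundness) is verifiably present on the expected side of the currently processed vertex at the iteration in question.
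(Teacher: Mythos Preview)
Your proposal is correct and follows essentially the same approach as the paper's proof: reduce the second coordinate to the unmodified LBFS of \citet{wienobst2020polynomial}, then establish the biconditional for $\texttt{flag}$ via the two directions you call soundness and completeness, invoking the case analysis of \cref{lem:direction of-edges-of-GC-in-alg} in each. If anything, your completeness argument is slightly more explicit than the paper's in verifying that the witness vertex $b$ is genuinely outside $C\cup\bigcup\mathcal{L}$ at the iteration when $a$ is processed (the paper asserts this step without spelling out the state of $\mathcal{L}$); your observation that $\mathcal{L}$ equals exactly $\{X_1,\ldots,X_i\}$ when any vertex of $X_i$ is being processed is the right justification.
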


For any maximal clique $C$ of $G$ such that $G^C$ is $\mathcal{K}$-consistent, to compute the size of the set of $\mathcal{K}$-consistent AMOs of $G$ that are canonically represented by $C$, we further partition the set based on the different permutations of $C$. The simple \cref{lem:uniqueness-of-permutation-of-clique} below assists us in mapping each AMO of the set to a unique permutation $\pi(C)$ of $C$.
\begin{observation}
\label{lem:uniqueness-of-permutation-of-clique}
Let $G$ be a UCCG, and $\alpha$ an AMO of $G$ that is represented by a maximal clique $C$ of $G$. Then, there exists a unique permutation $\pi(C)$ of $C$ that represents $\alpha$. 
\end{observation}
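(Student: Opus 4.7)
The plan is to exploit the fact that $C$ is a clique to deduce that $\alpha$ restricted to $C$ is a transitive tournament, which then forces the permutation.

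First, I would observe that since $C$ is a maximal clique of $G$, every pair of distinct vertices $u,v \in C$ is joined by an edge in $G$. Because $\alpha$ is an orientation of $G$, each such edge is directed in $\alpha$; because $\alpha$ is acyclic (AMO), the induced subgraph of $\alpha$ on $C$ is an acyclic tournament, i.e.\ a transitive tournament. A transitive tournament on a finite vertex set $C$ has a unique topological order, so there is a unique linear ordering $\sigma$ of the vertices of $C$ such that for every directed edge $u \to v$ of $\alpha$ with $u,v\in C$, $u$ precedes $v$ in $\sigma$.

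Next, I would invoke the definition of ``$\pi(C)$ represents $\alpha$'': there exists an LBFS ordering of $V_G$ that starts with $\pi(C)$ and represents $\alpha$, meaning that for every directed edge $u\to v$ of $\alpha$, $u$ precedes $v$ in the whole ordering. Restricting the precedence condition to the initial segment of the ordering (which is exactly $\pi(C)$), we get that for every directed edge $u\to v$ of $\alpha$ with both endpoints in $C$, $u$ precedes $v$ in $\pi(C)$. By the uniqueness of the topological order on the transitive tournament established above, this forces $\pi(C) = \sigma$.

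Finally, existence of at least one such $\pi(C)$ is immediate from the hypothesis that $C$ represents $\alpha$: by definition there is an LBFS ordering starting with some permutation of $C$ that represents $\alpha$, and the argument above shows this permutation must equal $\sigma$. Combining existence and uniqueness gives the claim. I do not expect any real obstacle here — the entire content is the observation that within a clique all edges are oriented by $\alpha$ and the acyclicity of $\alpha$ pins down a single linear order — so the proof is essentially a one-paragraph verification.
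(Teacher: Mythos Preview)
Your proposal is correct and is essentially the same argument as the paper's: both use that $C$ being a clique means every pair in $C$ is joined by an edge oriented by $\alpha$, so two distinct permutations representing $\alpha$ would force contradictory orientations of some edge. The paper phrases this as a short contradiction argument, while you phrase it via the unique topological order of the induced acyclic tournament, but the content is identical.
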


By slightly extending the definition of the canonical representation of an AMO by a clique, we say that an AMO is \textbf{canonically represented by $\pi(C)$} if the AMO is represented by $\pi(C)$, and also canonically represented by the clique $C$. Then, \cref{lem:uniqueness-of-permutation-of-clique} implies that we can partition the set of $\mathcal{K}$-consistent AMOs that are canonically represented by $C$ into $\mathcal{K}$-consistent AMOs that are canonically represented by its permutations $\pi(C)$, i.e., $\{\alpha: \alpha \in \text{AMO}(G,\pi(C),\mathcal{K}) \text{ and } C=C_{\alpha}\}$. More formally,
\begin{equation*}
    |\{\alpha: \alpha \in \text{AMO}(G,\mathcal{K}) \text{ and } C=C_{\alpha}\}|= \sum_{\pi(C)}{|\{\alpha: \alpha \in \text{AMO}(G,\pi(C),\mathcal{K}) \text{ and } C=C_{\alpha}\}|}.
\end{equation*}

To compute the size of $\mathcal{K}$-consistent AMOs of $G$ that are canonically represented by $\pi(C)$,  we first have to go through the necessary and sufficient conditions for a maximal clique $C$ of $G$ to become $C_{\alpha}$, for an AMO $\alpha$ of $G$. 

\begin{lemma}[Claims 1, 2 and 3 of \citet{wienobst2020polynomial}]
\label{lem:identification-of-C-alpha}
Let $G$ be a UCCG. \citet{wienobst2020polynomial} fix a rooted clique tree of $G$ to define $C_{\alpha}$, for any AMO $\alpha$ of $G$. Let
$\mathcal{T}=(T,R)$ be the rooted clique tree (with root $R$) of $G$ on which $C_{\alpha}$ is defined, for each AMO $\alpha$ of $G$. For an AMO $\alpha$ of $G$, and a maximal clique $C$ of $G$, $C=C_{\alpha}$ if, and only if,
\begin{enumerate}
\item There exists an LBFS ordering of $G$ that starts with C, and represents $\alpha$, and
\item If $\pi(C)$ is the permutation of $C$ that represents $\alpha$ (from \cref{lem:uniqueness-of-permutation-of-clique}) then there does not exist any edge $C_i-C_j$ in the path in $T$ from $R$ to
  $C$ such that $\pi(C)$ has a prefix $C_i\cap C_j$.
\end{enumerate}
\end{lemma}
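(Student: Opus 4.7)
The plan is to establish both directions of the characterization, leveraging the definition of $C_\alpha$ from \citet{wienobst2020polynomial} (deferred to the Supplementary Material in the present paper). Since this lemma is quoted verbatim as Claims 1, 2, and 3 of that work, the natural route is to follow their argument: $C_\alpha$ is defined canonically as a specific clique in the fixed rooted clique tree $\mathcal{T}=(T,R)$ which represents $\alpha$, so by construction it satisfies property (1). The nontrivial content of the lemma is that canonicity with respect to the tree $\mathcal{T}$ is equivalent to the ``no-prefix'' condition (2), which rules out being able to ``move up'' along the tree toward the root while still representing $\alpha$.

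For the forward direction, assume $C=C_\alpha$. Property (1) is immediate from the definition of $C_\alpha$. For property (2), I would argue by contradiction: suppose that on the path from $R$ to $C$ there is an edge $C_i-C_j$ (with $C_i$ closer to the root) whose separator $S=C_i\cap C_j$ is a prefix of $\pi(C)$. By the running intersection property of clique trees, $S$ separates the vertex sets on the two sides of this edge in $G$, so the vertices of $S$ are exactly those neighbors of $V(C)\setminus S$ that lie on the root side. This structural fact lets one build an LBFS ordering that begins with a suitable permutation of $C_i$ (extending the prefix $S$ by the remaining vertices of $C_i$, then continuing into the subtree containing $C$) and still represents $\alpha$. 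Thus $C_i$ is a valid representative of $\alpha$ strictly closer to the root, contradicting the canonical choice of $C_\alpha$.

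For the backward direction, assume (1) and (2) hold but $C\neq C_\alpha$. One first shows that $C_\alpha$ lies on the unique path from $R$ to $C$ in $T$; this uses the uniqueness of the clique representing $\alpha$ canonically, together with the fact that two distinct cliques both representing $\alpha$ can be connected through the tree path without leaving the cliques that represent $\alpha$. Then one takes the edge $C_i-C_j$ of this path incident to $C_\alpha$ on the side toward $C$, and shows that its separator must appear as a prefix of $\pi(C)$: any LBFS ordering that begins with $\pi(C)$ and represents $\alpha$ must, when ``translated'' to an ordering that represents $\alpha$ via $C_\alpha$, place the separator vertices of $C_i\cap C_j$ first, because they are the only $C$-vertices whose in-neighborhood in $\alpha$ is confined to the root side. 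This contradicts condition (2).

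The main obstacle is formalising the ``rerouting'' step in both directions: understanding precisely how rearranging the starting clique of an LBFS ordering corresponds to sliding along an edge of the clique tree. The key technical fact required is that if $\pi(C)$ has a separator $S=C_i\cap C_j$ as a prefix, then after processing $S$ the LBFS refinement procedure of \cref{alg:AMO-Union-K} produces exactly the same partition of $V\setminus S$ as would arise from starting at $C_i$ instead of at $C$; this is what allows one to swap the starting clique while preserving the represented AMO. Making this precise — which is essentially the combinatorial heart of \citet{wienobst2020polynomial}'s Claims 1--3 — is what prevents the lemma from following by a short direct argument and is the reason the present paper cites it rather than reproving it in the main text.
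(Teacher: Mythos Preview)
Your overall plan---argue each direction by contradiction, sliding along a clique-tree edge whose separator is a prefix of $\pi(C)$---matches the spirit of the paper's proof, which is itself just a translation of Claims 1--3 of \citet{wienobst2020polynomial} via the partial order $<_\alpha$ on cliques representing $\alpha$. However, your argument leans on a notion of canonicity (``strictly closer to the root'') that is not the actual definition: $C_\alpha$ is the unique minimum in the order $<_\alpha$ defined through $S$-flowers (\cref{def:ordering-of-flowers,def:ordering-of-cliques}), not the representative at minimum tree-distance from $R$. In the forward direction this is repairable---once you show $C_i$ also represents $\alpha$, one checks $C_i\cap C=S$ is a minimal separator with $C_i$'s $S$-flower $<_{\mathcal{T}}$ $C$'s $S$-flower, giving $C_i<_\alpha C$ and the desired contradiction---but you should state it in those terms.

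The more serious gap is in your backward direction. You assert that ``$C_\alpha$ lies on the unique path from $R$ to $C$ in $T$,'' and then take the edge of that path incident to $C_\alpha$. This claim is neither obvious nor what the paper (or \citet{wienobst2020polynomial}) uses: the definition of $<_\alpha$ only guarantees that $C_\alpha$'s $S$-flower (for $S=C_\alpha\cap C$) contains \emph{some} node on the path from $R$ toward $C$'s flower, not that $C_\alpha$ itself sits on the $R$--$C$ path. The paper's route avoids this entirely: from $C\neq C_\alpha$ one gets $C_\alpha<_\alpha C$ by Claim~1, and then Claim~3 (together with the clique-intersection property) shows that the separator $S=C_\alpha\cap C$ coincides with $C_i\cap C_{i+1}$ for some edge on the $R$--$C$ path and is a prefix of $\pi(C)$. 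So rather than locating $C_\alpha$ on the path, you should work directly with the separator $S=C_\alpha\cap C$ and argue that it must appear as one of the edge-separators in $FP(C,\mathcal{T})$.
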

The set FP$(C,\mathcal{T})$ is defined to be the set of such forbidden prefixes
$C_i\cap C_j$.

\begin{definition}[$FP(C, \mathcal{T})$,  Definition 3 of \citet{wienobst2020polynomial}]
\label{def:FP}
Let $G$ be a UCCG, $\mathcal{T}=(T,R)$ a rooted clique tree of $G$, $C$ a
node in $T$ and $R=C_1 - C_2 - \ldots - C_{p-1} - C_p=C$ the unique path from $R$ to $C$ in
$T$. We define the set \emph{FP$(C,\mathcal{T})$ } to contain all sets of the
form $C_i\cap C_{i+1} \subseteq C$, for $1 \leq i < p$.
\end{definition}

Based on \cref{lem:identification-of-C-alpha}, we define $(\mathcal{K},\mathcal{T})$-consistency for a permutation $\pi(C)$ to simplify our computation. This definition is one of the main new ingredients that let us extend the result of \cite{wienobst2020polynomial}.

\begin{definition}[$(\mathcal{K},\mathcal{T})$-consistency of permutations of maximal cliques]
\label{def:background-consistent-pi-C}
Let $G$ be a UCCG, $C$ a maximal clique in $G$, $\pi(C)$ a permutation of $C$, $\mathcal{K}$ be a given background knowledge, and $\mathcal{T}=(T,R)$ a rooted clique tree of $G$ (on which $C_{\alpha}$ is defined). $\pi(C)$ is said to be $(\mathcal{K},\mathcal{T})$-consistent if  (a) $\pi(C)$ is $\mathcal{K}$-consistent, i.e, for any edge $u\rightarrow v \in \mathcal{K}$ such that $u,v\in C$, $u$ occurs before $v$ in $\pi(C)$, and (b) no element of FP$(C,\mathcal{T})$ (\cref{def:FP}) is a prefix of $\pi(C)$.
\end{definition}

If $\pi(C)$ itself is not $\mathcal{K}$-consistent then no $\mathcal{K}$-consistent AMO exists that is represented by $\pi(C)$. Also, if $\pi(C)$ has a prefix in  FP$(C,\mathcal{T})$ then from \cref{lem:uniqueness-of-permutation-of-clique,lem:identification-of-C-alpha}, there does not exist an AMO $\alpha$ of $G$ such that $\alpha$ is represented by $\pi(C)$, and $C=C_{\alpha}$. 
This yields the following observation.
\begin{observation}
\label{lem:formular-for-non-KT-consistent-permutaion}
If $\pi(C)$ is not $(\mathcal{K},\mathcal{T})$-consistent then there exists no $\mathcal{K}$-consistent AMOs of $G$ that can be canonically represented by $\pi(C)$, i.e.,  $|\{\alpha: \alpha \in \text{AMO}(G,\pi(C),\mathcal{K}) \text{ and } C=C_{\alpha}\}|= 0$.
\end{observation}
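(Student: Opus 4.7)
The plan is to reduce the statement to the two conditions defining $(\mathcal{K},\mathcal{T})$-consistency (\cref{def:background-consistent-pi-C}) and handle each failure case separately, invoking existing characterization lemmas from the excerpt. Concretely, if $\pi(C)$ fails to be $(\mathcal{K},\mathcal{T})$-consistent, then either condition (a) fails (some $\mathcal{K}$-edge inside $C$ is oriented against $\pi(C)$) or condition (b) fails (some set in FP$(C,\mathcal{T})$ is a prefix of $\pi(C)$). It suffices to show that in either case the set $\{\alpha : \alpha \in \text{AMO}(G,\pi(C),\mathcal{K}) \text{ and } C=C_\alpha\}$ is empty.

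For the first case, suppose there exists $u \rightarrow v \in \mathcal{K}$ with $u,v \in C$ such that $v$ precedes $u$ in $\pi(C)$. Let $\alpha$ be any AMO of $G$ represented by $\pi(C)$. Since $C$ is a clique, $u$ and $v$ are adjacent in $G$, so the edge between them must be directed in $\alpha$. By the definition of ``represented by'' (\cref{preliminaries:representation-of-an-AMO}), the edge in $\alpha$ must be oriented from the earlier vertex in $\pi(C)$ to the later one, i.e., as $v \rightarrow u$. But this directly violates $\mathcal{K}$-consistency of $\alpha$, since $u \rightarrow v \in \mathcal{K}$. Hence no such $\alpha$ can be in $\text{AMO}(G,\pi(C),\mathcal{K})$, and the set is empty.

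For the second case, suppose some $F \in \text{FP}(C,\mathcal{T})$ is a prefix of $\pi(C)$. Let $\alpha$ be any AMO of $G$ represented by $\pi(C)$. By \cref{lem:uniqueness-of-permutation-of-clique}, $\pi(C)$ is the unique permutation of $C$ representing $\alpha$. Then condition 2 of \cref{lem:identification-of-C-alpha} fails for $(\alpha, C)$, so $C \neq C_\alpha$. Therefore no $\alpha$ can simultaneously lie in $\text{AMO}(G,\pi(C),\mathcal{K})$ and satisfy $C = C_\alpha$, and the set is again empty.

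Combining the two cases yields the claim. There is no real obstacle here: the observation is essentially a restatement of \cref{def:background-consistent-pi-C} in contrapositive form, with the two conditions matched against \cref{preliminaries:representation-of-an-AMO} (for $\mathcal{K}$-consistency of directed edges inside $C$) and \cref{lem:identification-of-C-alpha} (for the forbidden-prefix condition defining $C_\alpha$). The only mild care needed is to invoke \cref{lem:uniqueness-of-permutation-of-clique} in the second case to ensure that $\pi(C)$ is indeed the permutation to which condition 2 of \cref{lem:identification-of-C-alpha} applies.
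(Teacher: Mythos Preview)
Your proof is correct and follows essentially the same approach as the paper: the paper's justification (given in the paragraph immediately preceding the observation) likewise splits on which condition of \cref{def:background-consistent-pi-C} fails, handling the $\mathcal{K}$-inconsistency case directly and the forbidden-prefix case via \cref{lem:uniqueness-of-permutation-of-clique,lem:identification-of-C-alpha}. Your write-up is simply a more explicit version of the same argument.
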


We thus focus on only those permutations $\pi(C)$ of $C$ that are
$(\mathcal{K},\mathcal{T})$-consistent.  The main ingredient towards this end is
the following recursive formula.
\begin{lemma}
\label{lem:formular-for-KT-consistent-permutaion}
Let $G^C$ be $\mathcal{K}$-consistent. Then, for any
$(\mathcal{K},\mathcal{T})$-consistent permutation $\pi(C)$ of $C$, the size of
the set
$\{\alpha: \alpha \in \text{AMO}(G,\pi(C),\mathcal{K}) \text{ and }
C=C_{\alpha}\}$ is $\prod_{H\in \mathcal{C}_G(C)}{\#AMO(H,\mathcal{K}[H])}$.
\end{lemma}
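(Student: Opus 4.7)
The plan is to reduce the statement in two stages: first, remove the canonical-representation condition $C = C_\alpha$ by exploiting the $(\mathcal{K},\mathcal{T})$-consistency of $\pi(C)$, and then factor the $\mathcal{K}$-consistent AMOs represented by $\pi(C)$ across the components of $\mathcal{C}_G(C)$ by lifting a bijection implicit in the background-free analysis of \citet{wienobst2020polynomial}. For the first stage, suppose $\alpha \in \text{AMO}(G, \pi(C), \mathcal{K})$. By definition there is an LBFS ordering that starts with $\pi(C)$ and represents $\alpha$, so condition~1 of \cref{lem:identification-of-C-alpha} is automatic. Since $\pi(C)$ is $(\mathcal{K},\mathcal{T})$-consistent, no element of FP$(C,\mathcal{T})$ is a prefix of $\pi(C)$, which is exactly condition~2. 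Hence $C = C_\alpha$ for every $\alpha \in \text{AMO}(G, \pi(C), \mathcal{K})$, and the cardinality we wish to evaluate is simply $|\text{AMO}(G, \pi(C), \mathcal{K})|$.

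For the factorization, I would set up the bijection $\Phi \colon \text{AMO}(G, \pi(C)) \to \prod_{H \in \mathcal{C}_G(C)} \text{AMO}(H)$ that maps an AMO $\alpha$ to the tuple of its restrictions $\alpha|_H$; the inverse assembles a tuple of component AMOs with the edge directions that are forced by $G^C$, namely those within $C$ (oriented by $\pi(C)$), those between $C$ and $V_G \setminus C$ (directed out of $C$ by \cref{lem:direction of-edges-of-GC-in-alg}.\ref{item-1-of-direction of-edges-of-G^C-in-alg}), and those between distinct components (directed as in $G^C$ by items~\ref{item-2-of-direction of-edges-of-G^C-in-alg} and~\ref{item-3-of-direction-of-edges-of-G^C-in-alg} of the same observation). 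That $\Phi$ is a bijection onto $\prod_H \text{AMO}(H)$ is the content of the background-free case of \citet{wienobst2020polynomial}; each induced subgraph $G[V_H]$ is a UCCG since chordality is preserved by induced subgraphs and $H$ is connected by construction. The key step is then to check that $\Phi$ restricts to a bijection onto $\prod_H \text{AMO}(H, \mathcal{K}[H])$ when the domain is narrowed to $\text{AMO}(G, \pi(C), \mathcal{K})$. Every background knowledge edge $u \to v \in \mathcal{K}$ falls into exactly one of the following cases: (a) $u, v \in C$, in which case $\mathcal{K}$-consistency of $\alpha$ is handed off to $\pi(C)$, which is $\mathcal{K}$-consistent by definition; (b) exactly one endpoint in $C$, or (c) endpoints in different components, in which cases the orientation is forced by $G^C$ and is $\mathcal{K}$-consistent by the hypothesis that $G^C$ is $\mathcal{K}$-consistent; (d) both endpoints inside some $H$, in which case consistency is exactly $\mathcal{K}[H]$-consistency of $\alpha|_H$. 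Since these four cases exhaust $\mathcal{K}$, $\alpha$ is $\mathcal{K}$-consistent iff each $\alpha|_H$ is $\mathcal{K}[H]$-consistent, so $\Phi$ restricts as claimed and taking cardinalities yields the desired product.

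The main obstacle is a clean verification that the four cases above genuinely partition the possible positions of a background-knowledge edge and that in cases (b) and (c) the orientation forced by $G^C$ coincides with the direction prescribed by $\mathcal{K}$; this is where the hypothesis that $G^C$ is $\mathcal{K}$-consistent is used in full, together with \cref{lem:direction of-edges-of-GC-in-alg}. A subtler point is case (c): although the components of $\mathcal{C}_G(C)$ are, by definition, the undirected components of $G^C[V_G \setminus C]$, the graph $G$ may still contain edges between two distinct such components, and one must use item~\ref{item-2-of-direction of-edges-of-G^C-in-alg} of \cref{lem:direction of-edges-of-GC-in-alg} to conclude that these edges are directed (hence fixed) in $G^C$ and therefore impose no additional constraint on the choice of component AMOs.
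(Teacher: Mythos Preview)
Your proposal is correct and follows essentially the same strategy as the paper: establish a bijection between $\mathcal{K}$-consistent AMOs canonically represented by $\pi(C)$ and tuples of $\mathcal{K}[H]$-consistent AMOs of the components, using the $(\mathcal{K},\mathcal{T})$-consistency of $\pi(C)$ to handle the $C = C_\alpha$ condition and the $\mathcal{K}$-consistency of $G^C$ to handle the forced edges. The only organisational difference is that you dispose of the $C = C_\alpha$ condition upfront and then invoke the background-free bijection of \citet{wienobst2020polynomial} abstractly, whereas the paper builds the inverse explicitly via concatenation of LBFS orderings $\tau = (\pi(C), \tau_1, \ldots, \tau_l)$ and checks $C = C_\alpha$ at the end; your four-case analysis of where a $\mathcal{K}$-edge can sit is in fact more explicit than what the paper writes out.
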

Note that the formula obtained in
\cref{lem:formular-for-KT-consistent-permutaion} depends only upon the clique
$C$ and \emph{not} on the permutation $\pi$ of the nodes of $C$ (as long as
$\pi$ is itself $\mathcal{K}$-consistent)!  This implies immediately that the
number of $\mathcal{K}$-consistent AMOs of $G$ for which $C$ is the {canonical
  representative} is given by multiplying the product
$\prod_{H\in \mathcal{C}_G(C)}{\#AMO(H,\mathcal{K}[H])}$ with
the number of $(\mathcal{K},\mathcal{T})$-consistent permutation of $C$.

This motivates us to count  $(\mathcal{K},\mathcal{T})$-consistent permutations of a maximal clique $C$ of $G$. To count the $(\mathcal{K},\mathcal{T})$-consistent permutations of $C$, we define the following:
\begin{definition}
\label{def:phi-S-R-K}
Let $S$ be a set of vertices, $\mathcal{R}=\{R_1,R_2,\ldots, R_l\}$ such that $R_1\subsetneq R_2 \subsetneq \ldots \subsetneq R_l \subsetneq S$, and $\mathcal{K}\subseteq S\times S$. $\Phi(S,\mathcal{R}, \mathcal{K})$ is the number of $\mathcal{K}$-consistent permutations of $S$ that do not have a prefix in $\mathcal{R}$.
\end{definition}

\cref{lem:partition-of-set-of-bkc-AMOs,lem:formular-for-non-KT-consistent-permutaion},
along with \cref{lem:formular-for-KT-consistent-permutaion} and the discussion
following it, finally give us the following recursion.
 \begin{lemma}
 \label{lem:final-counting-algorithm}
 Let $G$ be a UCCG, $\mathcal{K}$ be a given background knowledge, and
$\mathcal{T}=(T,R)$ a rooted clique tree of $G$ on which $<_{\alpha}$ has been
defined. Then $\#AMO(G,\mathcal{K})$ equals
\begin{equation*}
\sum_{C}
  {\Phi(C,FP(C,\mathcal{T}),\mathcal{K}[C]) \times} 
  {\prod_{H\in \mathcal{C}_G(C)}{\#AMO(H,\mathcal{K}[H])}},
\end{equation*}
where the sum is over those $C$ for which $G^C$ is $\mathcal{K}$-consistent.
 \end{lemma}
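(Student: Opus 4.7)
The plan is to chain together the partition results and recursive formulas established earlier in the section, treating the claim as essentially a bookkeeping composition of \cref{lem:partition-of-set-of-bkc-AMOs,lem:uniqueness-of-permutation-of-clique,lem:formular-for-non-KT-consistent-permutaion,lem:formular-for-KT-consistent-permutaion} together with the definition of $\Phi$.

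First I would invoke \cref{lem:partition-of-set-of-bkc-AMOs} to write $\#\text{AMO}(G,\mathcal{K})$ as a sum over all maximal cliques $C \in \Pi(G)$ of the quantity $N(C) \defeq |\{\alpha \in \text{AMO}(G,\mathcal{K}) : C_\alpha = C\}|$. Next, I would argue that for any clique $C$ such that $G^C$ is \emph{not} $\mathcal{K}$-consistent, $N(C) = 0$. This uses \cref{def:bc-CGC} and \cref{lem:direction of-edges-of-GC-in-alg}: by definition, $G^C$ is the union of all AMOs represented by $C$, so every directed edge $u \rightarrow v$ appearing in $G^C$ must appear in every AMO of $G$ that is represented by $C$; if such an edge violates $\mathcal{K}$, then no AMO represented by $C$ (and in particular none canonically represented by $C$) can be $\mathcal{K}$-consistent. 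Therefore the outer sum may be restricted to those $C$ for which $G^C$ is $\mathcal{K}$-consistent.

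Then for each such surviving clique $C$, I would use \cref{lem:uniqueness-of-permutation-of-clique} to further partition by the unique permutation $\pi(C)$ representing $\alpha$, giving
\[
N(C) = \sum_{\pi(C)} |\{\alpha \in \text{AMO}(G,\pi(C),\mathcal{K}) : C_\alpha = C\}|.
\]
By \cref{lem:formular-for-non-KT-consistent-permutaion}, only the $(\mathcal{K},\mathcal{T})$-consistent permutations contribute, and by \cref{lem:formular-for-KT-consistent-permutaion}, each such permutation contributes exactly $\prod_{H\in \mathcal{C}_G(C)}\#\text{AMO}(H,\mathcal{K}[H])$. Since this product depends only on $C$ and not on the specific permutation, it factors out of the inner sum, leaving a product of (number of $(\mathcal{K},\mathcal{T})$-consistent permutations of $C$) and the component product.

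Finally, I would show that the number of $(\mathcal{K},\mathcal{T})$-consistent permutations of $C$ is exactly $\Phi(C, FP(C,\mathcal{T}), \mathcal{K}[C])$. By \cref{def:background-consistent-pi-C}, a permutation of $C$ is $(\mathcal{K},\mathcal{T})$-consistent iff it is $\mathcal{K}[C]$-consistent and has no prefix in $FP(C,\mathcal{T})$, which matches \cref{def:phi-S-R-K} verbatim (one need only check that the elements of $FP(C,\mathcal{T})$ form a chain under strict inclusion inside $C$; this follows from the running intersection property of clique trees along the path from $R$ to $C$, so that the hypothesis of \cref{def:phi-S-R-K} is satisfied). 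Substituting this count back into the expression yields the claimed recursion. The only step that requires any real care — and which I expect to be the mild obstacle — is the chain-structure verification of $FP(C,\mathcal{T})$, ensuring that $\Phi$ is applied to a valid argument; everything else is a direct assembly of the previously established pieces.
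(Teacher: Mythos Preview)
Your proposal is correct and follows essentially the same approach as the paper's proof: both assemble \cref{lem:partition-of-set-of-bkc-AMOs}, the observation that non-$\mathcal{K}$-consistent $G^C$ contribute zero, the permutation partition via \cref{lem:uniqueness-of-permutation-of-clique}, and \cref{lem:formular-for-non-KT-consistent-permutaion,lem:formular-for-KT-consistent-permutaion} to factor out the component product, then identify the remaining count with $\Phi$ via \cref{def:background-consistent-pi-C,def:phi-S-R-K}. The chain-structure check on $FP(C,\mathcal{T})$ that you flag is exactly what the paper isolates as \cref{lem:elements-of-FP-is-arranged-as subsets}.
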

 \cref{lem:final-counting-algorithm} solves our counting problem. The precondition of $\Phi(S,\mathcal{R}, \mathcal{K})$ in \cref{def:phi-S-R-K}  that $\mathcal{R}=\{R_1,R_2,\ldots,R_l\}$ has the property
$R_1\subsetneq R_2 \subsetneq \ldots \subsetneq R_l \subsetneq S$ is satisfied
at the beginning of the recursion, i.e. when $\mathcal{R} = FP(C, \mathcal{T})$,
by \cref{lem:elements-of-FP-is-arranged-as subsets}, which is a consequence of the
standard clique intersection property of clique trees of chordal graphs.
\begin{lemma}[\cite{wienobst2020polynomial}, Lemma 7]
\label{lem:elements-of-FP-is-arranged-as subsets}
We can order the elements of $FP(C,\mathcal{T})$ as $X_1\subsetneq X_2\subsetneq \ldots \subsetneq X_l \subsetneq C$.
\end{lemma}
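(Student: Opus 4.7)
The plan is to exploit the \emph{running intersection property} of clique trees: for any vertex $v$ of $G$ and any two maximal cliques $C_a,C_b$ containing $v$, every clique on the unique $C_a$-to-$C_b$ path in $T$ also contains $v$. This is the defining property of a clique tree of a chordal graph, and it is the only structural input I need.

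Let $R=C_1-C_2-\cdots-C_p=C$ be the path from the root to $C$, and write $S_i \defeq C_i\cap C_{i+1}$ for $1\le i<p$. By \cref{def:FP}, $FP(C,\mathcal{T})=\{S_i : S_i\subseteq C\}$ (as a set of sets, with duplicates collapsed). The key claim is: \emph{if $i<j$ and both $S_i,S_j\in FP(C,\mathcal{T})$, then $S_i\subseteq S_j$.} To see this, take $v\in S_i$. Then $v\in C_i$ and, since $S_i\subseteq C$, also $v\in C=C_p$. The running intersection property applied to the path $C_i-C_{i+1}-\cdots-C_p$ forces $v\in C_m$ for every $i\le m\le p$. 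In particular $v\in C_j$ and $v\in C_{j+1}$, so $v\in S_j$, proving $S_i\subseteq S_j$.

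Thus the collection $\{S_i:S_i\subseteq C\}$ is totally ordered by inclusion, so the \emph{distinct} elements can be listed as $X_1\subsetneq X_2\subsetneq\cdots\subsetneq X_l$. It remains to show the strict inclusion $X_l\subsetneq C$. Since $X_l=S_i$ for some $i$ with $S_i\subseteq C$, suppose for contradiction that $S_i=C$. Then $C\subseteq C_i$, but $C=C_p$ and $C_i$ are both maximal cliques of $G$, so maximality forces $C=C_i$; the same argument gives $C=C_{i+1}$, contradicting the fact that $C_i$ and $C_{i+1}$ are distinct nodes of the clique tree $T$. Hence $X_l\subsetneq C$, completing the chain.

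The only step that might warrant care is articulating why the running intersection property is available; this is classical for clique trees of chordal graphs and is exactly the property cited by \citet{wienobst2020polynomial} when they state this lemma, so there is no real obstacle — the proof is essentially a two-line unpacking of RIP plus maximality of cliques.
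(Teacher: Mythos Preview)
Your proof is correct and follows exactly the approach the paper indicates: the paper does not spell out a proof of this lemma but simply attributes it to \citet{wienobst2020polynomial} and remarks that it ``is a consequence of the standard clique intersection property of clique trees of chordal graphs,'' which is precisely the running intersection property you invoke. Your argument is the natural unpacking of that remark, and the handling of strict inclusion via maximality of cliques is the standard way to finish.
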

The precondition of $\Phi(S,\mathcal{R}, \mathcal{K})$ is preserved throughout the recursion described in the lemma. We now give a recursive method to compute
$\Phi(S,\mathcal{R}, \mathcal{K})$. 
\begin{lemma}
\label{lem:phi-computaion}
Let $S$ be a clique, and $\mathcal{K}\subseteq S\times S$ be a set of directed
edges. Let $R=\{R_1,R_2,\ldots , R_l\}$ where $l\geq 1$ be such that
$R_1\subsetneq R_2 \subsetneq \ldots \subsetneq R_l \subsetneq S$.
Then,
\begin{enumerate}
\item \label{item-3-of-phi-computaion}
  $\Phi(S, \emptyset,\mathcal{K})= \frac{|S|!}{|V_{\mathcal{K}}|!} \times
  \Psi(V_{\mathcal{K}}, \mathcal{K})$, where
  $\Psi(V_{\mathcal{K}}, \mathcal{K})$ is the number of $\mathcal{K}$-consistent
  permutations of vertices in $V_{\mathcal{K}}$ ($V_{\mathcal{K}}$ is the set of
  end points of edges in $\mathcal{K}$). (Example: Suppose $S=\{1,2,3,4,5\}$, and $\mathcal{K}=\{1\rightarrow2, 2\rightarrow 3\}$. Then, $V_{\mathcal{K}}=\{1,2,3\}$. And, there exist only one permutation, $(1,2,3)$, of $V_{\mathcal{K}}$ that is $\mathcal{K}$-consistent, i.e., $\Psi(V_{\mathcal{K}}, \mathcal{K}) = 1$. This implies $\Phi(S, \emptyset,\mathcal{K})=20$.)

\item \label{item-1-of-phi-computaion}
    If there exists an edge $u\rightarrow v\in \mathcal{K}$ such that
    $u\in S\setminus R_l$ and $v\in R_l$, then
    $\Phi(S,R,\mathcal{K})= \Phi(S,R-\{R_l\},\mathcal{K}).$
\item 
  \label{item-2-of-phi-computaion}
  If there does not exist an edge $u \rightarrow v\in \mathcal{K}$ such that
  $u\in S\setminus R_l$ and $v\in R_l$, then $\Phi(S,R,\mathcal{K})=$
  $\Phi(S,R-\{R_l\},\mathcal{K}) -$
  ${\Phi(R_l,R-\{R_l\},\mathcal{K}[R_l])\times \Phi(S\setminus
    R_l,\emptyset,\mathcal{K}[S\setminus R_l])}$.
\end{enumerate}
\end{lemma}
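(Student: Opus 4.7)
The plan is to verify each of the three items by a direct combinatorial argument on permutations of $S$, using the hypothesis $R_1 \subsetneq R_2 \subsetneq \cdots \subsetneq R_l \subsetneq S$ to control how the forbidden prefixes interact with one another and with $\mathcal{K}$.

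For item 1, I would argue by a standard ``interleaving'' count. A permutation of $S$ is $\mathcal{K}$-consistent if and only if, when we record the restriction of the permutation to $V_{\mathcal{K}}$, the result is a $\mathcal{K}$-consistent permutation of $V_{\mathcal{K}}$; the remaining $|S| - |V_{\mathcal{K}}|$ vertices have no constraint at all, since they appear in no edge of $\mathcal{K}$. To build such a permutation I first choose the $|V_{\mathcal{K}}|$ positions (among $|S|$ slots) where vertices of $V_{\mathcal{K}}$ will sit, in $\binom{|S|}{|V_{\mathcal{K}}|}$ ways, fill these with a $\mathcal{K}$-consistent order of $V_{\mathcal{K}}$ in $\Psi(V_{\mathcal{K}}, \mathcal{K})$ ways, and fill the remaining positions arbitrarily in $(|S|-|V_{\mathcal{K}}|)!$ ways. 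Multiplying and simplifying yields the claimed formula $|S|!/|V_{\mathcal{K}}|! \cdot \Psi(V_{\mathcal{K}},\mathcal{K})$.

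For item 2, the key observation is that the constraint ``$R_l$ is not a prefix'' is implied by $\mathcal{K}$-consistency under the hypothesis. Indeed, if $\pi$ had $R_l$ as a prefix, then the vertex $v \in R_l$ of the hypothesized edge $u \to v \in \mathcal{K}$ would appear before $u \in S \setminus R_l$, contradicting $\mathcal{K}$-consistency. Hence the sets of $\mathcal{K}$-consistent permutations of $S$ avoiding the prefixes $R$ and those avoiding $R \setminus \{R_l\}$ coincide, so $\Phi(S,R,\mathcal{K}) = \Phi(S, R \setminus \{R_l\}, \mathcal{K})$.

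For item 3, I would apply inclusion-exclusion: the permutations counted by $\Phi(S, R \setminus \{R_l\}, \mathcal{K})$ split into those that additionally avoid $R_l$ as a prefix (which is $\Phi(S, R, \mathcal{K})$) and those that \emph{do} have $R_l$ as a prefix. The main subtlety is to show that the latter set has size exactly $\Phi(R_l, R \setminus \{R_l\}, \mathcal{K}[R_l]) \times \Phi(S \setminus R_l, \emptyset, \mathcal{K}[S \setminus R_l])$. A permutation $\pi$ with $R_l$ as a prefix decomposes uniquely into a permutation $\pi_1$ of $R_l$ followed by a permutation $\pi_2$ of $S \setminus R_l$. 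Since $R_1, \ldots, R_{l-1}$ are all proper subsets of $R_l$, forbidden-prefix constraints from $R \setminus \{R_l\}$ apply only to $\pi_1$ and are captured by $\Phi(R_l, R \setminus \{R_l\}, \mathcal{K}[R_l])$; and $\pi_2$ faces no forbidden-prefix constraint, giving the factor $\Phi(S \setminus R_l, \emptyset, \mathcal{K}[S \setminus R_l])$. The one thing to check is that the $\mathcal{K}$-consistency of $\pi$ is equivalent to $\pi_1$ being $\mathcal{K}[R_l]$-consistent and $\pi_2$ being $\mathcal{K}[S \setminus R_l]$-consistent; here the case-3 hypothesis is crucial, because it rules out any edge $u \to v \in \mathcal{K}$ with $u \in S \setminus R_l$ and $v \in R_l$, so the only ``cross'' edges (if any) go from $R_l$ to $S \setminus R_l$ and are automatically consistent with placing all of $R_l$ first. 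Subtracting then yields the stated formula.

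The main obstacle is precisely this last bookkeeping in item 3: ensuring that the factoring of $\pi$ into $(\pi_1, \pi_2)$ preserves $\mathcal{K}$-consistency in both directions, and that the forbidden prefixes inherited from $R \setminus \{R_l\}$ are exactly those that appear inside $R_l$. Both properties follow cleanly from the strict inclusion chain $R_1 \subsetneq \cdots \subsetneq R_l$ together with the case-3 hypothesis; once these are checked, items 1--3 reduce to counting and inclusion-exclusion as described.
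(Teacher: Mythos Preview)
Your proposal is correct and follows essentially the same approach as the paper's proof: the interleaving count for item~1, the observation that the $R_l$-prefix constraint is vacuous under the hypothesis of item~2, and inclusion--exclusion with the $(\pi_1,\pi_2)$ decomposition for item~3. In fact your treatment of item~3 is slightly more careful than the paper's, since you explicitly invoke the case-3 hypothesis to justify that $\mathcal{K}$-consistency of $\pi$ factors as $\mathcal{K}[R_l]$-consistency of $\pi_1$ together with $\mathcal{K}[S\setminus R_l]$-consistency of $\pi_2$.
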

\begin{algorithm}[t]
\caption{Valid-Perm$(S,\mathcal{R},\mathcal{K})$}
\label{alg:valid-permutaion-of-clique}
\SetAlgoLined
\SetKwInOut{KwIn}{Input}
\SetKwInOut{KwOut}{Output}
\SetKwFunction{Counting-Permutations}{Counting-Permutations}
\KwIn{A clique $S$, $\mathcal{R}=\{R_1,R_2,\ldots,R_l\}$ such that $R_1\subsetneq R_2 \subsetneq \ldots \subsetneq R_l\subsetneq S$,  and background knowledge $\mathcal{K}\subseteq S\times S$.}
\KwOut{ $\Phi(S,\mathcal{R},\mathcal{K})$.
}

\label{alg-vp:R-is-empty} \If{$\mathcal{R}=\emptyset $} 
{

\KwRet
     $\frac{|S|!}{|{V_{\mathcal{K}}|!}} \cdot {\Psi(V_{\mathcal{K}},
       \mathcal{K})}$\label{alg-vp:return-K-empty} \label{alg-vp:sum-init-R-empty}

   }\label{alg-vp:R-is-empty-end}

\label{alg-vp:sum-init-R-not-empty}
$\texttt{sum}\leftarrow$ Valid-Perm$(S,\mathcal{R}-\{R_l\},\mathcal{K})$

\label{alg-vp:no-edge-from-Rl} \If{$\{(u,v):u\rightarrow v\in \mathcal{K}, v\in R_l$ and $u\notin R_l\}\neq \emptyset$}
{
\KwRet $\texttt{sum}$\label{alg-vp:return2}
}\label{alg-vp:no-edge-from-Rl-end}

\KwRet $\texttt{sum} - \text{Valid-Perm}(R_l,\mathcal{R}-\{R_l\},\mathcal{K}[R_l]) \times \text{Valid-Perm}(S\setminus R_l,\emptyset, \mathcal{K}[S\setminus R_l])$\label{alg-vp:final-return}
\end{algorithm}
 \begin{proof}[Proof of \cref{lem:phi-computaion}]
  Proofs of \cref{item-1-of-phi-computaion,item-2-of-phi-computaion} follow
  easily from the definition of the $\Phi$ function
  (\cref{lem:final-counting-algorithm}), and are similar in spirit to the
  corresponding results of \citet{wienobst2020polynomial} in the setting of no
  background knowledge.  We provide the details of these proofs in the
  supplementary material and focus here on proving \cref{item-3-of-phi-computaion}. If
  $\mathcal{R}=\emptyset$ then $\Phi(S,\mathcal{R}, \mathcal{K})$ is the number
  of $\mathcal{K}$-consistent permutations of $S$. There are
  $\frac{|S|!}{|V_{\mathcal{K}}|!}$ permutations of $S$ consistent with any given ordering of vertices in $V_{\mathcal{K}}$. The total number of
  $\mathcal{K}$-consistent permutations of the vertices in $V_{\mathcal{K}}$ is
  $\Psi(V_{\mathcal{K}},\mathcal{K})$. Therefore, the number of
  $\mathcal{K}$-consistent permutations of $S$ equals
  $\frac{|S|!}{|V_{\mathcal{K}}|!} \times \Psi(V_{\mathcal{K}}, \mathcal{K})$.
\end{proof}
\Cref{alg:valid-permutaion-of-clique} implements \cref{lem:phi-computaion} to
compute $\Phi(S,\mathcal{R},\mathcal{K})$, and its correctness given below, is
an easy consequence of \cref{lem:phi-computaion}.
\begin{observation}
\label{lem:alg-valid-perm-correctness}
For input $S, \mathcal{R} = \{R_1,R_2,\ldots, R_l\},$ and $\mathcal{K}$, where
$R_1\subsetneq R_2 \subsetneq \ldots \subsetneq R_l\subsetneq S$, and
$\mathcal{K}\subseteq S\times S$, \cref{alg:valid-permutaion-of-clique} returns
$\Phi(S,\mathcal{R},\mathcal{K})$.
\end{observation}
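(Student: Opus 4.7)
The plan is to prove \cref{lem:alg-valid-perm-correctness} by induction on $|\mathcal{R}|$, using \cref{lem:phi-computaion} to justify each branch of \cref{alg:valid-permutaion-of-clique}. Before starting the induction, I would note that the algorithm's recursive calls preserve the precondition of $\Phi$: since $R_1 \subsetneq R_2 \subsetneq \ldots \subsetneq R_l \subsetneq S$, dropping $R_l$ leaves $R_1 \subsetneq \ldots \subsetneq R_{l-1} \subsetneq S$ (valid for the call on $S$) and $R_1 \subsetneq \ldots \subsetneq R_{l-1} \subsetneq R_l$ (valid for the call on $R_l$); the call on $S \setminus R_l$ has empty $\mathcal{R}$ and is trivially admissible. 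Thus all three recursive invocations receive well-formed inputs whose parameter $|\mathcal{R}|$ is strictly smaller than that of the original call, which justifies the induction.

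For the base case, when $\mathcal{R} = \emptyset$, the algorithm enters the block at \cref{alg-vp:R-is-empty} and returns $\frac{|S|!}{|V_{\mathcal{K}}|!} \cdot \Psi(V_{\mathcal{K}}, \mathcal{K})$, which matches \cref{item-3-of-phi-computaion} of \cref{lem:phi-computaion} verbatim. For the inductive step, assuming by induction that every recursive call correctly computes the corresponding $\Phi$ value, I would split into the two cases that \cref{lem:phi-computaion} distinguishes. If there exists $u \rightarrow v \in \mathcal{K}$ with $v \in R_l$ and $u \in S \setminus R_l$, the algorithm detects this at \cref{alg-vp:no-edge-from-Rl} and returns \texttt{sum}, which by the inductive hypothesis equals $\Phi(S, \mathcal{R}-\{R_l\}, \mathcal{K})$; by \cref{item-1-of-phi-computaion} of \cref{lem:phi-computaion}, this is exactly $\Phi(S, \mathcal{R}, \mathcal{K})$. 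Otherwise, the algorithm reaches \cref{alg-vp:final-return} and returns $\texttt{sum} - \Phi(R_l, \mathcal{R}-\{R_l\}, \mathcal{K}[R_l]) \cdot \Phi(S \setminus R_l, \emptyset, \mathcal{K}[S \setminus R_l])$, which matches \cref{item-2-of-phi-computaion} of \cref{lem:phi-computaion}.

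There is no substantive obstacle here: the result is essentially a transcription argument showing that the algorithm's control flow mirrors the case analysis in \cref{lem:phi-computaion}. The only points requiring a little care are (i) verifying that the subset chain precondition is preserved across recursive calls, as noted above, and (ii) ensuring the induction measure $|\mathcal{R}|$ decreases strictly in every recursive call, which it does since each such call either removes $R_l$ or passes $\emptyset$. Combining the base case with the two inductive cases completes the proof.
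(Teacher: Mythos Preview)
Your proposal is correct and follows essentially the same approach as the paper: a case analysis matching each branch of \cref{alg:valid-permutaion-of-clique} to the corresponding item of \cref{lem:phi-computaion}. If anything, you are a bit more careful than the paper, which leaves the inductive structure and the preservation of the chain precondition implicit.
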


We now construct \cref{alg:Count-AMO} that computes $\#$AMO$(G,\mathcal{K})$. \Cref{alg:Count-AMO} evaluates this formula, utilizing memoization to avoid recomputations. 

\begin{algorithm}[t]
  \caption{$\texttt{count}(G, \mathcal{K}, \texttt{memo})$ (modification of an algorithm of
    \cite{wienobst2020polynomial})}
\label{alg:Count-AMO}
\SetAlgoLined
\SetKwInOut{KwIn}{Input}
\SetKwInOut{KwOut}{Output}
\SetKwFunction{Count-AMO}{Count-AMO}

    \KwIn{A UCCG $G$, background knowledge $\mathcal{K}\subseteq E_G$.}
    \KwOut{$\#$AMO$(G,\mathcal{K})$.}
    \SetKwProg{Fn}{function}{}

\If{$G \in \texttt{\textup{memo}}$} {\KwRet $\texttt{memo}[G]$}\label{alg-count:G-in-memory}

$\mathcal{T}=(T,R)\leftarrow  \text{ a rooted clique tree of $G$}$\label{alg-count:clique-tree-construction}

\If{$R=V_G$\label{alg:count:G-is-a-clique}}{

$\texttt{memo}[G]=$\label{alg-count:memo-update}
$\Phi(V,\emptyset,\mathcal{K})$
 
 \KwRet $\texttt{memo}[G]$
 }\label{alg-count:if-G-is-a-clique-end}

 $\texttt{sum} \leftarrow 0$\label{alg-count:sum-initialization}

 $Q\leftarrow \text{queue with single element $R$}$ \label{alg-count:Queue-construction}

 \While{$Q \text{ is not empty \label{alg-count:while-queue-is-not-empty}}$}{
 
 $C\leftarrow pop(Q)$\label{alg-count:clique-popped}\\

 push($Q$, children($C$))\label{alg-count:queue-pushed}\\

 $(\texttt{flag},\mathcal{L})\leftarrow$LBFS$(G,C,\mathcal{K})$\label{alg-count:LBFS}

 \If{$\texttt{\textup{flag}} = 1$ \label{alg-count:flag-is-1}}{

 $\texttt{prod} \leftarrow 1$\label{alg-count:prod-init}

\ForEach{$H\in \mathcal{L}$\label{alg-count:foreach-start}}{

$\texttt{prod} = \texttt{prod} \times \texttt{count}(G[H], \mathcal{K}[H], \texttt{memo})$\label{alg-count:prod-multiplication}
}\label{alg-count:foreach-end}

$\texttt{sum} = \texttt{sum}
+  \texttt{prod} \times  \Phi(C,  \text{FP}(C,\mathcal{T}),\mathcal{K}[C])$\label{alg-count:sum-update} \\
}\label{alg-count:if-end}
}\label{alg-count:while-end}

$\texttt{memo}[G]= \texttt{sum}$\label{alg-count:memory-update}

 \KwRet $\texttt{sum}$\label{alg-count:final-return}
\end{algorithm}

\begin{theorem}\label{lem:main-algorithm-correct}
For a UCCG $G$ and background knowledge $\mathcal{K}$, \cref{alg:Count-AMO} returns $\#$AMO$(G,\mathcal{K})$.
\end{theorem}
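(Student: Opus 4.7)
The plan is to prove the claim by strong induction on the number of vertices $n = |V_G|$ of the UCCG $G$. The base case is when $G$ itself is a single clique, which is detected on \cref{alg:count:G-is-a-clique} by the condition $R = V_G$ (where $\mathcal{T}=(T,R)$ is the rooted clique tree constructed on \cref{alg-count:clique-tree-construction}). In this case $G$ has only one maximal clique, namely $V_G$, there are no vertices outside it, so $FP(V_G,\mathcal{T})=\emptyset$ and the set $\mathcal{C}_G(V_G)$ is empty. Thus \cref{lem:final-counting-algorithm} reduces to $\#\text{AMO}(G,\mathcal{K}) = \Phi(V_G, \emptyset, \mathcal{K})$, which is exactly what the algorithm returns on \cref{alg-count:memo-update} (with correctness of the $\Phi$-computation guaranteed by \cref{lem:alg-valid-perm-correctness}).

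For the inductive step, assume the statement holds for all UCCGs with strictly fewer vertices. The queue-based traversal on \cref{alg-count:while-queue-is-not-empty,alg-count:clique-popped,alg-count:queue-pushed} performs a BFS of the clique tree $\mathcal{T}$ starting from its root $R$; since $\mathcal{T}$ is a tree, this visits every maximal clique $C \in \Pi(G)$ exactly once. For each such $C$, the call to \cref{alg:AMO-Union-K} on \cref{alg-count:LBFS} returns $(\texttt{flag},\mathcal{L})$, which by \cref{lem:output-of-Alg-1} satisfies $\texttt{flag}=1$ iff $G^C$ is $\mathcal{K}$-consistent, and $\mathcal{L} = \mathcal{C}_G(C)$ in either case. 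When $\texttt{flag}=0$, the corresponding summand in \cref{lem:final-counting-algorithm} is zero and is correctly skipped. When $\texttt{flag}=1$, each $H\in \mathcal{L}$ is an undirected connected component of $G[V_G\setminus C]$, hence a UCCG on strictly fewer than $n$ vertices, so by the induction hypothesis the recursive call $\texttt{count}(G[H],\mathcal{K}[H],\texttt{memo})$ returns $\#\text{AMO}(H,\mathcal{K}[H])$. Consequently the product accumulated on \cref{alg-count:prod-multiplication} equals $\prod_{H\in \mathcal{C}_G(C)}\#\text{AMO}(H,\mathcal{K}[H])$, and multiplying by $\Phi(C, FP(C,\mathcal{T}),\mathcal{K}[C])$ on \cref{alg-count:sum-update} yields exactly the $C$-summand in \cref{lem:final-counting-algorithm}.

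Summing these contributions over the cliques $C$ with $\texttt{flag}=1$ and applying \cref{lem:final-counting-algorithm}, we obtain $\texttt{sum} = \#\text{AMO}(G,\mathcal{K})$, which is what is stored in $\texttt{memo}[G]$ on \cref{alg-count:memory-update} and returned. The correctness of memoization is immediate: by the inductive argument above, whenever $\texttt{memo}[G]$ is read on \cref{alg-count:G-in-memory} it has previously been set to the correct value $\#\text{AMO}(G,\mathcal{K})$, so reusing it does not affect correctness.

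The main subtlety, and the step I would flag as the most delicate, is the interaction of the outer traversal with \cref{lem:final-counting-algorithm}: it is crucial that \cref{lem:final-counting-algorithm} is invoked with respect to the \emph{same} rooted clique tree $\mathcal{T}$ that the algorithm constructs on \cref{alg-count:clique-tree-construction}, since the set $FP(C,\mathcal{T})$ (and therefore the canonical representative $C_\alpha$) depends on this choice. The queue-traversal from the root $R$ ensures that every maximal clique of $G$ is examined exactly once against this fixed $\mathcal{T}$, so the summation in the algorithm matches the summation in the lemma term-for-term. All remaining ingredients ($\mathcal{K}$-consistency checking, correctness of $\Phi$, and correctness on chordal components) are supplied by \cref{lem:output-of-Alg-1,lem:alg-valid-perm-correctness} and the induction hypothesis, completing the proof.
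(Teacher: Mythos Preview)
Your proof is correct and follows essentially the same approach as the paper: both argue by walking through \cref{alg:Count-AMO}, handling the clique base case via $\Phi(V_G,\emptyset,\mathcal{K})$, and then invoking \cref{lem:output-of-Alg-1} and \cref{lem:final-counting-algorithm} to show that the while-loop accumulates the correct sum over maximal cliques. Your version is slightly more explicit than the paper's, in that you frame the recursion as strong induction on $|V_G|$ and spell out the memoization and clique-tree-consistency subtleties, but the underlying argument is the same.
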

Proof of \cref{lem:main-algorithm-correct}:
We first fix a
clique tree $\mathcal{T}=(T,R)$ (at line \ref{alg-count:clique-tree-construction}) on which we define $<_{\alpha}$. Lines \ref{alg:count:G-is-a-clique}-\ref{alg-count:if-G-is-a-clique-end} deals with the base case when $G$ is a clique. If $G$ is not a clique, \cref{alg:Count-AMO} follows \cref{lem:final-counting-algorithm}.  The full detail is given in Supplementary Material due to lack of space.

 \section{Time Complexity Analysis}
\label{sec:time-complexity}
In this section, we analyze the run time of \cref{alg:Count-AMO}.  The proof of
the following observation, which shows that despite our modifications,
\cref{alg:AMO-Union-K} still runs in linear time, is given in the Supplementary
Material.

\begin{observation}
\label{thm:LBFS-alg}
For a UCCG $G$, a maximal clique $C$ of $G$, and background knowledge $\mathcal{K}$, \cref{alg:AMO-Union-K} runs in linear time $O(|V_G| +|E_G|)$.
\end{observation}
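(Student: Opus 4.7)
The plan is to start from the fact (already established by \citet{wienobst2020polynomial}) that the unmodified LBFS procedure, on which \cref{alg:AMO-Union-K} is based, runs in time $O(|V_G|+|E_G|)$, and then argue that each of the modifications we introduced contributes at most linear additional overhead. The modifications are exactly three: (i) initializing the auxiliary variables \texttt{flag} and $Y$ on \cref{alg:flag-init}, (ii) appending components of $G[X]$ to $Y$ on \cref{alg:appendY}, and (iii) executing the conditional block on lines \ref{alg:mod-if}--\ref{alg:mod-if-end} that updates \texttt{flag}, and then (iv) returning $(\texttt{flag}, Y)$ on \cref{alg:return}. Items (i) and (iv) are obviously $O(1)$.

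For (ii), I would note that every time a set $X$ is appended to $\mathcal{L}$ (and its undirected components to $Y$), the vertices in $X$ are thereafter marked as ``in some set of $\mathcal{L}$'' and are never reprocessed as fresh sets again; the original Wienöbst--Bannach--Liśkiewicz analysis already charges the cost of enumerating undirected components of the $G[X_i]$'s to the vertices and edges of $G$, so the total work from \cref{alg:appendY} is $O(|V_G|+|E_G|)$.

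The substantive new cost is (iii). Here I would maintain, alongside the data structures already used by the base LBFS, a boolean array $\texttt{settled}[\cdot]$ with $\texttt{settled}[u]=1$ iff $u\in C$ or $u$ lies in some set already appended to $\mathcal{L}$. Initialization of $\texttt{settled}[\cdot]$ for $u\in C$ costs $O(|C|) = O(|V_G|)$ work at the start, and every subsequent update to $\texttt{settled}$ occurs exactly when a set $X$ is pushed onto $\mathcal{L}$; since the sets pushed onto $\mathcal{L}$ are disjoint subsets of $V_G \setminus C$, all these updates together cost $O(|V_G|)$. With this array in hand, executing the check on \cref{alg:mod-if} for a given $v$ amounts to scanning the incoming $\mathcal{K}$-edges at $v$ and testing $\texttt{settled}[u]=0$ for each such $u$. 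Preprocessing $\mathcal{K}$ into adjacency lists keyed by the head vertex takes $O(|V_G|+|\mathcal{K}|)$ time, and the sum of in-degrees over all $v$ is exactly $|\mathcal{K}|\le |E_G|$, since $\mathcal{K}\subseteq E_G$.

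Combining the above, the modifications add $O(|V_G|+|E_G|)$ overhead on top of the linear-time base LBFS, yielding the claimed bound $O(|V_G|+|E_G|)$. The main subtlety I anticipate is simply ensuring that the $\texttt{settled}$ array is maintained in lock-step with the existing LBFS bookkeeping so that the check at \cref{alg:mod-if} really is $O(1)$ per $\mathcal{K}$-edge examined; once that is spelled out, the amortization over $v$'s and their incoming $\mathcal{K}$-edges is routine.
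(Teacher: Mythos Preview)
Your proposal is correct and essentially mirrors the paper's own proof: both start from the linear-time base LBFS of \citet{wienobst2020polynomial} and argue that the new $\mathcal{K}$-consistency check at \cref{alg:mod-if} costs $O(|\mathcal{K}|)\le O(|E_G|)$ cumulatively by maintaining boolean membership arrays for $C$ and $\mathcal{L}$ (the paper uses two arrays, you combine them into one \texttt{settled} array) and amortizing over incoming $\mathcal{K}$-edges. The only minor inaccuracy is that your item~(ii) (appending components of $G[X]$ to $Y$) is not a modification introduced here but was already part of the Wien\"obst--Bannach--Li\'skiewicz algorithm, which is why, as you yourself note, its cost is already covered by their original analysis.
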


Similar to Wien{\"o}bst et al.'s $\texttt{count}$ function, our $\texttt{count}$ function
(\cref{alg:Count-AMO}) is also recursively called at most $2|\Pi(G)|-1|$ times,
where $\Pi(G)$ is the set of maximal cliques of $G$.  Our approach to compute
the background aware version of $\Phi$ (\cref{alg:valid-permutaion-of-clique})
is similar to that of \citet{wienobst2020polynomial}, and the difference in time
complexity comes from the high time complexity of computation of
$\Phi(S,\emptyset,\mathcal{K})$ at \cref{item-3-of-phi-computaion} (it is $O(1)$
for $\mathcal{K}=\emptyset$, which is the setting considered by
\citet{wienobst2020polynomial}).  Proof of the claims below can be found in the
Supplementary Material.

\begin{proposition}
\label{prop:number-of-times-alg-2-called}
Let $G$ be a UCCG, and $\mathcal{K}$ be the known background knowledge about $G$. The number of distinct UCCG explored by the \texttt{count} function (as defined in \cref{alg:Count-AMO}) is bounded by $2|\Pi(G)|-1$.
\end{proposition}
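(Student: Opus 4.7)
The plan is to show that the set of UCCGs on which \cref{alg:Count-AMO} recursively invokes \texttt{count} coincides exactly with the set explored by the background-free algorithm of \citet{wienobst2020polynomial}, and then invoke the corresponding counting bound from their work.

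First, I would verify that the recursion structure does not depend on the background knowledge $\mathcal{K}$. By inspection of \cref{alg:Count-AMO}, recursive calls to \texttt{count} occur only at \cref{alg-count:prod-multiplication} on arguments of the form $(G[H], \mathcal{K}[H], \texttt{memo})$, where $H$ ranges over the second output $\mathcal{L}$ of LBFS$(G, C, \mathcal{K})$. By \cref{lem:output-of-Alg-1}, this second output is always $\mathcal{C}_G(C)$, regardless of $\mathcal{K}$ and regardless of whether the returned flag is $0$ or $1$. Moreover, the outer BFS loop traverses all maximal cliques $C$ of the clique tree $\mathcal{T}$ of $G$, again independent of $\mathcal{K}$. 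Hence the set of UCCGs appearing in the recursion tree of $\texttt{count}(G, \mathcal{K}, \cdot)$ is exactly the set appearing in the recursion of the analogous function of Wien\"obst et al.\ applied to $G$ without background knowledge.

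Second, the memoization table \texttt{memo} is keyed on the UCCG argument alone (\cref{alg-count:G-in-memory,alg-count:memory-update}), so each distinct UCCG is processed by the body of \texttt{count} at most once. Therefore, the number of distinct UCCGs explored equals the number of distinct subgraphs arising in the recursion tree, which by the previous paragraph is identical for our algorithm and that of Wien\"obst et al.

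Third, I would invoke the corresponding counting bound of \citet{wienobst2020polynomial}, which shows that for any UCCG $G$, the number of distinct UCCGs arising in this recursion is at most $2|\Pi(G)| - 1$. At a high level, their argument establishes a structural correspondence between recursive subproblems and suitable subtrees of a fixed clique tree $\mathcal{T}$ of $G$: each distinct UCCG arising in the recursion corresponds either to $G$ itself or to a ``hanging subtree'' obtained by rooting $\mathcal{T}$ at some clique $C$ and selecting one of the neighboring subtrees. Since $\mathcal{T}$ has $|\Pi(G)| - 1$ edges, each contributing at most two such subtrees (one for each side), the total count is bounded by $1 + 2(|\Pi(G)| - 1) = 2|\Pi(G)| - 1$.

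The main obstacle, were one to give a fully self-contained proof, would be the structural lemma showing that the subgraphs produced by the LBFS refinement and subsequent recursion stay within this enumerable set of hanging subtrees rather than proliferating to arbitrary connected subtrees of $\mathcal{T}$. This lemma relies on the running-intersection property of clique trees of chordal graphs together with the specific refinement behaviour of LBFS (in particular, the fact that LBFS generally produces a finer partition than merely the connected components of $G[V_G \setminus C]$). Since our recursion structure coincides with that of Wien\"obst et al., we inherit this structural analysis in full, and the bound follows.
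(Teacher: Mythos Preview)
Your overall approach matches the paper's: reduce to the background-free case and invoke the $2|\Pi(G)|-1$ bound of \citet{wienobst2020polynomial}. However, there is one inaccuracy in your first step. You claim that the set of UCCGs explored by $\texttt{count}(G,\mathcal{K},\cdot)$ is \emph{exactly} the set explored in the $\mathcal{K}=\emptyset$ case. This is not correct: the recursive calls at \cref{alg-count:prod-multiplication} lie inside the guard ``\textbf{if} $\texttt{flag}=1$'' (\cref{alg-count:flag-is-1}), so whenever $G^C$ is not $\mathcal{K}$-consistent the calls for that $C$ are skipped entirely. Thus the set of UCCGs explored with nonempty $\mathcal{K}$ is in general only a \emph{subset} of those explored with $\mathcal{K}=\emptyset$. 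The paper's proof makes exactly this point, arguing that $\mathcal{K}=\emptyset$ yields the \emph{maximum} number of distinct recursive calls (since then $\texttt{flag}$ is always $1$), which is what one needs for the upper bound. Your conclusion still holds once you replace ``exactly the set'' by ``a subset of the set''; the remainder of your argument is fine.
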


\begin{lemma}
\label{lem:time-complexity-of-Phi}
For input $S$, $\mathcal{R}=\{R_1,R_2,\ldots, R_l\}$, and $\mathcal{K}$,
\cref{alg:valid-permutaion-of-clique} can be implemented using memoization to
use $O(k! \cdot k^2 \cdot |\Pi(G)|^2)$ arithmetic operations, where $k$ is the
max-clique knowledge of $\mathcal{K}$ (assuming factorials of integers from $1$
to $|V_G|$ are available for free).
\end{lemma}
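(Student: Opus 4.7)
The plan is to bound the total arithmetic cost of \cref{alg:valid-permutaion-of-clique} by separately (i) counting the distinct memoized subproblems it produces, and (ii) bounding the arithmetic work per subproblem; the product then yields the claimed bound.

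For the first step, I would track how the arguments $(S,\mathcal{R})$ evolve under the three recursive calls. The recursion either (a) removes the last element of $\mathcal{R}$ while keeping $S$ unchanged, (b) replaces $S$ by $R_l$ while deleting $R_l$ from $\mathcal{R}$, or (c) enters the base case with first argument $S\setminus R_l$ and empty $\mathcal{R}$. Starting from the top-level call Valid-Perm$(C,FP(C,\mathcal{T}),\mathcal{K}[C])$ and arguing by induction on recursion depth, I would show that in every non-base call $S$ belongs to the chain $\{C\}\cup FP(C,\mathcal{T})$, which by \cref{lem:elements-of-FP-is-arranged-as subsets} has at most $|\Pi(G)|$ elements, and that $\mathcal{R}$ is always a prefix $\{R_1,\ldots,R_j\}$ of the original $FP(C,\mathcal{T})$. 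Hence there are at most $O(|\Pi(G)|^2)$ distinct non-base subproblems. Each non-base subproblem that reaches the third recursive call spawns at most one base subproblem, and over all pairs $(S,R_l)$ with $S$ a chain element and $R_l\in FP(C,\mathcal{T})$, there are at most $O(|\Pi(G)|^2)$ distinct base subproblems as well.

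For the second step, each non-base call performs only $O(1)$ arithmetic operations (one multiplication, one subtraction, three memo look-ups). A base call $\Phi(S',\emptyset,\mathcal{K}')$ evaluates $\tfrac{|S'|!}{|V_{\mathcal{K}'}|!}\cdot\Psi(V_{\mathcal{K}'},\mathcal{K}')$; since factorials are assumed free, the only nontrivial cost is $\Psi$. I would compute $\Psi(V_{\mathcal{K}'},\mathcal{K}')$ by enumerating all $|V_{\mathcal{K}'}|!$ permutations of $V_{\mathcal{K}'}$ and checking $\mathcal{K}'$-consistency in $O(|V_{\mathcal{K}'}|^2)$ comparisons each. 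The key observation is that whenever a base call is reached during the execution of $\texttt{count}$ on some UCCG, $V_{\mathcal{K}'}$ is contained in the vertex set of a single maximal clique of the original graph, so by the definition of max-clique-knowledge $k$ we have $|V_{\mathcal{K}'}|\le k$. Thus each base subproblem costs $O(k!\cdot k^2)$ arithmetic operations.

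Combining, the non-base subproblems contribute $O(|\Pi(G)|^2)$ arithmetic operations in total and the base subproblems contribute $O(k!\cdot k^2\cdot|\Pi(G)|^2)$; the latter dominates, giving the claimed bound. I expect the main obstacle to be the careful verification that every recursively generated first argument $S$ remains a subset of the original maximal clique $C$ on which the outer $\Phi$ was called, so that $|V_{\mathcal{K}[S]}|\le k$ holds uniformly, both for the chain elements and for the set-differences $S\setminus R_l$ that appear in base calls; the rest is bookkeeping on the chain structure guaranteed by \cref{lem:elements-of-FP-is-arranged-as subsets}.
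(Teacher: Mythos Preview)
Your proposal is correct and follows essentially the same approach as the paper's proof: both enumerate the distinct memoized subproblems (the paper lists them explicitly as $\Phi(R_i,\mathcal{R}_j,\mathcal{K}[R_i])$ and $\Phi(R_i\setminus R_j,\emptyset,\mathcal{K}[R_i\setminus R_j])$ for $j<i$, while you derive the same set by tracking the recursion), bound their number by $O(l^2)\le O(|\Pi(G)|^2)$, and cost each base case at $O(k!\cdot k^2)$ via brute-force enumeration of permutations of $V_{\mathcal{K}'}$ with $|V_{\mathcal{K}'}|\le k$. The anticipated obstacle you flag---that every first argument stays inside the original clique $C$---is exactly what makes the $|V_{\mathcal{K}'}|\le k$ bound go through, and the paper handles it in the same way.
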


\begin{theorem}[Final runtime bound of \cref{alg:Count-AMO}]
\label{thm:ACMO-Counting-BK}
For a UCCG $G$, and background knowledge $\mathcal{K}$, \cref{alg:Count-AMO} runs in time $O(k!k^2n^4)$, more precisely $O({k!k^2 \cdot |\Pi(G)|}^4)$, where $n$ is the number of nodes in $G$, and $k$ is the max-clique knowledge of $\mathcal{K}$.
\end{theorem}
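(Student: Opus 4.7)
The plan is to combine \cref{prop:number-of-times-alg-2-called}, \cref{thm:LBFS-alg}, and \cref{lem:time-complexity-of-Phi} by simply multiplying three factors: the number of distinct recursive invocations of \texttt{count}, the number of iterations of the inner \texttt{while} loop per invocation, and the dominant per-iteration cost. Thanks to memoization, the first factor is bounded.

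First, I would count the distinct subproblems. By \cref{prop:number-of-times-alg-2-called}, the memoized \texttt{count} function is executed on at most $2|\Pi(G)| - 1 = O(|\Pi(G)|)$ distinct UCCGs; all other calls are served directly from \texttt{memo} at \cref{alg-count:G-in-memory} in $O(1)$ time and contribute only the cost of the lookup (or equivalently, the single multiplication at \cref{alg-count:prod-multiplication} that consumes the returned value, which I account for below).

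Next, I would isolate the per-invocation work, \emph{excluding} the cost of the sub-invocations themselves (those are counted separately above). Inside one execution of \texttt{count}$(G, \mathcal{K}, \texttt{memo})$ the main work sits in the \texttt{while} loop over the BFS queue $Q$, which pops each maximal clique of $G$ exactly once, giving at most $|\Pi(G)|$ iterations. Each iteration does three things: (i) a call to LBFS at \cref{alg-count:LBFS}, which by \cref{thm:LBFS-alg} runs in linear time $O(|V_G|+|E_G|) = O(n^2)$; (ii) a product loop over $H \in \mathcal{L}$ at \cref{alg-count:foreach-start}--\cref{alg-count:foreach-end}, which performs at most $|\mathcal{L}| = O(n)$ arithmetic multiplications (each multiplicand is obtained by a \texttt{count} sub-invocation whose internal cost is charged elsewhere); and (iii) a call to Valid-Perm to compute $\Phi(C, \mathrm{FP}(C,\mathcal{T}), \mathcal{K}[C])$, which by \cref{lem:time-complexity-of-Phi} uses $O(k! \cdot k^2 \cdot |\Pi(G)|^2)$ arithmetic operations. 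Term (iii) dominates, so one invocation of \texttt{count} costs
\begin{equation*}
O(|\Pi(G)|) \cdot O(k! \cdot k^2 \cdot |\Pi(G)|^2) = O(k! \cdot k^2 \cdot |\Pi(G)|^3)
\end{equation*}
arithmetic operations, plus lower-order terms from the clique tree construction at \cref{alg-count:clique-tree-construction} (polynomial) and the LBFS calls.

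Finally I would multiply the two bounds: at most $O(|\Pi(G)|)$ invocations, each of cost $O(k! \cdot k^2 \cdot |\Pi(G)|^3)$, giving the refined bound $O(k! \cdot k^2 \cdot |\Pi(G)|^4)$ claimed in the theorem. Since every maximal clique is a subset of $V_G$ (and more crudely $|\Pi(G)| \le n$ for any chordal graph), this is $O(k! \cdot k^2 \cdot n^4)$. The main things to be careful about are (a) not double-counting: the recursive \texttt{count} calls must be charged once under ``number of invocations'' and must not be re-absorbed into the per-iteration cost, which is why I treat the product loop as $O(n)$ multiplications of already-computed values; and (b) confirming that the precondition $R_1 \subsetneq \cdots \subsetneq R_l \subsetneq S$ required by \cref{lem:time-complexity-of-Phi} holds for every invocation of Valid-Perm inside \texttt{count}, which is guaranteed at the top level by \cref{lem:elements-of-FP-is-arranged-as subsets} and preserved by the recursion in \cref{lem:phi-computaion}. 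No other step presents a substantive obstacle.
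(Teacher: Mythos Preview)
Your proposal is correct and follows essentially the same approach as the paper: bound the number of distinct memoized invocations by $O(|\Pi(G)|)$ via \cref{prop:number-of-times-alg-2-called}, bound the per-invocation cost by $O(|\Pi(G)|)$ iterations of the \texttt{while} loop each dominated by the $O(k!\cdot k^2\cdot|\Pi(G)|^2)$ cost of the $\Phi$-computation from \cref{lem:time-complexity-of-Phi}, and multiply. Your write-up is in fact somewhat more careful than the paper's in explicitly separating the cost of recursive sub-calls from the product loop and in noting that the chain-of-subsets precondition of \cref{lem:time-complexity-of-Phi} is guaranteed by \cref{lem:elements-of-FP-is-arranged-as subsets}.
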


\begin{proof}[Proof of \cref{thm:main-result}]
  Together, \cref{lem:main-algorithm-correct,thm:ACMO-Counting-BK} prove our
  main result, \cref{thm:main-result}.
\end{proof}
 \section{Experimental evaluation}
\label{sec:experimental_evaluation}
In this section, we evaluate the performance of \cref{alg:Count-AMO} on a synthetic dataset. For each $n\in \{500, 510, 520, \ldots, 1000\}$, we construct 50 random chordal graphs with $n$ nodes, and for each
$k \in \{5,6,\ldots, 13\}$, we construct a set of background knowledge edges with $k$ as its max clique knowledge value. We then
measure the running time of \cref{alg:Count-AMO} for each of these (graph,
background knowledge) pairs, and take the mean running time over all such pairs
with the same value of $n$ and $k$.  Further details about the construction of
these instances can be found in the Supplementary Material.

\paragraph{Validating the run-time bound} To validate the $O(k! k^2 n^4)$
run-time bound established in \cref{sec:time-complexity}, we draw log-log plots
of the mean run-time $T$ against the size $n$ of the graph, for each fixed value
of $k$ (\cref{fig:comp-plot-k-vs-logT-5-13}).  As predicted by the polynomial
(in $n$) run-time bound in our theoretical result, we get, for each value of the
parameter $k$, a roughly linear log-log plot.

\begin{figure}[t]
  \centering   \includegraphics[width=0.48\textwidth]{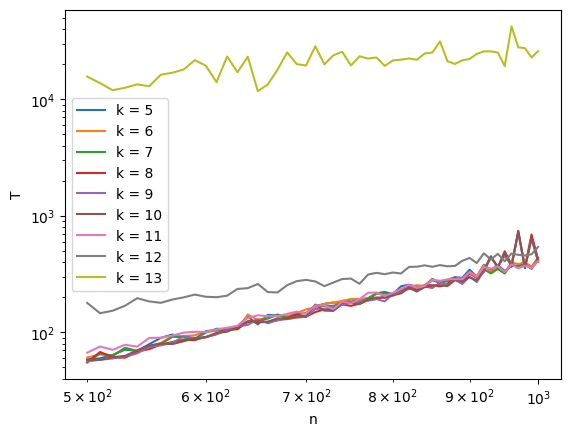}
  \caption{$\log$ vs. $\log$ plot of $T$ vs. $n$ \label{fig:comp-plot-k-vs-logT-5-13}}
\end{figure}

\paragraph{The intercept of the log-log plot} While the plots in
\cref{fig:comp-plot-k-vs-logT-5-13} for $k\in \{5, 6, 7, 8, 9, 10, 11\}$ are
quite close to each other, the separation of the plots for $k=12$ and $k=13$ is
much larger. The reason behind it is the actual time complexity of
\cref{alg:Count-AMO}  can roughly be
bounded as $\log T \leq \log a + \log (k!\cdot k^2 + b) + 4\log n$, where $a$ and $b$ are constants independent of $n$ and $k$.
The above observation then shows that until about $k \approxeq 11$, all the plots
have intercepts close to each other, as the contribution of terms involving $a,b$ and $n$ dominates that of $k!\cdot k^2$ for
small values of $k$.

\paragraph{Effect of the size of the background knowledge} An important feature
of our analysis of \cref{alg:Count-AMO} is that its run-time bound does not
depend directly upon the actual size of the background knowledge.  To validate
this, we conduct the following experiment: we fix a chordal graph of size $n$
and the max-clique knowledge value $k$, and then construct two different sets
$\mathcal{K}_1$ and $\mathcal{K}_2$ of background knowledge edges, of different
sizes such that both have the same $k$ value (the details of the construction
are given in the Supplementary Material).  In \cref{tab:1}, $T_1$, $T_2$ are the
running times of \cref{alg:Count-AMO} with background knowledge $\mathcal{K}_1$
and $\mathcal{K}_2$ respectively.
\begin{table}[h]
  \centering 
  \begin{tabular}{ |c|c|c|c|c|c| }
    \hline
    $n$ & $k$ &$|\mathcal{K}_1|$ & $|\mathcal{K}_2|$ & $T_1$ & $T_2$ \\ 
    \hline
    1000 & 10 & 65 & 116 & 370 & 353 \\  
    \hline
    1000 & 10 & 75 & 137 & 346 & 338 \\ 
    \hline
    1100 & 11 & 55 & 121 & 467 & 455 \\ 
    \hline
    1100 & 11 & 51 & 104 & 460 & 453 \\
    \hline
  \end{tabular}
  \caption{Exploring runtime dependence on the number of background knowledge
    edges~\label{tab:1}}
\end{table}
The table confirms the expectation that when the graph and $k$ are fixed, the
running time does not increase much when the size of the background-knowledge
increases.  More detailed data and discussion of this phenomenon are given in
the Supplementary Material.

 \section{Conclusion}
\label{sec:conclusion-open-problems}

Our main result shows that the max-clique-knowledge parameter we introduce plays
an important role in the algorithmic complexity of counting Markov equivalent
DAGs under background knowledge constraint. In particular, it leads to a
polynomial time algorithm in the special case of graphs of bounded
maximum-clique size. Note that an algorithm that runs in polynomial time in the
general case is precluded by the $\#$P-hardness result of
\cite{wienobst2020polynomial} (unless P = NP). However, the optimal dependence
of the run time on the max-clique-knowledge parameter is an interesting open
problem left open by our work.

\section{Acknowledgment}
We acknowledge the support from the Department of Atomic Energy, Government of
India, under project no. RTI4001.  We want to thank Piyush Srivastava for his
invaluable suggestions, discussions, and help in the completion of this
paper. We thank all the anonymous reviewers for multiple useful suggestions which helped
in improving the presentation of this paper.

\clearpage
\appendix
\section*{Supplementary Material}
\section{Graph terminology}
\label{app:graph-terminology}
\textbf{Graphs and graph unions.} 
We mostly follow the graph theory terminology
of \cite{andersson1997characterization}.
A graph is \emph{undirected} if all its edges are
undirected, \emph{directed} if all its edges are \emph{directed}, and
\emph{partially directed} if it contains both directed and undirected edges. A
directed graph that has no directed cycle is called a \emph{directed acyclic
  graph} (DAG). A generalization of this idea is that of \emph{chain graphs}: a
partially directed graph is called a chain graph if it has no cycle which
contains (i) at least one directed edge, and (ii) in which all directed edges
are directed in the same direction as one moves  along the cycle. We denote the
neighbors of a vertex $v$ in a graph $G$ as $N_G(v)$, and an induced subgraph of
$G$ on a set $X\subseteq V$ is denoted as $G[X]$. 
The
\emph{graph union} (which we just call ``union'') $G_1\cup G_2$ includes
vertices and edges present in any one of $G_1$ or $G_2$, i.e., $V_{G_1\cup G_2} = V_{G_1} \cup V_{G_2}$, and $E_{G_1\cup G_2} = E_{G_1} \cup E_{G_2}$.  The \emph{skeleton} of
a partially directed graph $G$ is an undirected version of $G$ which we get by
ignoring the direction of all the edges: in particular, note that the {skeleton} of
a partially directed graph is also the graph union of \emph{all} the partially
directed graphs with the same skeleton as $G$.  A \emph{v-structure} (or
\emph{unshielded collider}) in a partially directed graph $G$ is an ordered
triple of vertices $(a,b,c)$ of $G$ which induce the subgraph
$a\rightarrow b \leftarrow c$ in $G$.

\textbf{Cliques, separators, chordal graphs, and UCCG.} A \emph{clique} is a set of pairwise adjacent vertices for a graph. For a graph $G$,  $u$ and $v$ are said to be pairwise adjacent to each other if $(u,v),(v,u) \in E_G$, i.e, $u-v\in E_G$.
For an undirected graph $G$, a set $S\subset V_G$ is an $x$-$y$ \emph{separator} for two non-adjacent vertices $x$ and $y$ if $x$ and $y$ are in two different undirected connected components of $G[V_G\setminus S]$.  $S$ is said to be a \emph{minimal $x$-$y$ separator} if no proper subset of $S$ separates $x$ and $y$.  A set $S$ is said to be a \emph{minimal vertex separator} if there exist vertices $x$ and $y$ for which $S$ is a minimal $x$-$y$ separator.\footnote{\citet{wienobst2020polynomial} refer to these objects as \emph{minimal separators}, but we follow here the terminology of \cite[Section 2.2]{blair1993introduction} for consistency.} An undirected graph $G$ is \emph{chordal} if, for any cycle of length $4$ or more of $G$, there exist two non-adjacent vertices of the cycle which are adjacent in
$G$.  We refer to an undirected connected chordal graph by the abbreviation
\emph{UCCG}.

\textbf{Clique trees.} A \emph{rooted clique tree} of a UCCG $G$ is a
tuple $\mathcal{T}=(T, R)$, where $T$ is a rooted tree (rooted at the node $R$) whose nodes are the maximal cliques of $G$, and which is such that the set $\{C: v\in C\}$ is connected in $T$, for all $v\in V_G$.
Clique trees satisfy the important \emph{clique-intersection property}: if $C_1,C_2,C\in V_T$ and $C$ is on the (unique) path between $C_1$ and $C_2$ in the tree $T$, then $C_1\cap C_2 \subset C$~(see, e.g., \citet{blair1993introduction}).  Further, a set $S\subset V_G$ is a minimal vertex separator in $G$ if, and only if, there are two adjacent nodes $C_1, C_2\in V_T$ such that $C_1 \cap C_2 = S$ \citep[Theorem 4.3]{blair1993introduction}.  A clique tree for a UCCG $G$ can be constructed in polynomial time.  For more details on the above results on chordal graphs and clique trees, we refer to the survey of \citet{blair1993introduction}. \section{Proofs omitted from Section~\ref{sec:main-result}}
\label{sec:app-main-result}
\begin{observation}
\label{obs:if-G-is-not-K-consistent}
If $G$ is not $\mathcal{K}$-consistent then $\#AMO(G,\mathcal{K})=0$. 
\end{observation}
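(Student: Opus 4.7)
The plan is to unpack both hypotheses directly from the definitions and observe that everything follows almost immediately. Recall the definition in Section~\ref{def:background-knowledge-and-clique-knowledge}: a graph $H$ is $\mathcal{K}$-consistent if for every $u\to v\in\mathcal{K}$ we have $v\to u\notin E_H$. So the hypothesis that $G$ is not $\mathcal{K}$-consistent unpacks to: there exists an edge $u\to v\in\mathcal{K}$ such that $v\to u\in E_G$ as a directed edge of $G$.

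The key ingredient is then the definition of an AMO (given in Section~\ref{sec:preliminaries}): an AMO of $G$ is an orientation of $G$ obtained by assigning a direction only to the \emph{undirected} edges of $G$, while leaving the already-directed edges of $G$ unchanged. Consequently, every AMO $\alpha$ of $G$ must contain every directed edge of $G$, in the same direction. In particular, from the previous paragraph, every AMO $\alpha\in\mathrm{AMO}(G)$ contains the directed edge $v\to u$.

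But then for the specific edge $u\to v\in\mathcal{K}$ witnessing the non-$\mathcal{K}$-consistency of $G$, we have $v\to u\in E_\alpha$ for every $\alpha\in\mathrm{AMO}(G)$, which by the definition of $\mathcal{K}$-consistency means that no AMO of $G$ is $\mathcal{K}$-consistent. Hence $\mathrm{AMO}(G,\mathcal{K})=\emptyset$, giving $\#\mathrm{AMO}(G,\mathcal{K})=0$ as claimed.

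There is no genuine obstacle in this argument: the statement is essentially a sanity-check lemma verifying that the notion of $\mathcal{K}$-consistency is monotonically propagated from the MEC $G$ down to each of its AMOs. The only thing to be careful about is invoking the correct clause of the AMO definition, namely that the directed edges of $G$ are preserved in every orientation counted by $\#\mathrm{AMO}(G,\mathcal{K})$; there is no need to appeal to the deeper characterisation of MECs via graph unions, nor to the clique-tree machinery developed later in the paper. (Note also that in the UCCG setting of Problem~\ref{prob:countingAMOofUCCG}, $G$ has no directed edges at all, so the hypothesis is never satisfied and the observation is vacuous there; the content of the observation is entirely in the general partially directed case of Problem~\ref{prob:countingAMO}.)
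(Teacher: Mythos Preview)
Your proof is correct and follows essentially the same route as the paper's own argument: both unpack the failure of $\mathcal{K}$-consistency to obtain a directed edge $v\to u\in E_G$ with $u\to v\in\mathcal{K}$, then use the fact that every AMO preserves the directed edges of $G$ to conclude that no AMO can be $\mathcal{K}$-consistent. Your write-up is simply more expansive, adding the (correct and helpful) remark that the observation is vacuous in the UCCG setting.
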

\begin{proof}
   If $G$ is not $\mathcal{K}$-consistent then  there exists an edge $u\rightarrow v \in \mathcal{K}$ such that $v\rightarrow u \in E_G$, and all the AMOs of $G$ have the edge $v\rightarrow u$. This shows that no AMO of $G$ is $\mathcal{K}$-consistent.
\end{proof}
We can verify in polynomial time that $G$ is $\mathcal{K}$-consistent or not, by checking the existence of an edge $u\rightarrow v \in \mathcal{K}$ for which $v\rightarrow u$ is a directed edge in $G$. This is why for further discussion we assume that $G$ is $\mathcal{K}$-consistent.

\cref{lem:Counting-AMO-reduction} is a direct consequence of the following
lemma.

\begin{lemma}
\label{lem:AMO-reduction}
Let $G$ be an MEC consistent with a given background knowledge $\mathcal{K}$, and let $G_d$ be the directed subgraph of $G$. Then, $\alpha\in AMO(G,\mathcal{K})$ if, and only if, (i) for each undirected chordal component $H$ of $G$, $\alpha[V_H]$ is a $\mathcal{K}$-consistent AMO of $H$ and (ii) $\alpha$ is a union of $G_d$ and $\bigcup_{H}{\alpha[V_H]}$.
\end{lemma}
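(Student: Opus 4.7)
The plan is to prove the two directions of the biconditional separately. In both directions I will use the basic structural fact that, because $G$ is an MEC represented as a chain graph with chordal undirected components, every undirected edge of $G$ lies entirely inside some chordal component $H$, while every edge of $G_d$ connects two distinct chain components. Thus the decomposition $\alpha = G_d \cup \bigcup_H \alpha[V_H]$ in (ii) is really just a partition of the edges of $\alpha$ by which part of $G$ the corresponding edge of $G$ belongs to, so (ii) is essentially bookkeeping and will fall out once (i) is verified.

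For the forward direction, suppose $\alpha \in \mathrm{AMO}(G,\mathcal{K})$. Since $\alpha$ is an orientation of $G$, every undirected edge of $G$ becomes directed in $\alpha$, and in particular $\alpha[V_H]$ is an orientation of $H$. It has no directed cycle (any cycle in $\alpha[V_H]$ is a cycle in $\alpha$) and I claim no v-structure: if $a\to b\leftarrow c$ with $a,b,c\in V_H$ and $a,c$ non-adjacent were a v-structure in $\alpha[V_H]$, then it would be a v-structure in $\alpha$, hence in $G$ (AMOs preserve v-structures), but v-structures require the two edges to be directed in $G$, contradicting the fact that both edges lie inside the undirected component $H$. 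So $\alpha[V_H]$ is an AMO of $H$, and $\mathcal{K}$-consistency transfers immediately from $\alpha$ to $\alpha[V_H]$ on the edges of $\mathcal{K}$ with both endpoints in $V_H$. Condition (ii) is then immediate from the above edge partition.

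For the backward direction, assume (i) and (ii). The skeleton of $\alpha$ equals that of $G$ by construction, and every undirected edge of $G$ has been oriented (inside some $\alpha[V_H]$). For $\mathcal{K}$-consistency: given $u\to v\in\mathcal{K}$, the edge $u\text{-}v$ is either in $G_d$ (and then already oriented correctly because $G$ is $\mathcal{K}$-consistent), or it lies inside a single $H$, in which case $\mathcal{K}$-consistency of $\alpha[V_H]$ handles it. Acyclicity is the next step: using the fact that the chain components of $G$ form a DAG under the partial order induced by $G_d$ (directed edges of $G$ go between distinct chain components and never come back), any hypothetical directed cycle in $\alpha$ projects to a closed walk in this DAG of chain components, and hence must stay inside a single chain component $H$; but then the cycle is a cycle of $\alpha[V_H]$, contradicting that $\alpha[V_H]$ is an AMO of $H$.

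The most delicate step is the v-structure preservation. If $(a,b,c)$ is a v-structure of $G$ then both edges $a\to b$ and $c\to b$ lie in $G_d$ and so remain in $\alpha$, and $a,c$ remain non-adjacent, so it is a v-structure of $\alpha$. Conversely, suppose $a\to b\leftarrow c$ is a v-structure of $\alpha$ with $a,c$ non-adjacent in $\alpha$ (hence in $G$). By (ii) each of the two edges is either in $G_d$ or in some $\alpha[V_H]$. If both are in $\alpha[V_H]$ for the same $H$, then $\alpha[V_H]$ has a v-structure, contradicting it being an AMO of the undirected graph $H$ (as shown in the forward direction, AMOs of $H$ have no v-structures). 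If both are in $G_d$, then $(a,b,c)$ is already a v-structure of $G$. The mixed case, say $a\to b\in G_d$ and $c\to b$ arising from an undirected edge $c\text{-}b$ inside some $H$, is ruled out by the essential graph property: in a CPDAG, whenever $a\to b$ is directed and $b\text{-}c$ is undirected with $a,c$ non-adjacent, Meek's rule forces $b\to c$ to be directed as well, contradicting that $b\text{-}c$ is undirected in $G$. Hence $a,c$ must be adjacent in $G$, again a contradiction. This exhausts the cases, completing the proof. The hardest part is this invocation of the essential graph closure property to rule out ``mixed'' v-structures; everything else is straightforward case analysis or an easy acyclicity argument.
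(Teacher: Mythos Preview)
Your proof is correct and complete, but it takes a noticeably different route from the paper's. The paper's proof is essentially a two-line reduction: it invokes the structural result of \citet{andersson1997characterization} that the AMOs of an MEC $G$ are exactly the graphs obtained by combining $G_d$ with an independently chosen AMO of each chordal component, and then observes that $\mathcal{K}$-consistency factors over this decomposition because every background-knowledge edge is either already directed in $G_d$ (and hence consistent by hypothesis) or lies inside a single component $H$. All the hard work---acyclicity across components and v-structure preservation---is outsourced to the cited characterization.

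You instead reprove that structural result from first principles, handling acyclicity via the chain-component DAG and handling v-structure preservation by a case analysis that ultimately appeals to Meek's rule R1 (the essential-graph closure property) to rule out the ``mixed'' case where one incoming edge is compelled and the other is obtained by orienting an undirected edge. This is a genuine argument, and it makes the lemma self-contained rather than dependent on a nontrivial external characterization. The trade-off is length: the paper gets a short proof by citation, while yours is longer but exposes exactly which property of essential graphs is doing the work (closure under Meek's rules). Both are valid; the paper's is the standard move in this literature.
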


\begin{proof}
If $G$ is $\mathcal{K}$-consistent then each directed edge of $G$ is $\mathcal{K}$-consistent, i.e., $G_d$ is $\mathcal{K}$-consistent.
  \cite{andersson1997characterization} show that for an MEC $G$, an AMO of $G$ can be constructed by choosing an AMO from each one of the chordal components of  $G$ and taking the union of the directed subgraph of $G$ and the chosen AMOs of  the chordal components.  For every undirected connected chordal component $H$  of $G$, let us pick a $\mathcal{K}$-consistent AMO  of $H$. Then, the union of all the picked AMOs and $G_d$ is a $\mathcal{K}$-consistent AMO of  $G$.  Also if $\alpha$ is an AMO of $G$ then for all undirected connected chordal components $H$ of $G$, $\alpha[V_H]$ is $\mathcal{K}$-consistent. And $\alpha$ must be a union of $G_d$ and the union of all the $\alpha[V_H]$.  This proves \cref{lem:AMO-reduction}.
\end{proof}

 \section{Definition of $C_{\alpha}$}
\label{supplementary-sec:definition-C-alpha}
Here, we define $C_{\alpha}$.  We start with defining a few preliminary terms  flower, $<_{\mathcal{T}}$, and $<_{\alpha}$ that we use to define $C_{\alpha}$.

\begin{definition}[Flowers and bouquets, \cite{wienobst2020polynomial},
  Definition 4]
\label{def:flower}
Let $G$ be a UCCG. An \emph{$S$-flower} for a minimal vertex separator $S$ of
$G$ is a maximal subset $F$ of the set of maximal cliques of $G$ containing $S$
such that $\bigcup_{C\in F}C$ is connected in the induced subgraph
$G[V\setminus S]$. The \emph{bouquet} $B(S)$ of a minimal separator $S$ is the set of
all $S$-flowers.
\end{definition}

\begin{definition}[The $<_{\mathcal{T}}$ order for a rooted clique tree
  $\mathcal{T}$, Section 5 of \cite{wienobst2020polynomial}]
\label{def:ordering-of-flowers}
Let $G$ be a UCCG, $S$ be a minimal vertex separator of $G$, $F_1,F_2\in B(S)$, and $\mathcal{T}=(T,R)$ be a rooted clique tree of $G$. \emph{$F_1 <_{\mathcal{T}} F_2$} if $F_1$ contains a node on the unique path from $R$ to $F_2$. 
\end{definition}

\begin{definition}[The $<_{\alpha}$ order for an AMO
  $\alpha$, Section 5 of \cite{wienobst2020polynomial}]
\label{def:ordering-of-cliques}
Let $G$ be a UCCG, $\mathcal{T}=(T,R)$ be a rooted clique tree of $G$, and
$\alpha$ be an AMO of $G$.  We use $<_{\mathcal{T}}$ to define a partial order
\emph{$<_{\alpha}$} on the set of maximal cliques that represent $\alpha$, as
follows: $C_1 <_{\alpha} C_2$ if, and only if, $(i)$ $C_1\cap C_2 =S$ is a
minimal vertex separator, $(ii)$ $C_1$ and $C_2$ are elements of distinct
$S$-flowers $F_1,F_2\in B(S)$, respectively, and $(iii)$ $F_1 <_T F_2$.
\end{definition}

The following result of \cite{wienobst2020polynomial} establishes the requisite property of the ordering $<_{\alpha}$.
\begin{lemma}[\cite{wienobst2020polynomial}, Claim 1]
\label{lem:C-alpha}
  Let $G$ be a UCCG, $\alpha$ an AMO of $G$, and $\mathcal{T}=(T,R)$ a rooted
  clique tree of $G$.  Consider the order $<_{\alpha}$ defined on the maximal
  cliques $\mathcal{T}$. Then, there always exists a unique least maximal
  clique with respect to $<_{\alpha}$. 
\end{lemma}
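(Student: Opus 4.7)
The plan is to exploit the clique-intersection property of rooted clique trees together with what is known about which cliques can represent an AMO via an LBFS ordering. I would prove existence by finite descent on a strict partial order, and then uniqueness by ruling out two incomparable minimal cliques.

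First, I would verify that $<_{\alpha}$ is a strict partial order on the (nonempty) set $\mathcal{C}_\alpha$ of maximal cliques of $G$ that represent $\alpha$. Nonemptiness follows from the fact, already stated earlier in the excerpt, that every AMO of a UCCG is represented by some LBFS ordering, whose first maximal clique lies in $\mathcal{C}_\alpha$. Irreflexivity of $<_\alpha$ is immediate since $C$ cannot belong to two distinct $S$-flowers simultaneously. Transitivity is the more delicate part: given $C_1 <_\alpha C_2 <_\alpha C_3$ with witnessing separators $S = C_1 \cap C_2$ and $S' = C_2 \cap C_3$ and flowers $F_1 <_T F_2$ and $F_2' <_T F_3$, one needs the clique-intersection property of $\mathcal{T}$ (every clique on the path between two nodes contains their intersection) to argue that $C_1 \cap C_3$ is again a minimal vertex separator and that the flower of $S_{13} := C_1 \cap C_3$ containing $C_1$ strictly precedes (in $<_T$) the flower containing $C_3$. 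Once $<_\alpha$ is a strict partial order on a nonempty finite set, existence of at least one minimal element is automatic.

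For uniqueness, suppose $C^*, C^\dagger$ are both $<_\alpha$-minimal. Consider the unique path $C^* = D_0 - D_1 - \cdots - D_m = C^\dagger$ in $T$. By the clique-intersection property, every $D_i$ contains $C^* \cap C^\dagger$, and every edge separator $D_i \cap D_{i+1}$ is a minimal vertex separator of $G$. I would look at two cases depending on the relative position of $C^*$ and $C^\dagger$ with respect to $R$ in $T$. If one, say $C^*$, lies on the unique $R$-to-$C^\dagger$ path in $T$, then the separator of the edge incident to $C^*$ on that path places $C^*$ and $C^\dagger$ in distinct flowers satisfying the $<_T$ condition, giving $C^* <_\alpha C^\dagger$ and contradicting the minimality of $C^\dagger$. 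Otherwise, the $R$-to-$C^*$ path and $R$-to-$C^\dagger$ path branch at some common ancestor $A$ in $T$; using $A$, one extracts a separator on the $C^*$-to-$C^\dagger$ path for which $C^*$ lies in the flower containing $A$ (hence $<_T$-preceding the flower containing $C^\dagger$), again yielding $C^* <_\alpha C^\dagger$.

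The main obstacle I expect is bridging the gap between the purely tree-theoretic fact about $<_T$ ordering of flowers and the definition of $<_\alpha$, which additionally demands that \emph{both} cliques represent the \emph{same} AMO $\alpha$. The separators $D_i \cap D_{i+1}$ are minimal vertex separators of $G$ as a combinatorial object, but to conclude $C^* <_\alpha C^\dagger$ I need $C^* \cap C^\dagger$ itself (or the right separator on the path) to be a minimal vertex separator, and I need the corresponding flowers to be the actual witnesses. Handling this step cleanly will likely require invoking the LBFS-representation characterization of AMOs to constrain how $C^*$ and $C^\dagger$ are positioned in $\mathcal{T}$: because both represent $\alpha$, any intermediate clique on the path could also be promoted to an LBFS-initial clique for $\alpha$ without changing the orientation of edges in $\alpha$, and this lets one identify a genuine common refinement sitting strictly below both $C^*$ and $C^\dagger$ in $<_\alpha$ whenever they are distinct.
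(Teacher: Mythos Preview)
The paper does not supply its own proof of this lemma: it is simply quoted as Claim~1 of \citet{wienobst2020polynomial}, so there is nothing in the present paper to compare your argument against line by line.  That said, your proposal has a real gap that would have to be fixed before it could stand as a proof.

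The crux is the uniqueness step.  The relation $<_\alpha$ compares two cliques $C_1,C_2$ only when $S=C_1\cap C_2$ is a \emph{minimal vertex separator} of $G$; if it is not, $C_1$ and $C_2$ are simply incomparable.  Your uniqueness argument assumes you can always force comparability of two $<_\alpha$-minimal cliques $C^*,C^\dagger$ using tree geometry, but $C^*\cap C^\dagger$ need not be a minimal vertex separator of $G$ for two arbitrary maximal cliques in a chordal graph (e.g.\ with cliques $\{a,b,c\},\{b,c,d\},\{c,d,e\}$ on a path in the clique tree, the first and third intersect in $\{c\}$, which does not separate the graph).  In Case~1 you invoke ``the separator of the edge incident to $C^*$ on that path'', but that edge separator is generally \emph{not} equal to $C^*\cap C^\dagger$, so it does not witness $C^*<_\alpha C^\dagger$ under the given definition.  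In Case~2 your argument is symmetric in $C^*$ and $C^\dagger$, so it cannot single out one as $<_\alpha$-below the other without further input.

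You correctly flag at the end that the hypothesis ``both $C^*$ and $C^\dagger$ represent $\alpha$'' must be used, but the suggested remedy---that intermediate cliques on the tree path can also be ``promoted'' to LBFS-initial cliques for $\alpha$---is not established and is not obviously true.  What is actually needed (and what \citet{wienobst2020polynomial} exploit) is the structure of $\alpha$ restricted to $C^*\cup C^\dagger$: because both cliques are source cliques for $\alpha$, the vertices of $S=C^*\cap C^\dagger$ must precede all of $(C^*\cup C^\dagger)\setminus S$ in $\alpha$, and one then argues that $S$ really is a minimal separator and that exactly one of the two $S$-flowers meets the $R$-to-the-other path.  Without this AMO-specific step, the purely tree-theoretic case split does not close.
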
 \section{Proofs omitted from Section~\ref{sec:reduction-of-the-problem}}
\label{sec:app-lbfs}

\begin{proof}[Proof of \cref{lem:partition-of-set-of-bkc-AMOs}]
  The claim follows from the fact that for each AMO $\alpha$ of $G$,
  $C_{\alpha}$ was canonically chosen from the set of maximal cliques representing
  $\alpha$.
\end{proof}

\begin{proof}[Proof of \cref{lem:direction of-edges-of-GC-in-alg}]
Proof of \cref{item-1-of-direction of-edges-of-G^C-in-alg}:
For any edge $u-v \in E_G$, if $u\in C$ and $v\notin C$ then for any AMO that is represented by an LBFS ordering that starts with $C$, $u\rightarrow v$ is a directed edge in the AMO (see ``Representation of an AMO'' of \cref{sec:preliminaries}). This further implies for any edge $u-v \in E_G$, if $u\in C$ and $v\notin C$ then $u\rightarrow v$ is a directed edge in $G^C$, as $G^C$ is the union of all the AMOs of $G$ that can be represented by an LBFS ordering that starts with $C$. This proves \cref{item-1-of-direction of-edges-of-G^C-in-alg}.

Proof of \cref{item-2-of-direction of-edges-of-G^C-in-alg}:
We prove \cref{item-2-of-direction of-edges-of-G^C-in-alg} of \cref{lem:direction of-edges-of-GC-in-alg} using induction on the size of $\mathcal{L}$. 

    \textbf{Base Case: $|\mathcal{L}|=0$.} In this case, \cref{item-2-of-direction of-edges-of-G^C-in-alg} of \cref{lem:direction of-edges-of-GC-in-alg} is vacuously true. 

    Let \cref{item-2-of-direction of-edges-of-G^C-in-alg} is true when $|\mathcal{L}|=l\geq 0$. We show that \cref{item-2-of-direction of-edges-of-G^C-in-alg} is true even for $|\mathcal{L}|=l+1$.

    Let at some iteration of \cref{alg:AMO-Union-K}, $\mathcal{L}=\{X_1,X_2,\ldots,X_l,X_{l+1}\}$. From the induction hypothesis, for an edge $u-v \in E_G$, if $u\in X_{i}$ and $v\notin C\cup X_1\cup X_2 \cup \ldots \cup X_i$ then $u\rightarrow v \in G^C$, when $i\leq l$. Let there exists an edge $u-v \in  E_G$ such that $u\in X_{l+1}$ and $v\notin C\cup X_1\cup X_2 \cup \ldots \cup X_{l+1}$. This means there must exist a vertex $x\in C\cup X_1\cup X_2 \cup \ldots \cup X_l$ for which $x-u\in E_G$ and $v-u \notin E_G$, due to which $u$ and $v$ moves to two different sets in $\mathcal{S}$, because initially $u$ and $v$ are in the same set $V\setminus C$. From the induction hypothesis, $x\rightarrow u \in G^C$. This implies all the AMOs that are represented by $C$ have edge $x\rightarrow u$. This further implies all the AMOs that are represented by $C$ have edge $u\rightarrow v$, as  $v\rightarrow u$ creates an immorality $x\rightarrow u \leftarrow v$ (from the definition of AMO, there cannot be a v-structure in an AMO of $G$). This proves \cref{item-2-of-direction of-edges-of-G^C-in-alg}.

Proof of \cref{item-3-of-direction-of-edges-of-G^C-in-alg}:
    Suppose $u,v \in X_{i}$, and $u-v\in E_G$. Let there exists an AMO $\alpha$ that is represented by $C$ and has the edge $u\rightarrow v$. Let $\tau_1$ be an LBFS ordering of $G$ that starts with $C$, and represents $\alpha$. We can construct another LBFS ordering $\tau_2$ that also starts with $C$ such that while picking the vertices of $X_{i}$, we pick $v$ before $u$. The AMO corresponding to this LBFS ordering has the edge $v\rightarrow u$. Since $G^C$ is the union of all the AMOs of $G$ that is represented by  $C$. This implies $G^C$ has the undirected edge $u-v$. This proves \cref{item-3-of-direction-of-edges-of-G^C-in-alg}.

Proof of \cref{item-3b-of-direction-of-edges-of-G^C-in-alg}:
If $u,v \in C$ then $u-v \in E_G$, because $C$ is a clique of $G$. Suppose an AMO $\alpha$ is represented by $C$ and has the  edge $u \rightarrow v$. Then, $\alpha$ must be  represented by an LBFS ordering $\tau_1$ that starts with a permutation $\pi_1(C)$ of $C$ such that $u$ comes before $v$ in $\pi_1(C)$. Let $\pi_2(C)$ be a permutation of $C$ such that $v$ comes before $u$ in $\pi_2(C)$. Let us construct an LBFS ordering $\tau_2$ by replacing $\pi_1(C)$ with $\pi_2(C)$ in $\tau_1$. Let $\beta$ be the AMO represented by the LBFS ordering $\tau_2$. $\beta$ also represented by $C$ and has the edge $v\rightarrow u$. Since $G^C$ is the union of all the AMOs of $G$ that is represented by $C$, this implies $u-v$ is an undirected edge in $G^C$, because $u\rightarrow v\in \alpha$, and $v\rightarrow u\in \beta$, and both are represented by $C$. This proves \cref{item-3b-of-direction-of-edges-of-G^C-in-alg}.

Proof of \cref{item-4-of-direction-of-edges-of-G^C-in-alg}: From \cref{item-1-of-direction of-edges-of-G^C-in-alg,item-2-of-direction of-edges-of-G^C-in-alg}, all the edges with only one endpoint in $X_l$ are directed in $G^C$. And, from \cref{item-3-of-direction-of-edges-of-G^C-in-alg}, all the edges with both of the endpoints in $X_l$ are undirected in $G^C$. This further implies all the undirected connected components of $G[X_i]$ are undirected connected components of $G^C$.  
\end{proof}

 \begin{proof}[Proof of \cref{lem:output-of-Alg-1}]
 At first, we want to recall that
\cref{alg:AMO-Union-K} is background aware version of the modified LBFS algorithm of \cite{wienobst2020polynomial}. As discussed in the main paper, we do not change any line from the LBFS algorithm of \cite{wienobst2020polynomial}; the only modifications we do to their LBFS algorithm are (a) introduction of \texttt{flag}, at \cref{alg:flag-init}, which is used to check the $\mathcal{K}$-consistency of $G^C$, (b) lines \ref{alg:mod-if}-\ref{alg:mod-if-end}, which is used to update the value of \texttt{flag}, and (c) We also output the value of \texttt{flag} with $C_G(C)$. The output of \cref{alg:AMO-Union-K} has 2 components. The first component is the value of \texttt{flag}, and the second value is the value returned by the LBFS algorithm of \cite{wienobst2020polynomial}, which is $C_G(C)$. Thus, the only thing we need to verify is that the first component of our output, i.e., the value of \texttt{flag}, is  1 if $G^C$ is $\mathcal{K}$-consistent, and 0 if $G^C$ is not $\mathcal{K}$-consistent, which is equivalent to show that the value of \texttt{flag} returned by the algorithm is 0 if, and only if, $G^C$ is not $\mathcal{K}$-consistent (since the value of \texttt{flag} is either 0 or 1).

Suppose $G^C$ is not $\mathcal{K}$-consistent. Then, from the definition of $\mathcal{K}$-consistency of $G^C$ (\cref{def:bc-CGC}), there must exist an edge $v\rightarrow u$ in $G^C$ such that $u\rightarrow v \in \mathcal{K}$. From \cref{lem:direction of-edges-of-GC-in-alg}, if $v\rightarrow u \in G^C$ then either (a) $v\in C$ and $u\notin C$, or (b) at some iteration, when $\mathcal{L}=\{X_1, X_2, \ldots , X_l\}$, $v\in X_l$ and $u\notin C\cup X_1\cup X_2 \cup \ldots \cup X_l$. In both of the cases,  $v$ must be picked before $u$ at line-\ref{alg:mod1}, as $v$ is present in a set that comes before the set in which $u$ is present.  At the iteration when $v$ is picked,  the algorithm finds the edge $u\rightarrow v$  that obeys the condition stated in line-\ref{alg:mod-if}. This further sets the value of \texttt{flag} to 0. From the construction of the algorithm, once the value of \texttt{flag} sets to 0, it remains at 0. This shows that if $G^C$ is not $\mathcal{K}$-consistent then the value of \texttt{flag} is 0.

Now suppose the \texttt{flag} value returned by the algorithm is 0. At line-\ref{alg:flag-init}, the value of \texttt{flag} is initialized with 1. If 
\texttt{flag} value returned by the algorithm is 0 then there must exist an edge $u\rightarrow v \in \mathcal{K}$ (found at line-\ref{alg:mod-if}), which causes to change the value of \texttt{flag} to 0 at line-\ref{alg:mod-output-empty}. Since $u\rightarrow v$ obeys the condition state at line-\ref{alg:mod-if}, we can certainly say that the iteration when $v$ is picked at the line-\ref{alg:mod1}, $u$ must be neither in $C$ nor in any set of $\mathcal{L}$. And, $v$ must be either in $C$, if $\mathcal{L}=\emptyset$, or in $X_l$, if $\mathcal{L}=\{X_1, X_2, \ldots , X_l \}$ such that $l\geq 1$. From \cref{lem:direction of-edges-of-GC-in-alg}, $v\rightarrow u$ must be a directed edge in $G^C$. This makes $G^C$ inconsistent with $\mathcal{K}$, as $u\rightarrow v \in \mathcal{K}$ (\cref{def:bc-CGC}). This shows if the value of \texttt{flag} returned by the algorithm is 0 then  $G^C$ is not $\mathcal{K}$-consistent. This completes the proof.
\end{proof}

\begin{proof}[Proof of \cref{lem:uniqueness-of-permutation-of-clique}]
If $\alpha$ is a member of AMO$(G,\pi_1(C))$ and AMO$(G,\pi_2(C))$ both, for two different permutations $\pi_1(C)$ and $\pi_2(C)$, then there must exist two vertices $u,v\in C$ such that $u$ comes before $v$ in $\pi_1(C)$, and $v$ comes before $u$ in $\pi_2(C)$. Since $u$ and $v$ are members of the same clique, there exists an edge $u-v \in E_G$. And, since the AMO is a member of both AMO$(G,\pi_1(C))$, and AMO$(G,\pi_2(C))$ it should have both $u\rightarrow v$ and $v\rightarrow u$, which is not possible. This implies there exists a unique permutation $\pi(C)$ of $C$ that represents $\alpha$.
\end{proof}

\begin{proof}[Proof of \cref{lem:identification-of-C-alpha}]
Claims 1 and 2 of \cite{wienobst2020polynomial} translate into  the ``only if'' part of \cref{lem:identification-of-C-alpha}, while Claim 3 of \cite{wienobst2020polynomial} translates into the ``if'' part of \cref{lem:identification-of-C-alpha}.  We give details of the translation below.
{\newcommand{\pre}{<_{\alpha}}
Given an AMO $\alpha$, \cite{wienobst2020polynomial} define a partial order $\prec_\alpha$ (as described above) on the set of maximal cliques that represent $\alpha$.  Claim 1 of \cite{wienobst2020polynomial} shows that there exists a unique maximal clique representing $\alpha$ (which we name as $C_{\alpha}$) such that for any maximal clique $C \neq C_{\alpha}$ that represents $\alpha$, $C_{\alpha} \pre C$.  Claim 2 of \cite{wienobst2020polynomial} then translates immediately into the ``only if'' part of \cref{lem:identification-of-C-alpha}.

For the ``if'' part, we consider any maximal clique $C$ representing $\alpha$ and satisfying both the conditions of \cref{lem:identification-of-C-alpha}.  Suppose, if possible, that $C \neq C_{\alpha}$.  Then, from the above discussion, $C_\alpha \pre C$. The proof of Claim 3 (and the definition of $FP(C, \mathcal{T})$) then shows that if $\pi(C)$ is the permutation of $C$ representing $\alpha$, then there is a prefix $S$ of $\pi(C)$ of the form $C_i \cap C_j$ for some two adjacent cliques on the path from $R$ to $C$ in $T$.  This leads to a contradiction with item 2 of \cref{lem:identification-of-C-alpha}, and hense shows that $C \neq C_\alpha$ is not possible.
}
\end{proof}

For the proof of \cref{lem:formular-for-KT-consistent-permutaion}, we need the
 following observation of \citet{wienobst2020polynomial}.

\begin{observation}[{\citet{wienobst2020polynomial}, Proposition 1}]
  \label{lem:G-C-and-G-pi-C-is-same}
  For each permutation $\pi(C)$ of a maximal clique $C$ of $G$, all edges of $G^{\pi(C)}$ coincide with the edges of $G^{C}$, excluding the edges connecting the vertices in $C$.  In particular, $\mathcal{C}_G(\pi(C))=\mathcal{C}_G(C)$.
\end{observation}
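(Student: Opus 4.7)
The plan is to verify, edge by edge, that every edge of $G$ with at most one endpoint in $C$ carries the same status (same direction, or both undirected) in $G^C$ and $G^{\pi(C)}$. The second claim $\mathcal{C}_G(\pi(C)) = \mathcal{C}_G(C)$ will then follow immediately, since both are the sets of undirected connected components of the corresponding induced subgraphs on $V_G \setminus C$, and those subgraphs will coincide.

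For an edge $u - v$ with $u \in C$ and $v \notin C$, every LBFS ordering that begins with any permutation of $C$ places $u$ before $v$, so every AMO represented by $C$ (and in particular every AMO represented by $\pi(C)$) has $u \to v$; such edges therefore match in both unions. For edges with both endpoints outside $C$, the inclusion $\text{AMO}(G, \pi(C)) \subseteq \text{AMO}(G, C)$ immediately gives one direction: any edge directed in $G^C$ is directed the same way in $G^{\pi(C)}$, and any edge undirected in $G^{\pi(C)}$ (witnessed by two opposing AMOs in the smaller set) is undirected in $G^C$.

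The crux is to show that an edge $u - v$ with $u, v \in V_G \setminus C$ that is undirected in $G^C$ remains undirected in $G^{\pi(C)}$. I would prove this by a reorientation argument: given any $\alpha \in \text{AMO}(G, C)$, define $\alpha'$ by reorienting the edges inside the clique $C$ to match $\pi$, while leaving every other edge exactly as in $\alpha$. Applied to the two witnesses in $\text{AMO}(G, C)$ that orient $u - v$ in opposite directions, this yields two corresponding witnesses in $\text{AMO}(G, \pi(C))$. Verifying that $\alpha'$ is a valid AMO is short: acyclicity holds because every edge between $C$ and $V_G \setminus C$ is directed from $C$ outward (so no cycle can re-enter $C$), and the tournament on $C$ imposed by $\pi$ is itself acyclic; v-structure preservation follows from a case analysis on the position of a putative collider $b$, using the outward orientation of $C$-to-outside edges together with the fact that pairs in $C$ are adjacent, to rule out any collider in $\alpha'$ not already present in $\alpha$.

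The main obstacle is the final step: showing that $\alpha'$ is in fact represented by an LBFS ordering that begins with the specific permutation $\pi(C)$. My plan is to argue that once every vertex of $C$ has been picked by LBFS, the resulting state of the refinement sequence on $V_G \setminus C$ is invariant, in the sense required for the subsequent LBFS choices, under the order in which the vertices of $C$ were picked; consequently, the subsequent LBFS choices that produced $\alpha$ from its own $C$-prefix can be replicated from the $\pi(C)$-prefix to produce $\alpha'$. Combined with \cref{lem:uniqueness-of-permutation-of-clique}, which pins down the permutation uniquely from the AMO's orientation on $C$, this gives an LBFS ordering starting with $\pi(C)$ that represents $\alpha'$. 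The required invariance is essentially the content of Proposition~1 of \citet{wienobst2020polynomial}, whose detailed LBFS analysis I would follow for this step.
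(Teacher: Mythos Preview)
The paper does not give its own proof of this observation: it is quoted verbatim as Proposition~1 of \citet{wienobst2020polynomial} and used as a black box. So there is no in-paper argument to compare your proposal against.

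On its own merits, your plan is sound through the reorientation step: $\alpha'$ is indeed an AMO, and your acyclicity and $v$-structure checks are correct. The difficulty is exactly where you place it, but your resolution of it is problematic. You write that the LBFS invariance you need ``is essentially the content of Proposition~1 of \citet{wienobst2020polynomial}''---yet Proposition~1 \emph{is} the statement you are proving, so appealing to it (or to its proof) at the crux makes the argument circular, or at best collapses to ``reproduce the original proof for the hard part.'' Moreover, the invariance as you phrase it is not literally true: after all of $C$ has been processed, the \emph{partition} of $V_G\setminus C$ into blocks (the fibres of $v\mapsto N(v)\cap C$) is permutation-independent, but the \emph{order} of those blocks in $\mathcal S$ generally is not, so the tail of an LBFS run for one permutation of $C$ need not be a legal continuation for another.

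What makes the argument go through is chordality. For adjacent $u,v\in V_G\setminus C$, the sets $N(u)\cap C$ and $N(v)\cap C$ must be comparable under inclusion (otherwise picking $a\in N(u)\setminus N(v)$ and $b\in N(v)\setminus N(u)$ yields the chordless $4$-cycle $a\text{--}u\text{--}v\text{--}b\text{--}a$). Hence if $u$ and $v$ lie in different post-$C$ blocks, their relative order in $\mathcal S$ is forced by this containment and is the same for every permutation of $C$; and if they lie in the same block, both orientations of $u\text{--}v$ are realizable from any starting permutation by the ``pick $v$ before $u$'' manoeuvre used in the proof of \cref{lem:direction of-edges-of-GC-in-alg}. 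A cleaner route that avoids the reorientation construction altogether is simply to run \cref{alg:AMO-Union-K} choosing to process $C$ in the order $\pi$, and observe that the proofs of items~\ref{item-1-of-direction of-edges-of-G^C-in-alg}--\ref{item-3-of-direction-of-edges-of-G^C-in-alg} of \cref{lem:direction of-edges-of-GC-in-alg} then establish the same directed/undirected status for $G^{\pi(C)}$ as for $G^C$ on every edge outside $C$.
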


\begin{proof}[Proof of \cref{lem:formular-for-KT-consistent-permutaion}]
    Let $G^C$ be $\mathcal{K}$-consistent, and $\pi(C)$ is a $(\mathcal{K}, \mathcal{T})$-consistent permutation of $C$. We show that the number of $\mathcal{K}$-consistent AMOs of $G$ that are canonically represented by $\pi(C)$, i.e., $|\{\alpha: \alpha \in \text{AMO}(G, \pi(C), \mathcal{K}) \text{ and } C = C_{\alpha} \}|$ equals $\Pi_{H\in \mathcal{C}_G(C)}{\#\text{AMO}(H, \mathcal{K}[H])}$. To prove this, we first show that if $\alpha$ is a $\mathcal{K}$-consistent AMO of $G$ that is canonically represented by $\pi(C)$ then  for any connected component $H$ of $\mathcal{C}_G(C)$, $\alpha[H]$ is $\mathcal{K}[H]$-consistent AMO of $H$. We also show that if we have a $\mathcal{K}[H]$-consistent AMO for each connected component $H$ of $\mathcal{C}_G(C)$ then we can construct a $\mathcal{K}$-consistent AMO of $G$ by combining them. This proves \cref{lem:formular-for-KT-consistent-permutaion}.

    We first show the first part.
    Let $\alpha$ be a $\mathcal{K}$-consistent AMO of $G$ that is canonically represented by $\pi(C)$. Then,  for any connected component $H$ of $\mathcal{C}_G(C)$, $\alpha[H]$ is $\mathcal{K}[H]$-consistent AMO of $H$. Otherwise, if for any connected component $H$ of $\mathcal{C}_G(C)$, $\alpha[H]$ is not $\mathcal{K}[H]$-consistent then there must exist an edge $u\rightarrow v \in \mathcal{K}[H]$ such that $v\rightarrow u \in \alpha[H]$. But, this implies $\alpha$ is not $\mathcal{K}$-consistent either, as if $u\rightarrow v \in \mathcal{K}[H]$ then $u\rightarrow v \in \mathcal{K}$, and if  $v \rightarrow u \in \alpha[H]$ then $v\rightarrow u \in \alpha$.

    We now show the other part. Let $H_1, H_2, \ldots, H_l$ are the undirected connected components of $\mathcal{C}_G(C)$, in the same order as we get as the output of \cref{alg:AMO-Union-K} for input $G, C$, and $\mathcal{K}$. For each $H_i \in \mathcal{C}_G(C)$, let we have a $\mathcal{K}[H_i]$-consistent AMO $D_i$, and ${\tau}_i$ be an LBFS ordering of $D_i$. Then, $\tau = \{\pi(C), {\tau}_1, {\tau}_2, \ldots , {\tau}_l\}$ is a $\mathcal{K}$-consistent LBFS ordering of $G$ starting with $\pi(C)$ that we can get from \cref{alg:AMO-Union-K}. The DAG $\alpha$ represented by $\tau$ is an AMO of $G$ represented by $\pi(C)$. And, since $\pi(C)$ is $(\mathcal{K}, \mathcal{T})$-consistent, from \cref{lem:identification-of-C-alpha}, $C = C_{\alpha}$. This implies $\alpha$ is a $\mathcal{K}$-consistent AMO of $G$ and is canonically represented by $\pi(C)$. This gives us a one-to-one mapping between the set of $\mathcal{K}$-consistent AMOs of $G$ that is canonically represented by $\pi(C)$, and the $\mathcal{K}[H]$-consistent AMOs of the connected components $H$ of $\mathcal{C}_G(C)$, which further implies the equality between the size of $\mathcal{K}$-consistent AMOs of $G$ that is canonically represented by $\pi(C)$ and $\Pi_{H\in \mathcal{C}_G(C)}{\#\text{AMO}(H, \mathcal{K}[H])}$.  This completes our proof.
\end{proof}

\begin{proof}[Proof of \cref{lem:final-counting-algorithm}]
    From \cref{lem:partition-of-set-of-bkc-AMOs}, $\#\text{AMO}(G,\mathcal{K}) = \sum_{C\in \Pi(G)}{|\{\alpha: \alpha \in \text{AMO}(G,\mathcal{K}) \text{ and } C=C_{\alpha}\}|}$. In \cref{sec:reduction-of-the-problem}, after defining $\mathcal{K}$-consistency of $G^C$ (\cref{def:bc-CGC}), we show that for any maximal clique $C$ of $G$, if $G^C$ is not $\mathcal{K}$-consistent then ${|\{\alpha: \alpha \in \text{AMO}(G,\mathcal{K}) \text{ and } C=C_{\alpha}\}|} = 0$. From \cref{lem:formular-for-KT-consistent-permutaion}, for any maximal clique $C$ of $G$, if $G^C$ is $\mathcal{K}$-consistent then for any $(\mathcal{K}, \mathcal{T})$-consistent permutation $\pi(C)$ of $C$, $|\{\alpha: \alpha \in \text{AMO}(G,\pi(C),\mathcal{K}) \text{ and }
C=C_{\alpha}\}|=\prod_{H\in \mathcal{C}_G(C)}{\#AMO(H,\mathcal{K}[H])}$. And, from \cref{lem:formular-for-non-KT-consistent-permutaion}, for any  permutation $\pi(C)$ of a maximal clique $C$ of $G$, if $\pi(C)$ is not a $(\mathcal{K}, \mathcal{T})$-consistent permutation of $C$ then $|\{\alpha: \alpha \in \text{AMO}(G,\pi(C),\mathcal{K}) \text{ and }
C=C_{\alpha}\}|=0$. This further implies for any maximal clique $C$ of $G$, if $G^C$ is $\mathcal{K}$-consistent then $|\{\alpha: \alpha \in \text{AMO}(G,\mathcal{K}) \text{ and } C=C_{\alpha}\}| = \Phi(C,FP(C,\mathcal{T}),\mathcal{K}[C]) \times
  {\prod_{H\in \mathcal{C}_G(C)}{\#AMO(H,\mathcal{K}[H])}}$, where  $\Phi(C,FP(C,\mathcal{T}),\mathcal{K}[C])$ is the number of $(\mathcal{K}, \mathcal{T})$-consistent permutations of a maximal clique $C$ of $G$ (from \cref{def:background-consistent-pi-C,def:phi-S-R-K}). All these things further imply 
$\#\text{AMO}(G,\mathcal{K}) = \sum_{C: G^C \text{ is } \mathcal{K}\text{-consistent}}
  {\Phi(C,FP(C,\mathcal{T}),\mathcal{K}[C]) \times
  {\prod_{H\in \mathcal{C}_G(C)}{\#AMO(H,\mathcal{K}[H])}}}$.
This proves the correctness of \cref{lem:final-counting-algorithm}.
\end{proof}

\begin{proof}[Proof of \cref{lem:phi-computaion}]
  Proof of \cref{item-3-of-phi-computaion}: If $R=\emptyset$, then a permutation
  $\pi(S)$ of $S$ is $\mathcal{K}$-consistent if ordering of $V_{\mathcal{K}}$
  in $\pi(S)$ is $\mathcal{K}$-consistent.
  $\Psi(V_{\mathcal{K}}, \mathcal{K})$ gives the number of
  $\mathcal{K}$-consistent permutations of $V_{\mathcal{K}}$. Number of
  permutations of $S$ that has the same ordering of $V_{\mathcal{K}}$ in it is
  $\frac{|S|!}{|V_{\mathcal{K}}|!}$. This completes the proof of
  \cref{item-3-of-phi-computaion}.

  Proof of \cref{item-1-of-phi-computaion}: If there exists an edge
  $u \rightarrow v\in \mathcal{K}$ such that $u\in S\setminus R_l$ and $v\in R_l$ then no
  $\mathcal{K}$-consistent permutation of $S$ exists that starts with $R_l$.

  Proof of \cref{item-2-of-phi-computaion}: If there does not exist an edge
  $u \rightarrow v\in \mathcal{K}$ such that $u\in S\setminus R_l$ and $v\in R_l$, then
  one way to compute $\Phi(S,R,\mathcal{K})$ is to first compute number of
  $\mathcal{K}$-consistent permutations of $S$ that do not start with
  $R_1,R_2,\ldots, R_{l-1}$, i.e.,
  $\Phi(S,R-\{R_l\},\mathcal{K})$. But, $\Phi(S,R-\{R_l\},\mathcal{K})$ also counts the $\mathcal{K}$-consistent permutations of $S$ that starts with $R_l$
  but not with any $R_i$, for $1\leq i<l$. We subtract such permutations from
  $\Phi(S,R-R_l,\mathcal{K})$. To construct a $\mathcal{K}$-consistent
  permutation of $S$ that starts with $R_l$ but does not start with any $R_i$, $1\leq i<l$, we have to first construct a permutation of $R_l$ that does
  not start with any $R_i$, $1\leq i<l$, and then we have to construct a
  $\mathcal{K}$-consistent permutation of the remaining vertices of $S$. This
  implies the number of $\mathcal{K}$-consistent permutations of $S$ that start
  with $R_l$ and not with any $R_i$, $1\leq i<l$, is
  $\Phi(R_l, R-\{R_l\},\mathcal{K})\times \Phi(S\setminus
  R_l,\emptyset,\mathcal{K})$.
\end{proof}

\begin{proof}[Proof of \cref{lem:alg-valid-perm-correctness}]
    If $R=\emptyset$ then from \cref{item-3-of-phi-computaion} of \cref{lem:phi-computaion}, $\Phi(S,R,\mathcal{K})= \frac{|S|!}{|V_{\mathcal{K}}|!} \times \Psi(V_{\mathcal{K}}, \mathcal{K})$. Line \ref{alg-vp:return-K-empty} of \cref{alg:valid-permutaion-of-clique} returns the same.

    If $R=\{R_1,R_2,\ldots, R_l\} \neq \emptyset$, and there exist an edge $u\rightarrow v \in \mathcal{K}$ such that $u\in S\setminus R_l$ and $v\in R_l$, then from \cref{item-1-of-phi-computaion} of \cref{lem:phi-computaion}, $\Phi(S,R,\mathcal{K})=\Phi(S,R-R_l,\mathcal{K})$. Line \ref{alg-vp:return2} of \cref{alg:valid-permutaion-of-clique} returns the same.

    If $R=\{R_1,R_2,\ldots, R_l\} \neq \emptyset$, and there does not exist an edge $u\rightarrow v \in \mathcal{K}$ such that $u\in S\setminus R_l$ and $v\in R_l$, then from \cref{item-2-of-phi-computaion} of \cref{lem:phi-computaion}, $\Phi(S,R,\mathcal{K})=\Phi(S,R-R_l,\mathcal{K})-{\Phi(R_l,R-\{R_l\},\mathcal{K}[R_l])\times \Phi(S\setminus R_l,\emptyset,\mathcal{K}[S\setminus R_l])}$. The line \ref{alg-vp:final-return} of \cref{alg:valid-permutaion-of-clique} returns the same. 
\end{proof} 

\begin{proof}[Proof of \cref{lem:main-algorithm-correct}]
  At line \ref{alg-count:clique-tree-construction}, \cref{alg:Count-AMO}
  constructs a rooted clique tree of $G$. Lines
  \ref{alg:count:G-is-a-clique}-\ref{alg-count:if-G-is-a-clique-end} deals with
  a special case when $G$ is a clique. If $R=V$ (at line
  \ref{alg:count:G-is-a-clique}) then $G$ is a clique. In this case, the number
  of $\mathcal{K}$-consistent AMOs of $G$ equals the number of
  $\mathcal{K}$-consistent permutation of $V$. Lines
  \ref{alg:count:G-is-a-clique}-\ref{alg-count:if-G-is-a-clique-end} do the
  same. For the general case (when $G$ is not a clique), we implement \cref{lem:final-counting-algorithm}. We create a queue $Q$
  at line \ref{alg-count:Queue-construction}, that stores maximal cliques of
  $G$. For each maximal clique  $C$ of $G$, we run lines \ref{alg-count:while-queue-is-not-empty}-\ref{alg-count:while-end}. 
  At line \ref{alg-count:LBFS}, we call
  \cref{alg:AMO-Union-K} (our LBFS-algorithm) for input $G,C$ and
  $\mathcal{K}$. If the first component of the output of \cref{alg:AMO-Union-K}
  is 0 then $G^C$ is not $\mathcal{K}$-consistent (from
  \cref{lem:output-of-Alg-1}). This further implies
  $|\{\alpha: \alpha\in \text{AMO}(G,\pi(C),\mathcal{K}) \text{ and }
  C=C_{\alpha}\}|=0$. This is why we skip lines
  \ref{alg-count:prod-init}-\ref{alg-count:if-end} if the first component of
  LBFS$(G,C,\mathcal{K})$ (\cref{alg:AMO-Union-K}) is 0. If the first component
  LBFS$(G,C,\mathcal{K})$ is 1 then $G^C$ is $\mathcal{K}$-consistent. In this
  case, at lines \ref{alg-count:foreach-start}-\ref{alg-count:foreach-end}, we
  compute $\prod_{H\in C_G(C)}{\#AMO(H,\mathcal{K}[H])}$, by recursively calling
  \cref{alg:Count-AMO}. At line \ref{alg-count:sum-update}, we compute
  $|\{\alpha: \alpha\in \text{AMO}(G,\mathcal{K}) \text{ and } C=C_{\alpha}\}|$
  using the discussion following \cref{lem:formular-for-KT-consistent-permutaion}. At the end of
  line-\ref{alg-count:while-end},  the variable \texttt{sum} has the value $\sum_{C : \text{$G^C$
      is $\mathcal{K}$-consistent}} {\Phi(C,FP(C,\mathcal{T}),\mathcal{K}[C])
    \times} {\prod_{H\in C_G(C)}{\#AMO(H,\mathcal{K}[H])}}$, i.e., \texttt{sum}
  equals $\#AMO(G,\mathcal{K})$ (from \cref{lem:final-counting-algorithm}). $\mathtt{memo}[G]$ stores the number of $\mathcal{K}$-consistent AMOs of $G$, once it is computed.
  This completes the proof.

\end{proof}

 \section{Proofs omitted from Section~\ref{sec:time-complexity}}
\label{appendix:time-complexity}

\begin{proof}[Proof of \cref{thm:LBFS-alg}]
  The ``While'' loop on lines \ref{alg:loop}-\ref{alg:while-end} runs at most $|V_G|$ times. Using two additional arrays (one for checking whether $v$ is in $C$ or not, and another for checking whether $v$ is in $\mathcal{L}$ or not) we can run lines \ref{alg:if-1-start}-\ref{alg:if-1-end} in $O(1)$ time. Checking the existence of edge $u\rightarrow v \in \mathcal{K}$ at line \ref{alg:mod-if} takes  $O(|\mathcal{K}|)$ time cumulatively, for all $v \in V_G$. Finding neighbors of $v$ at line \ref{alg:neighbour-v} takes $O(E_G)$ time cumulatively, for all $v\in V_G$. Partitioning each set at line \ref{alg:refine}, also takes $O(E_G)$ time cumulatively. 
Since the size of $\mathcal{K}$ can be at most $E_G$, this implies the overall time complexity of \cref{alg:AMO-Union-K} is $O(|V_G|+|E_G|)$ (with the same technique that is used to implement the standard LBFS of \cite{rose1976algorithmic}).
\end{proof}

\begin{proof}[Proof of \cref{prop:number-of-times-alg-2-called}]
\Cref{alg:Count-AMO} calls itself at line \ref{alg-count:prod-multiplication}, for each $H\in \mathcal{L}$, when $\texttt{flag} =1$. The value of \texttt{flag} is always 1 when $\mathcal{K} = \emptyset$, and $\mathcal{L} = \mathcal{C}_G(C)$ (from \cref{alg:AMO-Union-K}) does not depend on $\mathcal{K}$. This further implies that the \texttt{count} function has the maximum number of distinct recursive calls to itself when $\mathcal{K}=\emptyset$. But, for $\mathcal{K}=\emptyset$, \cref{alg:Count-AMO} is the same as the Clique-Picking algorithm of \cite{wienobst2020polynomial}, who showed that the number of these distinct recursive calls is at most $2|\Pi(G)|-1$ times. This completes the proof.
\end{proof}

\begin{proof}[Proof of \cref{lem:time-complexity-of-Phi}]
Let us denote $\mathcal{R}_i=\{R_1,R_2, \ldots , R_i\}$. For the computation of $\Phi(S, \mathcal{R}_l, \mathcal{K})$, from \cref{item-1-of-phi-computaion,item-2-of-phi-computaion} of \cref{lem:phi-computaion}, we need to compute $\Phi(S,\emptyset, \mathcal{K})$, $\Phi(S\setminus R_1,\emptyset, \mathcal{K}[S\setminus R_1])$, $\Phi(S\setminus R_2,\emptyset,\mathcal{K}[S\setminus R_2]), \ldots, \Phi(S\setminus R_l,\emptyset, \mathcal{K}[S\setminus R_l])$, $\Phi(S,\mathcal{R}_1, \mathcal{K})$, $\Phi(S,\mathcal{R}_2, \mathcal{K}), \ldots, \Phi(S,\mathcal{R}_{l-1}, \mathcal{K})$, and for each $1\leq i \leq l$,  $\Phi(R_i,\emptyset, \mathcal{K}[R_i])$, $\Phi(R_i\setminus R_1,\emptyset,\mathcal{K}[R_i\setminus R_1])$, $\ldots$ , $\Phi(R_i\setminus R_{i-1},\emptyset,\mathcal{K}[R_i\setminus R_{i-1}])$, $\Phi(R_i,\mathcal{R}_1,\mathcal{K}[R_i])$, $\Phi(R_i,\mathcal{R}_2,\mathcal{K}[R_i])$, $\ldots$ , $\Phi(R_i,\mathcal{R}_{i-1},\mathcal{K}[R_i])$.

Computation of each $\Phi(X,\phi,\mathcal{K}' = \mathcal{K}[X])$ takes $O(k!\cdot k^2)$ arithmetic operations. To see this, note that \cref{item-3-of-phi-computaion} of \cref{lem:phi-computaion} gives $\Phi(X,\phi,\mathcal{K}') = \frac{|X|!}{|V_{\mathcal{K}'}|} \times \Psi(V_{\mathcal{K}'}, \mathcal{K}')$. From our assumption, the factorials are already pre-computed. On the other hand, to compute $\Psi(V_{\mathcal{K}'},\mathcal{K}')$ we can check one-by-one the $\mathcal{K}'$-consistency of each permutation of $V_{\mathcal{K}'}$.  The number of permutations of $V_{\mathcal{K}'}$ is $O(k!)$, since $|V_{\mathcal{K}'}|\leq k$ (max-clique knowledge), while the verification of whether a permutation of $V_{\mathcal{K}'}$ is $\mathcal{K}'$-consistent or not takes $O(|\mathcal{K}'|)$ time. Since $V_{\mathcal{K}'} \leq k$, we have $|\mathcal{K}'| \leq k^2$.  Thus, the computation of $\Psi(V_{\mathcal{K}'},\mathcal{K}')$ takes $O(k^2\cdot k!)$ arithmetic operations.

Since the number of required computations of the type $\Phi(X,\phi,\mathcal{K}' = \mathcal{K}[X])$ (as already listed above) is $O(l^2)$, their total cost is $O(k!\cdot k^2 \cdot l^2)$.  After all the values listed above of the type $\Phi(X,\phi,\mathcal{K}' = \mathcal{K}[X])$ have been computed, we can implement a dynamic programming procedure using \cref{item-1-of-phi-computaion,item-2-of-phi-computaion} of \cref{lem:phi-computaion} (or, alternatively, a memoized version of \cref{alg:valid-permutaion-of-clique}) to compute all the required values of type $\Phi(X, Y\neq \emptyset, \mathcal{K}[X])$ (as listed above) using $O(1)$ arithmetic operations each.  Since the number of required computations of the type $\Phi(X, Y\neq \emptyset, \mathcal{K}[X])$ (again, as already listed above) is also $O(l^2)$, it follows that the total cost of these computations is $O(l^2)$.  Adding the computational costs for both types of computations, we see that the total cost is $O(k!\cdot k^2 \cdot l^2)$ arithmetic operations.

As the value of $l$ can be at most $|\Pi(G)|$ (the number of nodes in the clique tree $\mathcal{T}$), the overall number of arithmetic operations we need to compute $\Phi(S, \mathcal{R}_l, \mathcal{K})$ is
$O(k! \cdot k^2\cdot |\Pi(G)|^2)$.

\end{proof}

\begin{proof}[Proof of \cref{thm:ACMO-Counting-BK}]
\cref{alg:Count-AMO} is analogous to Clique-Picking Algorithm of \cite{wienobst2020polynomial}. We can also say that \cref{alg:Count-AMO} is the background knowledge version of the  Clique-Picking Algorithm of \cite{wienobst2020polynomial}. At line \ref{alg-count:clique-tree-construction}, we construct a clique tree of $G$, which takes $O(|V_G|+|E_G|)$ time.
Computing $\Phi$ function at line \ref{alg-count:memo-update}, for a clique $C$, takes $O(|\Pi(G)|^2\cdot k^2 \cdot k!)$ time (from \cref{lem:time-complexity-of-Phi}). While loop (at lines \ref{alg-count:while-queue-is-not-empty}-\ref{alg-count:while-end}) runs for $O(|\Pi(G)|)$ times, as the number of maximal cliques of $G$ is $|\Pi(G)|$. Running LBFS-algorithm (\cref{alg:AMO-Union-K}) at line \ref{alg-count:LBFS} takes $O(|V|+|E|)$ time (from \cref{thm:LBFS-alg}). Computation of function $\Phi$ at line \ref{alg-count:sum-update} takes $O({k! \cdot k^2\cdot |\Pi(G)|}^2)$ time (from \cref{lem:time-complexity-of-Phi}).

Note that it was assumed in \cref{alg:valid-permutaion-of-clique} that factorials of integers from $1$ to $\abs{V_G}$ are pre-computed.  The pre-computation of this table can be done using $O(|V_G|)$ arithmetic operations.  From \cref{prop:number-of-times-alg-2-called}, the number of distinct calls to \texttt{count} function of \cref{alg:Count-AMO} is $O(|\Pi(G)|)$.  Together, the above calculations show that the running time of \cref{alg:Count-AMO} is at most $O({k! \cdot k^2\cdot |\Pi(G)|}^4)$ or $O(k! \cdot k^2\cdot |V_G|^4)$, as for a chordal graph $G$, $|\Pi(G)|$ can be at most $|V_G|$. 
\end{proof}
 \section{Detailed explanation of Experimental Results}
\label{sec:appendix-detailed-explanation-of-experimental-results}
\textbf{Construction of chordal graphs with $n$ vertices}:  We first construct a connected \emph{Erd\H{o}s-R\'{e}nyi graph} $G$ with $n$ vertices such that each of its edges is picked with probability $p$, where $p$ is a random value in [0.1, 0.3). We give a unique rank to each node of the graph. We then process the vertices in decreasing order of rank. For each vertex $x$ of the graph, if $u$ and $v$ are two neighbors of $x$ such that $u$ and $v$ are not connected, and $u$ and $v$ both have lesser rank than the rank of $x$,  then we put an edge between $u$ and $v$ in $G$. This makes $G$ an undirected chordal graph, because, by construction, the decreasing order of ranks is a perfect elimination ordering. After this, we use rejection sampling to get an undirected connected chordal graph, i.e., if $G$ is not a connected graph then we reject $G$, and repeat the above process until we get an undirected connected chordal graph $G$.

\textbf{Construction of background knowledge edges:} For each chordal graph constructed above, and for each
  $k \in \{5,6,\ldots, 13\}$, we construct a set of background knowledge edges
  such that (i) for any maximal clique $C$ of size greater than or equal to $k$, the number of vertices of the clique that are part of an edge of the background knowledge with both endpoint in $C$ is $k$; and (ii) for any maximal clique of size less than $k$, the number of vertices of the clique that are part of an edge of background knowledge equals to the size of the clique.  To do this,  we pick one by one  each maximal clique of $G$, and select the edges of the clique such that at most $k$ vertices are involved in the set of selected edges with both of its endpoints in that clique. For this, we construct a rooted clique tree $\mathcal{T}=(T, R)$ of $G$.  We start with $R$ and then do a depth-first search (DFS) on $T$ to cover all the maximal cliques of $G$. At the iteration when we are at a maximal clique $C$, we first compute the set of vertices of the picked edges having both of their endpoints in $C$. If the size of the set is $k$ (or $|C|$, if $|C| < k$) we move to the next maximal clique. Otherwise, we one by one pick edges of $C$ and add to the selected set of edges until the set of vertices of the picked edges having both of their endpoints in $C$  reaches $k$ (or $|C|$, if $|C| < k$). 
We won't get into a situation where the set of vertices of the picked edges having both of their endpoints in $C$ exceeds $k$ (due to the clique intersection property of the clique-tree).

We write a python program for the construction of chordal graphs, background knowledge edges, and implementation of  \cref{alg:Count-AMO}.  The experiments use the open source \texttt{networkx} (\cite{SciPyProceedings_11}) package.

\paragraph{Effect of changing the size of the background knowledge while keeping
  $k$ fixed} Here we describe the construction of the background knowledge edge
sets $\mathcal{K}_1$ and $\mathcal{K}_2$ used in \cref{tab:1}.  We first
construct $\mathcal{K}_1$ as above.  We then construct $\mathcal{K}_2$ by adding
a few edges to $\mathcal{K}_1$ in such a way that the value of $k$ does not
change.  We do this experiment for different $n$ and $k$, where $n$ is the
number of nodes of the chordal graph, and $k$ is the maximum number of vertices
of any clique of the chordal graph that is part
of a background knowledge edge that lies completely inside that clique, as
defined earlier also.  \Cref{tab:2} gives a more detailed version of
\cref{tab:1} (given in the main section of the paper) with more dataset points.

\begin{table}[t]
  \centering
  \begin{tabular}{ |c|c|c|c|c|c| } 
    \hline
    $n$ & $k$ &$|\mathcal{K}_1|$ & $|\mathcal{K}_2|$ & $T_1$ & $T_2$ \\ 
    \hline
    600 & 6 & 58 & 73 & 95 & 92 \\  
    \hline
    600 & 6 & 52 & 65 & 192 & 190 \\  
    \hline
    700 & 7 & 47 & 67 & 144 & 144 \\  
    \hline
    700 & 7 & 49 & 74 & 132 & 128 \\  
    \hline
    800 & 8 & 56 & 83 & 199 & 195 \\  
    \hline
    800 & 8 & 55 & 85 & 200 & 195\\  
    \hline
    900 & 9 & 46 & 89 & 258 & 256 \\  
    \hline
    900 & 9 & 39 & 68 & 257 & 252 \\  
    \hline
    1000 & 10 & 65 & 116 & 370 & 353 \\  
    \hline
    1000 & 10 & 75 & 137 & 346 & 338 \\ 
    \hline
    1100 & 11 & 55 & 121 & 467 & 455 \\ 
    \hline
    1100 & 11 & 51 & 104 & 460 & 453 \\
    \hline
  \end{tabular}
  \caption{Exploring runtime dependence on the number of background knowledge edges: detailed table}
  \label{tab:2}
\end{table}

We can also see from the table that the running time decreases slightly by
increasing the size of background knowledge edges. This is because as the number
of background knowledge increases the number of background consistent
permutations of any clique decreases.

\end{document}